\def\withnotes{1}
\newcommand{\bPr}[1]{{\mathrm{Pr}}\left(#1\right)}
\newcommand{\cE}{{\mathcal E}}
\newcommand{\mN}{{\mathbbm N}}
\newcommand{\mR}{{\mathbbm R}}
\newcommand{\cP}{{\mathcal P}}
\newcommand{\cX}{{\mathcal X}}
\newcommand{\bx}{\mathbf{x}}
\newcommand{\by}{\mathbf{y}}
\newcommand{\ep}{\epsilon}
\newtheorem*{rep@theorem}{\rep@title}
\newcommand{\newreptheorem}[2]{%
\newenvironment{rep#1}[1]{%
 \def\rep@title{#2 \ref{##1}}%
 \begin{rep@theorem}}%
 {\end{rep@theorem}}}
\newtheorem{theorem}{Theorem}
\newtheorem*{corollary*}{Corollary}
\newtheorem{lemma}[theorem]{Lemma}
\newtheorem*{lemma*}{Lemma}
\theoremstyle{remark}
\newtheorem*{remark*}{Remark}
\newtheorem*{remarks*}{Remarks}
\newtheorem{example}{Example}
\theoremstyle{definition}
\newcommand{\ed}{\stackrel{{\rm def}}{=}}
\def\undertilde#1{\mathord{\vtop{\ialign{##\crcr
$\hfil\displaystyle{#1}\hfil$\crcr\noalign{\kern1.5pt\nointerlineskip}
$\hfil\tilde{}\hfil$\crcr\noalign{\kern1.5pt}}}}}
\definecolor{comments}{rgb}{1,0.5,0}
\newcommand{\ignore}[1]{}
\newcommand{\stack}[2]{\stackrel{#1}{\smash{#2}\rule{0pt}{1.1ex}}}
\newcommand{\dblstk}[2]{\stackrel{#1}{\smash{\stack{#1}{#2}}\rule{0pt}{1.5ex}}}
\newcommand{\dbltldOmg}{\dblstk \sim\Omega}
\newcommand{\reals}{\mathbb{R}}
\newcommand{\naturals}{\mathbb{N}}
\newcommand{\Paren}[1]{\left(#1\right)}
\newcommand{\Sparen}[1]{\left[#1\right]}
\newcommand{\Sets}[1]{\left\{#1\right\}}
\newcommand{\Order}{O}
\newcommand{\order}{o}
\newcommand{\Const}{C}
\newcommand{\integers}{\mathbb{Z}}
\newcommand{\indicator}{\mathds 1} 
\newcommand{\upto}{,\ldots,}
\newcommand{\thr}{\tau}
\newcommand{\polapxerr}{E_\degree}
\newcommand{\powsumsmb}{P}
\newcommand{\estimator}{f}
\newcommand{\estmmnemp}{\widehat \powsumsmb_\renprm^\text{e}}
\newcommand{\estmmnplndt}{\widehat \powsumsmb_\renprm^{d,\thr}}
\newcommand{\estmmnunb}{\widehat \powsumsmb_\renprm^\text{u}}
\newcommand{\estrenemp}{\Fnc_\renprm^\text{e}}
\newcommand{\estrenplndt}{\Fnc_\renprm^{d,\thr}}
\newcommand{\estrenunb}{\Fnc_\renprm^\text{u}}
\newcommand{\sample}{S}
\newcommand{\safkde}{\sample_\alpha^f(k,\delta,\epsilon)}
\newcommand{\sakde}{\sample_\alpha(k,\delta,\epsilon)}
\newcommand{\sak}{\sample_\alpha(k)}
\newcommand{\smak}{\sample^{\powsumsmb\times}_\alpha(k)}
\newcommand{\saak}{\sample^{\powsumsmb+}_\alpha(k)}
\newcommand{\ent}{H}
\newcommand{\rent}[2]{\ent_{#1}(#2)}
\newcommand{\rental}[1]{\rent{\alpha}{#1}}
\newcommand{\renprm}{\alpha}
\newcommand{\renyi}{R{\'e}nyi entropy }
\newcommand{\absz}{k}
\newcommand{\nsmp}{n}
\newcommand{\Nsmp}{N}
\newcommand{\Pfl}{\Phi}
\newcommand{\Pfll}{\Pfl_l}
\newcommand{\dm}{d}
\newcommand{\smb}{x}
\newcommand{\Mlt}{N}
\newcommand{\Mlts}[1]{\Mlt_{#1}}
\newcommand{\Mltsmb}{\Mlts{\smb}}
\newcommand{\Mltx}{\Mlt_x}
\newcommand{\Xon}{X^n}
\newcommand{\XoN}{X^N}
\renewcommand{\bPr}[1]{{\PP}\left(#1\right)}
\newcommand{\dP}{\mathrm p}
\newcommand{\dQ}{\mathrm q}
\newcommand{\dPs}[1]{\dP_{#1}}
\newcommand{\dPx}{\dP_{x}}
\newcommand{\dQx}{\dQ_{x}}
\newcommand{\hdPx}{\hat{\dP}_{x}}
\newcommand{\zipf}{Z}
\newcommand{\zipfbk}{\zipf_{\beta, \absz}}
\newcommand{\zipfsk}[1]{\zipf_{ #1, \absz}}
\newcommand{\poisson}{{\rm Poi}}
\newcommand{\psnlmb}{\poisson(\lambda)}
\newcommand{\psnlmbs}[1]{\poisson(\lambda, #1)}
\newcommand{\psns}[1]{\poisson(#1)}
\newcommand{\poid}[1]{\poisson(#1)}
\newcommand{\poiprob}[2]{e^{-#1}\cdot\frac{#1^{#2}}{#2!}}
\newcommand{\npsmb}{\lambda_{\smb}}
\newcommand{\norm}{{\powsumsmb}}
\newcommand{\normP}[1]{\norm_{#1}(\dP)}
\newcommand{\normQ}[1]{\norm_{#1}(\dQ)}
\newcommand{\norma}{\norm_{a}}
\newcommand{\mntests}[1]{\widehat{\rm \powsumsmb}_{#1}}
\newcommand{\mmntest}{\widehat{\rm \powsumsmb}_{\renprm}}
\newcommand{\mntestmdn}{\widehat{\rm \powsumsmb}^{\prime}_{\renprm}}
\newcommand{\Fnc}{f}
\newcommand{\hfnc}{\hat{\Fnc}}
\newcommand{\tfnc}{\tilde{\Fnc}}
\newcommand{\Gnc}{g}
\newcommand{\fncalXn}[1]{f_{\renprm}(\Xon)}
\newcommand{\EE}{\mathbb{E}}
\newcommand{\expectation}[1]{\EE\!\Sparen{#1\,}}
\newcommand{\PP}{\mathbb{P}}
\newcommand{\variance}[1]{\Var\!\Sparen{#1\,}}
\newcommand{\Var}{\mathrm{\mathbb{V}ar}}
\newcommand{\esterr}{\delta}
\newcommand{\prerr}{\epsilon}
\newcommand{\flnpwrprm}{r}
\newcommand{\flnpwrss}[2]{#1^{\underline{#2}}}
\newcommand{\fpXf}{\flnpwrss X\flnpwrprm}
\newcommand{\flnpwrrp}[1]{\flnpwrss{#1}{\renprm}}
\newcommand{\sam}{S}
\newcommand{\degree}{d}
\newcommand{\poly}{q}
\newcommand{\polyx}{\poly(x)}
\newcommand{\polys}[1]{\poly(#1)}
\newcommand{\coefm}{a_m}
\date{}
\definecolor{myred}{HTML}{FF0000}
\definecolor{mygreen}{HTML}{008000}
\begin{document}

{\renewcommand{\thefootnote}{}\footnotetext{
An initial version of this paper \cite{AcharyaOTT15} was presented at
the ACM Symposium on Discrete Algorithms (SODA), 2015. 
}}


\title{Estimating Renyi Entropy of Discrete Distributions}

\author[1]{Jayadev Acharya}
\author[2]{Alon Orlitsky}
\author[2]{Ananda Theertha Suresh}
\author[3]{Himanshu Tyagi}
\affil[1]{Massachusetts Institute of Technology (jayadev@csail.mit.edu)}
\affil[2]{University of California, San Diego (\{alon, asuresh\}@ucsd.edu)}
\affil[3]{Indian Institute of Science (htyagi@ece.iisc.ernet.in)}

\maketitle

\begin{abstract}
It was recently shown that estimating the Shannon entropy $H(\dP)$ of a
discrete $k$-symbol distribution $\dP$ requires $\Theta(k/\log k)$ samples,
a number that grows near-linearly in the support size.
In many applications $H(\dP)$ can be replaced by
the more general R\'enyi entropy of order $\renprm$, $\rental \dP$.
We determine the number of samples needed to estimate $\rental\dP$
for all $\renprm$, showing that $\renprm < 1$ 
requires a super-linear, roughly $k^{1/\renprm}$ samples,
noninteger $\renprm>1$ requires a near-linear $k$ samples,
but, perhaps surprisingly, integer $\alpha>1$ requires only
$\Theta(k^{1-1/\alpha})$ samples.
Furthermore, developing on a recently established connection between polynomial approximation
and estimation of additive functions of the form $\sum_xf (\dPx)$, we reduce the sample complexity
for noninteger values of $\alpha$ by a factor of $\log k$ compared to
the empirical estimator. 
The estimators achieving these
bounds are simple and run in time linear in the number of samples. 
Our lower bounds provide explicit
  constructions of distributions with different R\'enyi entropies that are
 hard to distinguish.
\end{abstract}

\section{Introduction}\label{s:introduction}
\subsection{Shannon and R\'enyi entropies}
One of the most commonly used measure of randomness of a distribution $\dP$ over
a discrete set $\cX$ is its \emph{Shannon entropy}
\[
H(\dP)
\ed
\sum_{\smb\in\cX} \dPx\log \frac1\dPx.
\]
The estimation of Shannon entropy has several applications,
including measuring genetic diversity~\cite{ShenkinEM91},
quantifying neural activity~\cite{Paninski03, NemenmanBRS04}, 
network anomaly detection~\cite{LallSOXZ06}, and others.
It was recently shown that estimating the Shannon entropy 
of a discrete distribution $\dP$ over $\absz$ elements to a given additive accuracy
requires
$\Theta(\absz/\log \absz)$ independent samples from
$\dP$~\cite{Paninski04b,Valiant11b};
see~\cite{JiaoVW14,WuY14} for subsequent extensions.
This number of samples grows near-linearly with the alphabet size and
is only a logarithmic factor smaller than the $\Theta(\absz)$ samples needed to learn $\dP$
itself to within a small statistical distance.

A popular generalization of Shannon entropy is the \emph{R\'enyi
  entropy} of order $\renprm\ge0$, defined for $\renprm\ne1$ by
\[
\rental \dP
\ed
\frac{1}{1-\renprm} \log \sum_{\smb\in\cX} \dPx^\renprm
\]
and for $\renprm=1$ by 
\[
H_1( \dP)
\ed
\lim_{\renprm\to1} \rental\dP.
\]
It was shown in the seminal paper~\cite{Ren61}
that \renyi of order 1
is Shannon entropy, namely $H_1(\dP)=H(\dP)$, and
for all other orders it is the unique extension of Shannon entropy
when of the four requirements in Shannon entropy's axiomatic definition,
continuity, symmetry, and normalization are kept
but grouping is restricted to only additivity over independent random variables ($cf.$ \cite{IlicS13}). 

\renyi too has many applications.
It is often used as a bound on Shannon entropy~\cite{Mokkadem89,
NemenmanBRS04, HarveyNO08}, and in many applications it replaces Shannon
entropy as a measure of randomness~\cite{Csiszar95, Massey94, Arikan96}. 
It is also of interest in its own right, with diverse applications 
to unsupervised learning~\cite{Xu98, JenssenHEPE03},
source adaptation~\cite{MansourMR12},
image registration~\cite{MaHGM00, NeemuchwalaHZC06},
and password guessability~\cite{Arikan96, PfisterS04, HanawalS11}
among others.
In particular, the \renyi of order 2, $H_2(\dP)$,
measures the quality of random number
generators~\cite{Knuth73, OorschotW99}, 
determines the number of unbiased bits that can be extracted
from a physical source of randomness~\cite{ImpagliazzoZ89,BennettBCM95},
helps test graph expansion~\cite{GoldreichR00} and
closeness of distributions~\cite{BatuFRSW13, Paninski08},
and characterizes the number of reads needed to reconstruct a DNA
sequence~\cite{MotahariBT13}. 

Motivated by these and other applications, 
unbiased and heuristic estimators of \renyi
have been studied in the physics literature 
following~\cite{Grassberger88}, and 
asymptotically consistent and normal
estimates were
proposed in~\cite{XuE10,KallbergLS11}.
However, no systematic study of the complexity
of estimating R\'enyi entropy is available. 
For example, it was hitherto unknown if the number of samples needed
to estimate the R\'enyi entropy of a given order $\renprm$ differs
from that required for Shannon 
entropy, or whether it varies with the order $\renprm$, 
or how it depends on the alphabet size $\absz$.

\subsection{Definitions and results}
\label{sct:dfn_rsl}
We answer these questions by showing that the number of samples needed
to estimate 
$\rental\dP$ falls into three different ranges.
For $\renprm<1$ it grows super-linearly with $\absz$,
for $1<\renprm\not\in\integers$ it grows almost linearly with $\absz$,
and most interestingly, for the popular orders
$1<\renprm\in\integers$ it grows as
$\Theta(\absz^{1-1/\renprm})$,
which is much less than the sample complexity of estimating 
Shannon entropy.

To state the results more precisely we need a few definitions.
A R\'enyi-entropy \emph{estimator} for distributions over support
set $\cX$ is a function $\estimator:\cX^*\to\reals$ mapping a 
sequence of samples drawn from a distribution to an estimate of its
entropy. 
The sample complexity of an estimator $\estimator$ 
for distributions over $\absz$ elements is defined as 
\[
\safkde
\ed
\min_n
\Sets{n:
\dP\Paren{|\rental{\dP} - \estimator\left(\Xon\right)| > \esterr}
<
\prerr, \forall\, \dP \text{ with }\|p\|_0\le k},
\]
$i.e.$, the minimum number of samples required by $\estimator$ to estimate
$\rental \dP$ of any $\absz$-symbol distribution
$\dP$ to  a given additive accuracy $\esterr$ with probability greater
than $1 - \prerr$.
The \emph{sample complexity} of estimating $\rental\dP$ is then 
\[
\sakde
\ed
\min_f
\safkde,
\]
the least number of samples any estimator needs to estimate 
$\rental \dP$ for 
all $\absz$-symbol distributions $\dP$,
to an additive accuracy $\esterr$ and with probability greater than $1 -
\prerr$. This is a min-max definition where the goal is
 to obtain the \emph{best} estimator for the \emph{worst}
  distribution.

The desired accuracy $\esterr$ and confidence
 $1-\prerr$ are typically fixed. We are therefore most
interested\footnote{Whenever a more refined result indicating the
  dependence of sample complexity on both $\absz$ and $\esterr$ is
  available, we shall use the more elaborate $\sakde$ notation.} in
the dependence of $\sakde$ on the alphabet   
size $\absz$ and omit the dependence of $\sakde$ on $\esterr$ and
$\prerr$ to write $\sak$. 
In particular, we are interested in the {\it large alphabet} regime 
and focus on the essential growth rate of $\sak$ as a function
  of $\absz$ for large $\absz$.
Using the standard asymptotic notations, let
$\sak=\Order(\absz^\beta)$ indicate that for some constant $c$
which may depend on $\renprm$, $\esterr$, and $\prerr$, 
for all sufficiently large $\absz$, $\sakde \le c\cdot \absz^\beta$.
Similarly, $\sak=\Theta(\absz^\beta)$ adds the corresponding $\Omega(\absz^\beta)$ lower bound
for $\sakde$, for all sufficiently small $\esterr$ and $\prerr$.
Finally, extending the $\tilde\Omega$ 
notation\footnote{The notations $\tilde
  O$, $\tilde \Omega$, and $\tilde \Theta$ hide poly-logarithmic factors.}, we let
$\sak=\,\dbltldOmg(\absz^\beta)$ indicate that for every sufficiently small $\prerr$ and arbitrary
$\eta>0$, there exist $c$ and $\esterr$ depending on $\eta$
such that for all $k$ sufficiently large 
$\sakde>c\absz^{\beta-\eta}$, namely 
$\sak$ grows polynomially in $\absz$ with exponent not less than $\beta-\eta$
for $\esterr\le \delta_\eta$.

We show that $\sak$ behaves differently in three ranges of $\renprm$.
For $0\le\renprm<1$, 
\[
\dbltldOmg
\Paren{\absz^{1/\renprm}}
\le
\sak
\le
\Order\Paren{\frac{\absz^{1/\renprm}}{\log \absz}},
\]
namely the sample complexity grows super-linearly in $\absz$ and
estimating the R\'enyi entropy of these orders is even more
difficult than estimating the
Shannon entropy. 
In fact, the upper bound follows from a corresponding result on
estimation of power sums considered in~\cite{JiaoVW14}
(see Section~\ref{sec:nonintegral} for further discussion). 
For completeness, we show in  Theorem~\ref{t:upper_bounds_alpha_small}
that the empirical estimator requires $O(\absz^{1/\renprm})$ samples and 
in Theorem~\ref{t:upper_bounds_arbitrary_poly_small} prove the
improvement by a factor of $\log \absz$.
The lower bound is proved in
Theorem~\ref{t:lower_bounds_alpha_small}. 
%
%

For $1<\renprm\notin\naturals$, 
\[
\dbltldOmg(\absz)
\le
\sak
\le
\Order\Paren{\frac{\absz}{\log \absz}},
\]
namely as with Shannon entropy, the sample complexity 
grows roughly linearly in the alphabet size.
The lower bound is proved in Theorem~\ref{t:lower_bounds_arbitrary}. 
In a conference version of this paper \cite{AcharyaOTT15}, a weaker
$\Order(\absz)$ upper bound was established using the
empirical-frequency estimator. 
For the sake of completeness, we include this result as
Theorem~\ref{t:upper_bounds_arbitrary}. The tighter upper bound
reported here  
uses the best polynomial approximation based estimator of
\cite{JiaoVW14, WuY14} and is proved in
Theorem~\ref{t:upper_bounds_arbitrary_poly}. In fact, in the
Appendix we show that the empirical  estimator can't attain this $\log
\absz$ improvement and requires $\Omega(\absz/\esterr)$ and $\Omega((\absz/\esterr)^{1/\renprm})$
samples for $\renprm>1$ and $\renprm<1$, respectively. 

For $1 < \renprm \in\naturals$ and $\esterr$ and $\prerr$ sufficiently
small,
\[
\sam_\renprm(\absz, \esterr, \prerr)
=
\Theta\Paren{\frac{\absz^{1 - 1/\renprm}}{\esterr^2}},
\]
and in particular, the sample complexity is \emph{strictly sublinear} in the alphabet size.
The upper and lower bounds are shown in Theorems~\ref{t:upper_bounds_integer}
and~\ref{t:lower_bounds_integer2}, respectively. Figure~\ref{f:exponent} illustrates 
our results for different ranges of $\renprm$.

\pgfplotsset{compat=newest}

\begin{figure}[h]
\begin{center}
\begin{tikzpicture}[scale=1.0]\label{f:exponent}
\begin{axis}[width=13cm,height=7cm,xmin=0, xmax=7,ymax=4,ymin=0, samples=60,
  xlabel=$\alpha\rightarrow$, ylabel=$\frac{\log (\sak)}{\log (\absz)}\rightarrow$]
  \addplot[red, dashed][domain=0.3:0.4] (x,1/x);
  \addplot[red, thick][domain=0.4:1] (x,1/x);
  \addplot[blue, thick] [domain=1:7](x,1);
  \addplot[blue, mark=*, mark options={fill=white}] coordinates
  {(2,1)};
  \addplot[blue, mark=*, mark options={fill=white}] coordinates
  {(3,1)};
  \addplot[blue, mark=*, mark options={fill=white}] coordinates
  {(4,1)};
  \addplot[blue, mark=*, mark options={fill=white}] coordinates
  {(5,1)};
  \addplot[blue, mark=*, mark options={fill=white}] coordinates
  {(6,1)};
  \addplot[color=comments, mark=*] coordinates
  {(2,.5)};
  \addplot[color=comments, mark=*] coordinates
  {(3,0.6666)};
  \addplot[color=comments, mark=*] coordinates
  {(4,0.75)};
  \addplot[color=comments, mark=*] coordinates
  {(5,0.80)};
  \addplot[color=comments, mark=*] coordinates
  {(6,0.83333)};
\end{axis}
\end{tikzpicture}
\end{center}
\caption{Exponent of $\absz$ in $\sak$ as a function of $\renprm$.}
\end{figure}

Of the three ranges, the most frequently used, and coincidentally
the one for which the results are most surprising, is the last with
$\renprm=2,3,\ldots$. Some elaboration is in order.

First, for all integral $\alpha>1$, 
$\rental\dP$ can be estimated with a sublinear number of samples.
The most commonly used R\'enyi entropy, $H_2(\dP)$, can be estimated
within $\esterr$ using just
$\Theta\left(\sqrt \absz/\esterr^2\right)$ samples,
and hence R\'enyi entropy can be estimated much more efficiently than
Shannon Entropy, a useful property for large-alphabet applications
such as language  processing genetic analysis.


Also, note that R\'enyi entropy is continuous in the order $\renprm$. Yet
the sample complexity is discontinuous at integer orders. While this 
makes the estimation of the popular integer-order entropies easier,
it may seem contradictory. For instance, to approximate
$H_{2.001}( \dP)$ one
could approximate $H_2( \dP)$ using significantly fewer samples.
The reason for this is that the R\'enyi entropy, while continuous in
$\renprm$, is not uniformly
continuous. In fact, as shown in Example~\ref{exm:zpf_rny}, the difference between say
$H_2( \dP)$ and $H_{2.001}( \dP)$ may increase to infinity when the alphabet-size
increases.

It should also be noted that the estimators achieving the upper bounds
are simple and run in time linear in the number of samples. 
Furthermore, the estimators are {\it universal} in that they 
do not require the knowledge of $\absz$. On the other hand,
the lower bounds on $\sam_\renprm(\absz)$ hold even if
the estimator knows $\absz$.

\subsection{The estimators}\label{s:estimators}
The \emph{power sum} of order $\renprm$ of a distribution
$\dP$ over $\cX$ is
\[
\normP \renprm \ed \sum_{\smb\in\cX} \dPx^\renprm,
\]
and is related to the R\'enyi entropy for $\renprm\ne1$ via
\[
\rental\dP
=
\frac1{1-\renprm}
\log
\normP\renprm.
\]
Hence estimating $\rental\dP$ to an additive accuracy of $\pm\delta$ is
equivalent to estimating $\normP \renprm$ to a multiplicative accuracy
of $2^{\pm\delta\cdot(1-\renprm)}$. Furthermore, if $\delta(\alpha-1) \leq 1/2$ 
then estimating $\normP \renprm$ to multiplicative accuracy of
$1\pm\delta(1-\alpha)/2$ ensures a $\pm \delta$ additive accurate
estimate of $\rental \dP$. 

We construct estimators for the power-sums of distributions with
a multiplicative-accuracy of $(1\pm \esterr)$  and hence obtain
an additive-accuracy of $\Theta(\delta)$ for R\'enyi entropy estimation. 
We consider the following three different 
estimators for different ranges of $\renprm$ 
and with different performance  guarantees.

\paragraph{Empirical estimator} The \emph{empirical}, or \emph{plug-in}, estimator of $\normP\renprm$
is given by
\begin{align}
\label{eqn:empirical}
\estmmnemp
\ed
\sum_\smb \left(\frac{\Mltsmb}{\nsmp}\right)^\renprm.
\end{align}
For $\alpha\ne 1$, $\estmmnemp$ is a not an unbiased estimator of
$\normP\renprm$. 
However, we show in Theorem~\ref{t:upper_bounds_alpha_small} that for
$\renprm<1$ the sample complexity of the empirical estimator is
$\Order(\absz^{1/\renprm})$ and in 
Theorem~\ref{t:upper_bounds_arbitrary} that for $\renprm>1$ it
 is $\Order(\absz)$. In the appendix, we show matching
lower bounds thereby characterizing the $\absz$-dependence of the sample complexity
of empirical estimator.


\paragraph{Bias-corrected estimator}
For integral $\renprm>1$, the \emph{bias-corrected} estimator for $\normP\renprm$ is
\begin{align}
\label{eqn:bias-corrected}
\estmmnunb
\ed
\sum_x \frac{\flnpwrss\Mltsmb\renprm}{n^\renprm},
\end{align}
where for integers $N$ and $r>0$, $\flnpwrss N r\ed N(N-1)\ldots
(N-r+1)$. A variation of this estimator was proposed first in \cite{Bar-YossefKS01} 
for estimating moments of frequencies in a sequence using random samples drawn from it.
Theorem~\ref{t:upper_bounds_integer} shows that for $1<\renprm\in\integers$,
$\estmmnunb$ estimates $\normP\renprm$ within a factor of $1\pm
\esterr$ using $\Order(\absz^{1-1/\renprm}/\esterr^2)$ samples, and 
Theorem~\ref{t:lower_bounds_integer2} shows that this number
is optimal up to a constant factor.

\paragraph{Polynomial approximation estimator}
To obtain a logarithmic improvement in $\sak$, we consider
the polynomial approximation estimator proposed in \cite{WuY14, JiaoVW14}
for different problems, concurrently to a conference version
\cite{AcharyaOTT15} of this paper. 
The polynomial approximation estimator first considers the \emph{best
  polynomial approximation} of degree $\degree$ to $y^\renprm$ for the interval
$y\in[0,1]$~\cite{Timan63}. Suppose this polynomial is given by
$a_0+a_1y+a_2y^2+\ldots+a_\degree y^\degree$.
We roughly divide the samples into two parts. Let $\Mltsmb^\prime$
and $\Mltsmb$ be the multiplicities of $\smb$ in the first and second
parts respectively. 
The polynomial approximation estimator uses the empirical estimate of
$\dPx^\renprm$ for large $\Mltsmb^\prime$, but estimates a polynomial
approximation of $\dPx^\renprm$ for a small $\Mltsmb^\prime$; the
integer powers of $\dPx$ in the latter in turn is estimated using the bias-corrected estimator.

The estimator is roughly of the form
\begin{align}
\label{eqn:polynomial}
\estmmnplndt \ed \sum_{\smb: \Mltsmb^\prime \le
  \thr}\left(\sum_{m=0}^\degree \frac{\coefm(2\thr)^{\renprm -m}
    \flnpwrss \Mltsmb m}{\nsmp^\renprm}\right)  
+ \sum_{\smb :\Mltsmb^\prime > \thr}
\left(\frac{\Mltsmb}{\nsmp}\right)^\renprm, 
\end{align}
where $\degree$ and $\tau$ are both $O(\log n)$ and chosen
appropriately. 

Theorem~\ref{t:upper_bounds_arbitrary_poly} and Theorem~\ref{t:upper_bounds_arbitrary_poly_small} show that for $\renprm>1$
and $\renprm<1$, respectively, 
the sample complexity of $\estmmnplndt$ is $\Order(\absz/\log \absz)$ and $\Order(\absz^{\frac 1 \renprm}/\log \absz)$,
resulting in a reduction in sample complexity of $\Order(\log \absz)$ over
the empirical estimator. 

Table~\ref{tab:summary} summarizes the performance of these
  estimators in terms of their sample complexity. The last column
  denote the lower bounds from Section~\ref{s:lower_bounds}.

Our goal in this work was to identify the exponent of $\absz$ in $\sak$. In
the process, we were able to characterize the sample complexity
$\sakde$ for $1<\renprm\in \mN$. However, we only
obtain partial results towards characterizing the sample complexity
$\sakde$ for a general $\renprm$. Specifically, while we show that the
empirical estimator attains the aforementioned exponent for every
$0<\renprm\notin \mN$, we note that the polynomial approximation
estimator has a lower sample complexity than the empirical
estimator. The exact characterization of $\sakde$ for a
general $\renprm$ remains open.

\renewcommand{\arraystretch}{1.6}

\begin{table}
\begin{center}
\begin{tabular}{|c|c|c|c|c|}
\hline
\textbf{Range of $\renprm$}&\textbf{Empirical} & \textbf{Bias-corrected} &
\textbf{Polynomial} & \textbf{Lower bounds} \\
\hline
$\renprm<1$& $O\left(\frac{\absz^{1/\renprm}}{\esterr^{\max
    (4,2/{\renprm})}}\right)$& &
$O\left(\frac{\absz^{1/\renprm}}{\esterr^{1/\renprm}\log \absz}\right)$ &
for all $\eta>0$, $\Omega(\absz^{1/\renprm-\eta})$\\  
\hline
$\renprm>1$, $\renprm\notin\naturals$&
$O\left(\frac{\absz}{\min(\esterr^{1/{(\renprm-1)}, \esterr^2})}\right)$& & $O\left(\frac{\absz}{\esterr^{1/\renprm}\log
\absz}\right)$ & for all $\eta>0$, $\Omega(\absz^{1-\eta})$\\
\hline
$\renprm>1$, $\renprm\in\naturals$& $O\left(\frac{\absz}{\esterr^2}\right)$& 
$O\left(\frac{\absz^{1-1/\renprm}}{\esterr^2}\right)$&&$\Omega\left(\frac{\absz^{1-1/\renprm}}{\esterr^2}\right)$\\  
\hline
\end{tabular}
\end{center}
\caption{Performance of estimators and lower bounds for estimating \renyi.}
\label{tab:summary}
\end{table}

\subsection{Organization}
The rest of the paper is organized as follows.
Section~\ref{s:technical} presents basic properties of
power sums of distributions and moments of
Poisson random variables, which may be of independent interest.
The estimation algorithms are analyzed in Section
\ref{s:upper_bounds}, in Section~\ref{sec:empirical} we show results for the empirical or plug-in estimate, in Section~\ref{sec:integral} we
provide optimal results for integral $\renprm$ and finally we provide
an improved estimator for non-integral $\renprm>1$. Examples and simulation of the proposed estimators are given in Section~\ref{s:experiments}.
Section
\ref{s:lower_bounds} contains our lower bounds for
the sample complexity of estimating R\'enyi entropy. Furthermore, in
the Appendix we analyze the performance of the empirical estimator for
power-sum estimation with an additive-accuracy and also derive
lower bounds for its sample complexity.

\section{Technical preliminaries}\label{s:technical}

\subsection{Bounds on power sums}
Consider a distribution $\dP$ over $[\absz]= \{1,\ldots,\absz\}$. Since R{\'e}nyi 
entropy is a measure of randomness (see \cite{Ren61} for a detailed discussion), it is maximized by the uniform 
distribution and
the following inequalities hold:
\begin{align*}
0 \leq \rental \dP \leq \log \absz, \quad \alpha \neq 1,
\end{align*}
or equivalently
\begin{align}\label{e:renyi_bound} 
1\leq \normP \renprm \leq \absz^{1-\renprm}, \quad \alpha <1 \quad \text{ and }\quad \absz^{1-\renprm} \leq \normP \renprm \leq 1, \quad \alpha >1.
\end{align}
Furthermore, for $\renprm >1$, $\normP{\renprm +\beta}$
and $\normP{\renprm - \beta}$ can be bounded in terms of $\normP \renprm$,
using the monotonicity of norms and of H\"older means (see, for instance, \cite{HarLitPol52}).
\begin{lemma}\label{lem:bnd_moments}
For every $0 \leq  \renprm$, 
\[
\normP{2\renprm} \leq {\normP{\renprm}}^2
\]
Further, for $\renprm > 1$ and $ 0\leq \beta \leq \renprm$,
\[
\normP {\renprm+\beta}\le \absz^{(\renprm-1)(\renprm-\beta)/\renprm}\, \normP {\renprm}^2,
\]
and
\[
\normP {\renprm -\beta}\le \absz^\beta\, \normP {\renprm}.
\]
\end{lemma}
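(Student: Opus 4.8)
The three inequalities build on one another, so the plan is to prove them in the stated order, starting with the first as the base case. The bound $\normP{2\renprm}\le\normP{\renprm}^2$ requires no restriction on $\renprm$: I would simply expand $\normP{\renprm}^2=\left(\sum_\smb\dPx^\renprm\right)^2$ and discard the nonnegative off-diagonal terms $\dPx^\renprm\dP_{\smb'}^\renprm$ with $\smb\ne\smb'$, leaving exactly $\sum_\smb\dPx^{2\renprm}=\normP{2\renprm}$. Equivalently, this is the statement that the $\ell_2$ norm of the vector $\left(\dPx^\renprm\right)_\smb$ is dominated by its $\ell_1$ norm.

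For the second bound I would view $a\mapsto\log\normP{a}$ as a convex function of the order, which is a consequence of Hölder's inequality. Writing $\renprm+\beta=\theta\,\renprm+(1-\theta)\,2\renprm$ with $\theta=(\renprm-\beta)/\renprm$, which lies in $[0,1]$ exactly because $0\le\beta\le\renprm$, Hölder gives $\normP{\renprm+\beta}\le\normP{\renprm}^{\theta}\normP{2\renprm}^{1-\theta}$. Substituting the first bound $\normP{2\renprm}\le\normP{\renprm}^2$ collapses the right-hand side to $\normP{\renprm}^{2-\theta}=\normP{\renprm}^2\cdot\normP{\renprm}^{-(\renprm-\beta)/\renprm}$. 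The residual factor carries a nonpositive exponent, so I would control it with the maximum-entropy lower bound $\normP{\renprm}\ge\absz^{1-\renprm}$ from \eqref{e:renyi_bound} (available since $\renprm>1$): a quantity at least $\absz^{1-\renprm}$ raised to the power $-(\renprm-\beta)/\renprm\le0$ is at most $\absz^{(\renprm-1)(\renprm-\beta)/\renprm}$, which is precisely the claimed prefactor.

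For the third bound the interpolation is against the all-ones vector instead. I would apply Hölder to $\normP{\renprm-\beta}=\sum_\smb\dPx^{\renprm-\beta}\cdot1$ with conjugate exponents $\renprm/(\renprm-\beta)$ and $\renprm/\beta$, obtaining $\normP{\renprm-\beta}\le\normP{\renprm}^{(\renprm-\beta)/\renprm}\absz^{\beta/\renprm}$, where the second factor comes from $\sum_\smb1\le\absz$. Splitting $\normP{\renprm}^{(\renprm-\beta)/\renprm}=\normP{\renprm}\cdot\normP{\renprm}^{-\beta/\renprm}$ and again invoking $\normP{\renprm}\ge\absz^{1-\renprm}$ to get $\normP{\renprm}^{-\beta/\renprm}\le\absz^{(\renprm-1)\beta/\renprm}$, the two powers of $\absz$ combine to $\absz^{(\renprm-1)\beta/\renprm+\beta/\renprm}=\absz^{\beta}$, yielding $\normP{\renprm-\beta}\le\absz^\beta\normP{\renprm}$. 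The degenerate endpoints $\beta\in\{0,\renprm\}$ can be checked directly; they reduce to the first bound or to $\normP{0}\le\absz$.

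Each calculation is short, so the real content is not in any single manipulation but in the observation that drives both nontrivial bounds: because $\normP{\renprm}\le1$ is being raised to a \emph{negative} power, the max-entropy lower bound $\normP{\renprm}\ge\absz^{1-\renprm}$ is exactly what converts the interpolation estimates into the stated powers of $\absz$. Without it, Hölder alone only produces bounds in terms of fractional powers of $\normP{\renprm}$. The step I would guard most carefully is therefore the orientation of this lower bound together with the sign of the exponent; getting either backwards is the one place the argument could silently break.
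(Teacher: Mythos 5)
Your proof is correct and follows essentially the same route as the paper's: both arguments reduce to the intermediate bounds $\normP{\renprm+\beta}\le\normP{\renprm}^{1+\beta/\renprm}$ and $\normP{\renprm-\beta}\le\absz^{\beta/\renprm}\,\normP{\renprm}^{(\renprm-\beta)/\renprm}$, and then convert the resulting nonpositive power of $\normP{\renprm}$ into a power of $\absz$ via the lower bound $\normP{\renprm}\ge\absz^{1-\renprm}$ from \eqref{e:renyi_bound}, exactly as the paper does. The only differences are cosmetic: where the paper cites monotonicity of $\ell_p$ norms and of H\"older means, you use direct expansion of the square, log-convexity of $a\mapsto\log\normP{a}$ combined with the first inequality, and an explicit H\"older application with conjugate exponents $\renprm/(\renprm-\beta)$ and $\renprm/\beta$ --- interchangeable forms of the same standard inequalities yielding identical intermediate estimates.
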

\begin{proof}
By the monotonicity of norms,  
\[
\normP {\renprm+\beta}\le \normP \renprm^{{\frac{\renprm+\beta}{\renprm}}}, 
\]
which gives 
\[
\frac {\normP {\renprm+\beta}}{\normP \renprm^2} \le  {\normP \renprm}^{\frac
  \beta\renprm-1}.
\]
The first inequality follows upon choosing $\beta = \renprm$. For $1 < \renprm$ and $0\leq \beta \leq \renprm$,
we get the second by \eqref{e:renyi_bound}. Note that by the monotonicity of H\"older means, we have
\begin{equation*}
\left(\frac{1}{\absz}\sum_{\smb} \dPx^{\renprm - \beta}\right)^\frac{1}{\renprm -\beta}
\leq \left(\frac{1}{\absz}\sum_{\smb} \dPx^\renprm\right)^\frac{1}{\renprm}.
\end{equation*}
The final inequality follows upon rearranging the terms and using \eqref{e:renyi_bound}.
\end{proof}

\subsection{Bounds on moments of a Poisson random variable}
Let $\psnlmb$ be the Poisson distribution with parameter $\lambda$. We consider Poisson sampling where $N \sim \psns n$ samples are drawn from the distribution $\dP$ and the multiplicities used
in the estimation are based on the sequence $\XoN = X_1, ..., X_\Nsmp$ instead of $\Xon$. Under Poisson sampling, the multiplicities $\Mltsmb$ are distributed as $\psns{\nsmp \dPx}$ and are all independent, leading to simpler analysis. To facilitate our analysis under Poisson sampling, we note 
a few properties of the moments of a Poisson random variable.

We start with the expected value and the variance of falling powers of a Poisson random variable. 
\begin{lemma}\label{lem:fall_fac}
Let $X\sim\psns{\lambda}$. Then, for all $r\in\naturals$ 
\[
\expectation{\fpXf}=\lambda^r
\]
and 
\[
\variance{\fpXf}
\le
\lambda^{\flnpwrprm}
\Paren{(\lambda+\flnpwrprm)^{\flnpwrprm}-\lambda^{\flnpwrprm}}.
\]
\end{lemma}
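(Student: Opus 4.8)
The plan is to establish the mean identity by a direct summation and then reduce the variance bound to an elementary term-by-term comparison, after expressing the square of a falling factorial as a linear combination of falling factorials. First I would compute $\expectation{\fpXf}$ directly from the Poisson pmf. Writing $\flnpwrss X r = X!/(X-r)!$ and using $\bPr{X = x} = e^{-\lambda}\lambda^x/x!$, the terms with $x < r$ vanish and the shift $x \mapsto x-r$ collapses the sum:
\[
\expectation{\fpXf} = e^{-\lambda}\sum_{x \ge r} \frac{\lambda^x}{(x-r)!} = \lambda^r e^{-\lambda}\sum_{j \ge 0}\frac{\lambda^j}{j!} = \lambda^r.
\]
This is the only computation needed for the first claim.

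For the variance, since $\variance{\fpXf} = \expectation{(\fpXf)^2} - \lambda^{2r}$, the task reduces to evaluating the second moment. Here the key ingredient is the product (Vandermonde-type) identity for falling factorials,
\[
\flnpwrss x r \,\flnpwrss x r = \sum_{k=0}^r \binom r k^2 k!\, \flnpwrss x {2r-k},
\]
which can be verified combinatorially, or by induction on $r$. Applying the mean identity just proved to each summand $\flnpwrss X {2r-k}$ gives $\expectation{(\fpXf)^2} = \sum_{k=0}^r \binom r k^2 k!\,\lambda^{2r-k}$, so that after cancelling the $k=0$ term against $\lambda^{2r}$,
\[
\variance{\fpXf} = \sum_{k=1}^r \binom r k^2 k!\,\lambda^{2r-k}.
\]

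It remains to compare this with the target bound. Expanding $(\lambda+r)^r$ by the binomial theorem gives $\lambda^r\Paren{(\lambda+r)^r - \lambda^r} = \sum_{k=1}^r \binom r k r^k \lambda^{2r-k}$, which has exactly the same powers of $\lambda$ as the variance. Hence it suffices to prove the coefficient inequality $\binom r k^2 k! \le \binom r k r^k$ for each $1 \le k \le r$, that is, $\binom r k k! \le r^k$. But $\binom r k k! = \flnpwrss r k = r(r-1)\cdots(r-k+1)$ is a product of $k$ factors each at most $r$, so $\flnpwrss r k \le r^k$ and the desired bound follows term by term.

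The only genuinely non-routine step is the falling-factorial product identity; once it is in hand, both the second-moment evaluation and the final comparison are mechanical. An alternative that avoids quoting the identity is to expand $\expectation{(\fpXf)^2} = \lambda^r e^{-\lambda}\sum_{j \ge 0}\tfrac{(j+r)!}{(j!)^2}\lambda^j$ directly and bound the resulting series, but the combinatorial identity keeps the coefficients explicit and makes the clean term-by-term comparison with $(\lambda+r)^r$ transparent, so that is the route I would take.
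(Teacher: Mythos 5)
Your proof is correct, but it takes a genuinely different route from the paper's. The paper never expands $(\flnpwrss Xr)^2$ in the falling-factorial basis; instead it shifts the index in the Poisson sum to write $\expectation{(\flnpwrss Xr)^2} = \lambda^r\,\expectation{\flnpwrss{(X+r)}{r}}$, bounds the random quantity $\flnpwrss{(X+r)}{r} = \prod_{j=1}^{r}\bigl[(X+1-j)+r\bigr]$ above by $\sum_{j=0}^{r}\binom rj \flnpwrss Xj\, r^{r-j}$ (in each factor of the product, keeping either the shifted term or $r$), and then applies the mean identity termwise to get $\expectation{(\flnpwrss Xr)^2}\le \lambda^r(\lambda+r)^r$ --- this is essentially the ``alternative'' you mention in your last paragraph and chose not to pursue. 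Your route instead invokes the exact Vandermonde-type identity $(\flnpwrss xr)^2=\sum_{k=0}^{r}\binom rk^2\,k!\,\flnpwrss x{2r-k}$, which yields the variance \emph{exactly} as $\sum_{k=1}^{r}\binom rk^2\,k!\,\lambda^{2r-k}$, and only then introduces an inequality, via the coefficient comparison $\binom rk k!=\flnpwrss rk\le r^k$. What your approach buys is a closed-form variance (strictly sharper than the stated bound) and a clean localization of all the slack in one elementary inequality between falling and ordinary powers; what the paper's approach buys is that it avoids quoting the product identity, needing only a crude expansion inequality inside the expectation, at the cost of surrendering exactness one step earlier. Both arguments rest on the same two ingredients --- the mean identity $\expectation{\flnpwrss Xr}=\lambda^r$ applied termwise, and a comparison of $\flnpwrss rk$-type coefficients with $r^k$ --- so the difference is one of organization rather than substance, and either is acceptable.
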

\begin{proof} 
The expectation is
\begin{align*}
\expectation\fpXf
&=
\sum_{i=0}^{\infty}\psnlmbs{i}\cdot\flnpwrss i\flnpwrprm\\
&=
\sum_{i=r}^{\infty}\poiprob{\lambda}{i}\cdot\frac{i!}{(i-r)!}\\
&=
\lambda^{\flnpwrprm}\sum_{i=0}^{\infty}\poiprob{\lambda}{i}\\
&=\lambda^{\flnpwrprm}.
\end{align*}
The variance satisfies
\begin{align*}
\expectation{(\fpXf)^2}
&=
\sum_{i=0}^{\infty}\psnlmbs{i}\cdot(\flnpwrss i\flnpwrprm)^2\\
&=
\sum_{i=r}^{\infty}\poiprob{\lambda}{i}\frac{i!^2}{(i-r)!^2}\\
&=
\lambda^{\flnpwrprm}\sum_{i=0}^{\infty}\poiprob{\lambda}{i}
\cdot\flnpwrss{(i+\flnpwrprm)}\flnpwrprm\\
&=
\lambda^\flnpwrprm\cdot\expectation{\flnpwrss{(X+\flnpwrprm)}\flnpwrprm}
\\
&\le
\lambda^{\flnpwrprm}\cdot
\expectation{\sum_{j=0}^{\flnpwrprm}\binom\flnpwrprm j
\flnpwrss Xj\cdot\flnpwrprm^{\flnpwrprm-j}}\\
&=
\lambda^{\flnpwrprm}\cdot
\sum_{j=0}^{\flnpwrprm}\binom\flnpwrprm j
\cdot\lambda^{j}\cdot\flnpwrprm^{\flnpwrprm-j}\\
&=
\lambda^{\flnpwrprm} (\lambda+\flnpwrprm)^{\flnpwrprm},
\end{align*}
where the inequality follows from
\begin{equation*}
\flnpwrss{(X+\flnpwrprm)}\flnpwrprm
=
\prod_{j=1}^r\Sparen{(X+1-j)+r}
\le
\sum_{j=0}^{\flnpwrprm}\binom\flnpwrprm j
\cdot\flnpwrss Xj\cdot\flnpwrprm^{\flnpwrprm-j}.
\end{equation*}
Therefore,
\[
\variance{\fpXf}
=
\expectation{(\fpXf)^2}-[\,\EE\,\fpXf\,]^2
\le
\lambda^{\flnpwrprm}\cdot
\Paren{(\lambda+\flnpwrprm)^{\flnpwrprm}-\lambda^{\flnpwrprm}}.
\qedhere
\]
\end{proof}

The next result establishes a bound on the moments of a Poisson random variable.
\begin{lemma}\label{l:bound_Poisson_moments} Let $X \sim \psns \lambda$ and 
let $\beta$ be a positive real number. 
Then,
\[
\expectation{X^\beta} \leq 2^{\beta+2} \max\{\lambda, \lambda^\beta\}.
\]
\end{lemma}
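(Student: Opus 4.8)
The plan is to split on whether $\beta \le 1$ or $\beta > 1$, since the profile $\max\{\lambda,\lambda^\beta\}$ changes behaviour at $\beta=1$, and to dispose of the first range immediately. For $0<\beta\le1$ the map $t\mapsto t^\beta$ is concave on $[0,\infty)$, so Jensen's inequality gives $\expectation{X^\beta}\le(\expectation{X})^\beta=\lambda^\beta\le\max\{\lambda,\lambda^\beta\}$; this already yields the claim with the factor $2^{\beta+2}$ to spare, so all of the work lies in the regime $\beta>1$.

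For $\beta>1$ I would further distinguish small and large $\lambda$. When $\lambda<1$, I use that $X$ is integer-valued, so that $X^\beta\le X^{\lceil\beta\rceil}$ pointwise (for $X\ge1$ the base exceeds one and $\beta\le\lceil\beta\rceil$, while $X=0$ is trivial). Taking expectations and invoking the Touchard expansion $\expectation{X^{\lceil\beta\rceil}}=\sum_{k}S(\lceil\beta\rceil,k)\lambda^k$, every term with $k\ge1$ satisfies $\lambda^k\le\lambda$ because $\lambda<1$, so $\expectation{X^\beta}$ is at most a purely combinatorial multiple of $\lambda$, which is absorbed into the constant.

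When $\lambda\ge1$ rounding the exponent up would inflate the power of $\lambda$, so instead I split the Poisson series at the threshold $i=2\lambda$. On the bulk $\{i\le2\lambda\}$ one has $i^\beta\le(2\lambda)^\beta=2^\beta\lambda^\beta$, so that part contributes at most $2^\beta\lambda^\beta$. On the tail $\{i>2\lambda\}$ the point masses $e^{-\lambda}\lambda^i/i!$ decay geometrically, since the ratio of consecutive masses is $\lambda/(i+1)\le1/2$ once $i\ge2\lambda$; this decay beyond twice the mean suppresses the polynomial weight $i^\beta$, and summing the resulting rapidly decreasing series bounds the tail by a further constant multiple of $\lambda^\beta$, keeping the total within $2^{\beta+2}\lambda^\beta$.

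I expect the tail estimate in the case $\lambda\ge1$ to be the main obstacle: one must argue that the polynomial factor $i^\beta$ cannot overcome the geometric Poisson decay past $i=2\lambda$, and do this with a constant clean enough to fit inside $2^{\beta+2}$. The natural tools are the consecutive-ratio bound above (comparing the tail sum against its leading term) or a Chernoff tail bound $\PP(X\ge2\lambda)\le e^{-c\lambda}$ to separate the tail probability from the moment growth; the structure is otherwise routine, and pinning down the explicit constant is the only delicate step.
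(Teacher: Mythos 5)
Your case $\beta\le1$ (Jensen) is fine, but both $\beta>1$ branches fall short of the stated constant, and the $\lambda\ge1$ branch --- the one you yourself flag as the main obstacle --- contains a genuine gap. The ratio $\lambda/(i+1)\le1/2$ you invoke is the ratio of consecutive \emph{Poisson masses}; the series you must sum carries the weight $i^\beta$, so the relevant ratio of consecutive terms is
\[
\frac{\lambda}{i+1}\Paren{\frac{i+1}{i}}^{\beta},
\]
and at $i\approx2\lambda$ the second factor is $e^{\Theta(\beta/\lambda)}$. Hence the ratio exceeds $1$ whenever $\beta$ is larger than a constant multiple of $\lambda$: in the regime $1\le\lambda\ll\beta$ the weighted terms are still \emph{increasing} at the cut $i=2\lambda$, the ``tail'' contains the bulk of the moment, and it is not $O(\lambda^\beta)$ with any constant of the claimed size. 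Moreover, no patch can rescue the constant $2^{\beta+2}$, because the inequality itself fails in this regime: for $X\sim\psns{1}$ and $\beta=20$,
\[
\expectation{X^{20}}\ \ge\ e^{-1}\,\frac{8^{20}}{8!}\ \approx\ 10^{13}
\ \gg\ 2^{22}\ =\ 2^{\beta+2}\max\{\lambda,\lambda^{\beta}\};
\]
indeed, for integer $\beta$ and $\lambda=1$ the moment equals the Bell number $B_\beta$, which grows super-exponentially in $\beta$. The same growth undermines your $\lambda<1$ branch: the ``purely combinatorial multiple'' of $\lambda$ you produce is $\sum_k S(\lceil\beta\rceil,k)=B_{\lceil\beta\rceil}$ (a sum of Stirling numbers of the second kind), and this is not absorbed by $2^{\beta+2}$ --- for example $B_7=877>512\ge2^{\beta+2}$ for every $\beta\in(6,7]$.

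For comparison, the paper's proof takes a different-looking but ultimately parallel route: it normalizes by $Z=\max\{\lambda^{1/\beta},\lambda\}$, uses the pointwise bound $t^\beta\le t^{\lceil\beta\rceil}+t^{\lfloor\beta\rfloor}$, and expands the integer moments as Touchard polynomials; it then bounds the resulting coefficient sums by binomial sums $2^{\lceil\beta\rceil}+2^{\lfloor\beta\rfloor}\le2^{\beta+2}$. But the Touchard coefficients are Stirling numbers, not binomial coefficients, and their sum is again a Bell number, so the paper's argument over-claims the very same constant that defeats you. What is true --- and is all the paper ever uses, since the lemma is only invoked with $\beta=2\renprm-2$ fixed --- is the bound with a $\beta$-dependent constant, $\expectation{X^\beta}\le C_\beta\max\{\lambda,\lambda^\beta\}$ with, say, $C_\beta=B_{\lceil\beta\rceil}$. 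Your decomposition proves exactly this once you replace the failed tail estimate by: for $\lambda\ge1$, by Lyapunov's inequality and $\lambda^i\le\lambda^{\lceil\beta\rceil}$,
\[
\expectation{X^{\beta}}\ \le\ \expectation{X^{\lceil\beta\rceil}}^{\beta/\lceil\beta\rceil}
\ \le\ \Paren{B_{\lceil\beta\rceil}\,\lambda^{\lceil\beta\rceil}}^{\beta/\lceil\beta\rceil}
\ \le\ B_{\lceil\beta\rceil}\,\lambda^{\beta},
\]
and keep your $\lambda<1$ branch with the constant $B_{\lceil\beta\rceil}$. So the structure of your argument is salvageable, but only for this weaker (though, for this paper, sufficient) form of the lemma; with the constant $2^{\beta+2}$ as stated, the step you called delicate is not delicate but impossible.
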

\begin{proof}
Let $ Z =  \max\{\lambda^{1/\beta}, \lambda\}$. 
\begin{align*}
 \expectation{\frac{X^\beta}{Z^\beta}}
& \leq \expectation{\left(\frac{X}{Z} \right)^{\lceil \beta \rceil} + \left(\frac{X}{Z} \right)^{\lfloor \beta \rfloor} } \\
& = 
\sum^{\lceil \beta \rceil}_{i=1} 
\left( \frac{\lambda}{Z} \right)^{\lceil \beta \rceil}  
{\lceil \beta \rceil \choose i}
+ 
\sum^{\lfloor \beta \rfloor}_{i=1} 
\left( \frac{\lambda}{Z} \right)^{\lfloor \beta \rfloor}  
{\lfloor \beta \rfloor \choose i} \\
& \leq 
\sum^{\lceil \beta \rceil}_{i=1} 
{\lceil \beta \rceil \choose i}
+ 
\sum^{\lfloor \beta \rfloor}_{i=1} 
{\lfloor \beta \rfloor \choose i} \\
& \leq 2^{\beta + 2}.
\end{align*}
The first inequality follows from the fact that either $X/Z >1 $
or $\leq 1$. The equality follows from the fact that the integer moments of Poisson distribution are Touchard polynomials in $\lambda$.
The second inequality uses the property that $\lambda / Z \leq 1$. Multiplying both sides by $Z^\beta$ results in the lemma.
\end{proof}

We close this section with a bound for  $| \expectation{X^\renprm} - \lambda^\renprm|$, which will be used in the next section and is also of independent interest.

\begin{lemma}\label{l:bound_Poisson_moments2}
For $X\sim\psnlmb$, 
\[
\left|\expectation{X^\renprm} - \lambda^\renprm\right|
\le
\begin{cases}
\renprm \left(2^{\alpha} \lambda + (2^{\alpha}+1) \lambda^{\renprm - 1/2}\right)
& \renprm>1\\
\min\left\{\lambda^\renprm,\lambda^{\renprm-1}\right\}
& \renprm\le1.
\end{cases}
\] 
\end{lemma}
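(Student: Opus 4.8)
The plan is to treat the two regimes $\renprm>1$ and $\renprm\le1$ separately, since the mechanisms are quite different.

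For $\renprm>1$ I would start from the mean value theorem, writing $|X^\renprm-\lambda^\renprm| = \renprm\,\xi^{\renprm-1}|X-\lambda|$ for some $\xi$ between $X$ and $\lambda$. Since $\renprm-1>0$, bounding $\xi^{\renprm-1}\le\max\{X,\lambda\}^{\renprm-1}\le X^{\renprm-1}+\lambda^{\renprm-1}$ gives the pointwise estimate $|X^\renprm-\lambda^\renprm|\le \renprm|X-\lambda|(X^{\renprm-1}+\lambda^{\renprm-1})$. Taking expectations and applying the triangle inequality reduces the problem to controlling $\expectation{|X-\lambda|X^{\renprm-1}}$ and $\lambda^{\renprm-1}\expectation{|X-\lambda|}$. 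For the first I would use Cauchy--Schwarz together with $\variance{X}=\lambda$ to get $\expectation{|X-\lambda|X^{\renprm-1}}\le\sqrt\lambda\,\sqrt{\expectation{X^{2(\renprm-1)}}}$, and then invoke Lemma~\ref{l:bound_Poisson_moments} with $\beta=2(\renprm-1)$ to bound $\expectation{X^{2(\renprm-1)}}\le 2^{2\renprm}\max\{\lambda,\lambda^{2(\renprm-1)}\}$; for the second, $\expectation{|X-\lambda|}\le\sqrt\lambda$. Collecting terms and using $\max\{\lambda,\lambda^{\renprm-1/2}\}\le\lambda+\lambda^{\renprm-1/2}$ then yields exactly $\renprm(2^\renprm\lambda+(2^\renprm+1)\lambda^{\renprm-1/2})$.

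For $\renprm\le1$ the bound $\lambda^\renprm$ is immediate: $x\mapsto x^\renprm$ is concave, so Jensen gives $\expectation{X^\renprm}\le\lambda^\renprm$, while $\expectation{X^\renprm}\ge0$, so $0\le\lambda^\renprm-\expectation{X^\renprm}\le\lambda^\renprm$. The sharper bound $\lambda^{\renprm-1}$ is the delicate one; a naive concavity/variance estimate only yields $\lambda^{\renprm-1/2}$, which is too weak. The key step I would use is the Poisson size-biasing identity $\expectation{X\,g(X)}=\lambda\,\expectation{g(X+1)}$, which with $g(x)=x^{\renprm-1}$ turns the awkward fractional moment into $\expectation{X^\renprm}=\lambda\,\expectation{(X+1)^{\renprm-1}}$. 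Now $y\mapsto y^{\renprm-1}$ is convex (since $\renprm-1\le0$), so Jensen applies in the favourable direction: $\expectation{(X+1)^{\renprm-1}}\ge(\lambda+1)^{\renprm-1}$. Hence $\lambda^\renprm-\expectation{X^\renprm}\le\lambda(\lambda^{\renprm-1}-(\lambda+1)^{\renprm-1})$, and a single application of Bernoulli's inequality $(\lambda+1)^{\renprm-1}\ge\lambda^{\renprm-1}(1+(\renprm-1)/\lambda)$ collapses this to $(1-\renprm)\lambda^{\renprm-1}\le\lambda^{\renprm-1}$.

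The main obstacle is precisely the $\lambda^{\renprm-1}$ bound for $\renprm\le1$: crude concavity loses a factor of $\sqrt\lambda$, and the whole point of the lemma is to recover it. The size-biasing identity is what makes this work, since it trades the concave (hence Jensen-unfriendly) fractional power $X^\renprm$ for the convex power $(X+1)^{\renprm-1}$, after which Jensen and Bernoulli give the claim with constant at most one. For $\renprm>1$ the only real care needed is matching the stated constants, which forces the choice $\beta=2(\renprm-1)$ in Lemma~\ref{l:bound_Poisson_moments} and the split $\max\{\lambda,\lambda^{\renprm-1/2}\}\le\lambda+\lambda^{\renprm-1/2}$.
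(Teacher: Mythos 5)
Your proof is correct in both regimes, and the two halves compare differently against the paper. For $\renprm>1$ your argument is essentially the paper's: the same pointwise bound $|x^\renprm-y^\renprm|\le\renprm|x-y|\left(x^{\renprm-1}+y^{\renprm-1}\right)$, Cauchy--Schwarz against $\variance{X}=\lambda$, and Lemma~\ref{l:bound_Poisson_moments} with $\beta=2(\renprm-1)$; whether one applies Cauchy--Schwarz to the whole product (as the paper does) or after splitting off the $\lambda^{\renprm-1}\expectation{|X-\lambda|}$ term (as you do) is immaterial, and both routes collect to $\renprm\left(2^\renprm\lambda+(2^\renprm+1)\lambda^{\renprm-1/2}\right)$. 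Where you genuinely depart from the paper is the $\lambda^{\renprm-1}$ bound for $\renprm\le1$. The paper stays with pointwise inequalities: it writes $X^\renprm=\lambda^\renprm\left(1+\left(X/\lambda-1\right)\right)^\renprm$, applies $(1+y)^\renprm\ge1+\renprm y-y^2$ for $y\ge-1$, and takes expectations, so the linear term vanishes and the quadratic term contributes $\variance{X}/\lambda^2=1/\lambda$, giving $\lambda^\renprm-\expectation{X^\renprm}\le\lambda^{\renprm-1}$; this uses only that $X\ge 0$ has mean $\lambda$ and variance $\lambda$, so it is distribution-agnostic (note the paper's displayed $\lambda^\renprm\left(2-\frac1\lambda\right)$ is a typo for $\lambda^\renprm\left(1-\frac1\lambda\right)$). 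Your route instead exploits a Poisson-specific structural fact, the size-bias identity $\expectation{Xg(X)}=\lambda\expectation{g(X+1)}$, to trade the concave power $X^\renprm$ for the convex power $(X+1)^{\renprm-1}$, after which Jensen points in the favourable direction and Bernoulli's inequality finishes. What this buys is a marginally sharper constant, $(1-\renprm)\lambda^{\renprm-1}$ in place of $\lambda^{\renprm-1}$, and it avoids having to verify the elementary quadratic lower bound on $(1+y)^\renprm$; the price is that the argument is tied to the Poisson law, whereas the paper's version transfers verbatim to any nonnegative random variable whose variance equals its mean.
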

\begin{proof}
For $\renprm \leq 1$, $(1+y)^\renprm \geq 1  +\renprm y - y^2$
for all $y \in [-1,\infty]$, hence,
\begin{align*}
X^\renprm 
&= \lambda^\renprm \left( 1 + \Bigl( \frac{X}{\lambda} - 1 \Bigr)\right)^\alpha \\
& \geq \lambda^\renprm \left( 1 + \alpha \Bigl( \frac{X}{\lambda} - 1 \Bigr) - \Bigl( \frac{X}{\lambda} - 1 \Bigr)^2\right).
\end{align*}
Taking expectations on both sides,
\begin{align*}
\expectation{ X^\renprm }
& \geq  \lambda^\renprm \left( 1 + \alpha \expectation{ \frac{X}{\lambda} - 1 } - \expectation{ \Bigl( \frac{X}{\lambda} - 1 \Bigr)^2}\right)\\
& = \lambda^\renprm \left(2 - \frac{1}{\lambda} \right).
\end{align*}
Since $x^\renprm$ is a concave function and $X$ is nonnegative, the previous
bound yields
\begin{align*}
\left|\expectation{ X^\renprm}  - \lambda^\renprm\right| &= \lambda^\renprm - \expectation{ X^\renprm }
\\
&\leq \min\left\{\lambda^\renprm,\lambda^{\renprm-1}\right\}.
\end{align*}
For $\renprm >1$, 
\[
|x^\renprm - y^\renprm| \leq \renprm |x-y|\left(x^{\renprm-1} + y^{\renprm-1}\right),
\]
hence by the Cauchy-Schwarz Inequality,
\begin{align*}
\expectation{ \left|X^\renprm - \lambda^\renprm\right|} &\leq  \renprm \expectation{ |X -\lambda| \left(X^{\renprm -1} + \lambda^{\renprm -1}\right)}\\
&\leq  \renprm \sqrt{ \expectation{ (X-\lambda)^2}}\sqrt{\expectation{\left(X^{2\renprm -2} + \lambda^{2\renprm -2}\right)}}\\
&\leq \renprm \sqrt{\lambda}\sqrt{\expectation{\left(X^{2\renprm -2} + \lambda^{2\renprm -2}\right)}}\\
&\leq  \renprm \sqrt{2^{2\alpha}\max\{\lambda^2, \lambda^{2\renprm -1} \}  + \lambda^{2\renprm - 1}}\\
&\leq  \renprm \left(2^{\alpha} \max\{\lambda, \lambda^{\renprm -1/2}\} + \lambda^{\renprm - 1/2}\right),
\end{align*}
where the last-but-one inequality is by Lemma \ref{l:bound_Poisson_moments}. 
\end{proof}

\subsection{Polynomial approximation of $x^\renprm$}
\label{sec:poly_approx}
In this section, we review a bound on the error in approximating 
$x^\renprm$ by a $d$-degree polynomial over a bounded interval.
Let $\cP_\degree$ denote the set of all polynomials of degree less than or equal to
$\degree$ over $\mathbb{R}$. For a continuous function $f(x)$ and 
$\lambda>0$,  
let 
\[
\polapxerr(f, [0,\lambda])\ed  \inf_{q\in\cP_\degree} \max_{x\in[0,\lambda]} |q(x)-f(x)|. 
\]
\begin{lemma}[\cite{Timan63}] 
\label{lem:timan}
There is a constant $c'_{\renprm}$ such that for any $\degree>0$, 
\[
\polapxerr(x^{\renprm}, [0,1])\le \frac{c'_\renprm}{\degree^{2\renprm}}.
\]
\end{lemma}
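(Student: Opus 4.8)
The plan is to trace the $d^{-2\renprm}$ rate back to the classical Bernstein--Jackson theory, exploiting the fact that the only obstruction to fast polynomial approximation of $x^\renprm$ is the algebraic singularity of its derivatives at the endpoint $x=0$. The single most useful idea is that an \emph{endpoint} singularity on $[0,1]$ becomes, after the substitution $x=y^2$, an \emph{interior} singularity of the even function $|y|^{2\renprm}$ on $[-1,1]$; the squaring is exactly what turns the exponent $\renprm$ into $2\renprm$ and explains the otherwise surprising doubling in the statement.

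First I would establish the reduction
\[
E_d(x^\renprm,[0,1]) = E_{2d}\bigl(|y|^{2\renprm},[-1,1]\bigr),
\]
where $E_n(g,I)$ denotes the degree-$n$ uniform approximation error of $g$ on $I$. The right-hand side is attained by an \emph{even} polynomial: if $p$ is a best degree-$2d$ approximant of the even function $|y|^{2\renprm}$, then its symmetrization $\tfrac12(p(y)+p(-y))$ is even, has degree at most $2d$, and approximates at least as well, so by uniqueness of the best uniform approximant (Chebyshev's theorem) it equals $p$. Every even polynomial of degree $2d$ has the form $Q(y^2)$ with $\deg Q\le d$, and as $y$ ranges over $[-1,1]$ the variable $x=y^2$ ranges over all of $[0,1]$ with $|y|^{2\renprm}=x^\renprm$; matching the two maxima gives the displayed equality.

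It then remains to bound $E_n(|y|^{s},[-1,1])$ for $s=2\renprm$ and $n=2d$. When $s$ is an even integer (i.e. $\renprm\in\naturals$) the function is itself a polynomial and the error is $0$ for $n\ge s$, so the bound is trivial. For non-integer $s$, write $s=m+\sigma$ with $m=\lfloor s\rfloor$ and $0<\sigma<1$; then $|y|^{s}\in C^m[-1,1]$ and its $m$-th derivative is a constant multiple of $|y|^{\sigma}$, which lies in the Hölder class $\mathrm{Lip}\,\sigma$ by the subadditivity of $t\mapsto t^\sigma$. The classical Jackson theorem for functions whose top derivative is Hölder continuous yields $E_n(|y|^{s},[-1,1])\le C_s\,n^{-(m+\sigma)}=C_s\,n^{-s}$. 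Substituting $s=2\renprm$, $n=2d$, and absorbing $2^{-2\renprm}$ into the constant gives $E_d(x^\renprm,[0,1])\le c'_\renprm\,d^{-2\renprm}$; the statement for small $d$ is harmless since the trivial bound $E_d\le\max_{x\in[0,1]}x^\renprm=1$ can be folded into $c'_\renprm$.

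The main obstacle is bookkeeping rather than depth: one must verify the smoothness claims about $|y|^s$ and invoke the correct quantitative form of Jackson's theorem. Equivalently, one could bypass the symmetrization step and apply the Ditzian--Totik theorem directly on $[0,1]$, where the weight $\varphi(x)=\sqrt{x(1-x)}$ makes the weighted modulus of smoothness of $x^\renprm$ near $0$ scale like $t^{2\renprm}$, reproducing the same rate. Since the result is precisely the statement proved in Timan's monograph, in the paper I would simply cite it, treating the sketch above as the justification for the quoted exponent.
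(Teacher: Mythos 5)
The paper does not actually prove this lemma; it is quoted directly from Timan's monograph, so there is no internal argument to compare against. Your sketch supplies the standard derivation behind that citation, and it is essentially correct: the substitution $x=y^2$, the symmetrization argument (best approximation of an even function on $[-1,1]$ may be taken even), and the identification of even polynomials of degree $2\degree$ with $Q(y^2)$, $\deg Q\le \degree$, give the exact identity $\polapxerr(x^{\renprm},[0,1])=E_{2\degree}\bigl(|y|^{2\renprm},[-1,1]\bigr)$, after which Jackson's theorem yields the rate $(2\degree)^{-2\renprm}$. This is precisely Bernstein's classical analysis of $E_n(|y|^s)$ and is the right explanation for the doubled exponent. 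What your route buys over the paper's bare citation is self-containedness (modulo a quantitative Jackson theorem); what the citation buys is brevity and the sharp constant, which the paper never needs.

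Two patches are needed before your sketch is airtight. First, your case split ``$s=2\renprm$ an even integer'' versus ``$s$ non-integer'' silently omits the case where $s=2\renprm$ is an \emph{odd} integer, i.e.\ $\renprm\in\{1/2,3/2,5/2,\dots\}$; these are legitimate values of $\renprm$ in the paper (noninteger $\renprm$, both below and above $1$). For such $s$ the function $|y|^{s}$ is not a polynomial, and the decomposition $s=m+\sigma$ with $0<\sigma<1$ is unavailable. The fix stays inside your framework: take $m=s-1$ (which is even), so that the $m$-th derivative of $|y|^{s}$ is a constant multiple of $|y|$, a Lipschitz function, and apply Jackson's theorem in its $\mathrm{Lip}\,1$ form to get $E_n\le C\,n^{-m}\,\omega\bigl(f^{(m)},1/n\bigr)\le C\,n^{-s}$. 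Second, a smaller point: when $m=\lfloor s\rfloor$ is odd, the $m$-th derivative of $|y|^{s}$ is $c\,\mathrm{sgn}(y)\,|y|^{\sigma}$ rather than $c\,|y|^{\sigma}$; it is still H\"older-$\sigma$ (with at most twice the constant, since for $y>0>z$ one has $|y|^{\sigma}+|z|^{\sigma}\le 2|y-z|^{\sigma}$), so the Jackson bound goes through unchanged, but the claim as written is not literally true.
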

To obtain an estimator which does not require a knowledge of the support size
$\absz$, we seek a polynomial approximation $q_\renprm(x)$ of $x^\renprm$
with $q_\renprm(0) = 0$. Such a polynomial $q_\renprm(x)$ can be obtained by a minor modification
of the polynomial $q'_\renprm(x) = \sum^d_{j=0} q_j x^j$ satisfying the error bound in
Lemma~\ref{lem:timan}.
Specifically, we use the polynomial 
$q_\renprm(x) = q'_\renprm(x)- q_0$ for which the approximation error is bounded as
\begin{align}
\max_{x\in [0,1]} |q_{\renprm}(x) - x^\renprm| 
& \leq |q_0| + \max_{x\in [0,1]} |q'_{\renprm}(x) - x^\renprm|
\nonumber
\\
&  = |q'_\renprm(0) -0^\renprm| + \max_{x\in [0,1]} |q'_{\renprm}(x) - x^\renprm|
\nonumber
\\
& \leq 2\max_{x\in [0,1]} |q'_{\renprm}(x) - x^\renprm| 
\nonumber
\\
& = \frac{2c'_\renprm}{\degree^{2\renprm}} 
\nonumber
\\
& \ed \frac{c_\renprm}{\degree^{2\renprm}}.
\label{e:poly_approx_error}
\end{align}

To bound the variance of the proposed polynomial approximation
estimator, we require  a bound on the absolute values of the
coefficients of $q_\renprm(x)$. The following inequality due to Markov
serves this purpose.
\begin{lemma}[\cite{Markov1892}]
\label{lem:polybound}
Let $p(x) = \sum_{j = 0}^d c_j x^j$ be a degree-$d$ polynomial
so that $| p(x) | \leq 1$ for all $x \in [-1, 1]$. Then 
for all $j = 0, \ldots, m$
\[\max_j |c_j|
\leq (\sqrt{2} + 1)^d.
\] 
\end{lemma}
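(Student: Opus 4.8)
The plan is to control the monomial coefficients $c_j$ by passing through the Chebyshev basis, where the uniform bound $|p(x)|\le 1$ on $[-1,1]$ translates into a clean bound on the coefficients, and then to transfer back to the monomial basis while tracking how much the Chebyshev-to-monomial change of basis can inflate the coefficients.

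First I would expand $p$ in the basis of Chebyshev polynomials of the first kind, $p(x)=\sum_{k=0}^d b_k T_k(x)$, where $T_k$ is characterized by $T_k(\cos\theta)=\cos k\theta$ and satisfies $|T_k(x)|\le 1$ on $[-1,1]$. Using the orthogonality of the $T_k$ with respect to the weight $(1-x^2)^{-1/2}$, each coefficient has the representation $b_k=\frac{2}{\pi}\int_{-1}^1 p(x)T_k(x)(1-x^2)^{-1/2}\,dx$ for $k\ge 1$ (and $\frac1\pi$ for $k=0$). Since $|p|\le 1$, $|T_k|\le 1$, and $\int_{-1}^1(1-x^2)^{-1/2}\,dx=\pi$, this immediately gives $|b_k|\le 2$ for every $k$. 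This is the only step where the hypothesis on $p$ enters; everything after is a deterministic statement about Chebyshev polynomials.

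Second, writing $T_k(x)=\sum_j t_{k,j}x^j$, we have $c_j=\sum_k b_k t_{k,j}$, and hence $\max_j|c_j|\le\sum_j|c_j|\le\sum_{k=0}^d|b_k|\sum_j|t_{k,j}|\le 2\sum_{k=0}^d\|T_k\|_1$, where $\|T_k\|_1:=\sum_j|t_{k,j}|$. The crux is to evaluate $\|T_k\|_1$. Here I would exploit the explicit sign pattern of the coefficients of $T_k$: only powers $x^j$ with $j\equiv k\pmod 2$ occur, and $\mathrm{sign}(t_{k,j})=(-1)^{(k-j)/2}$. A short computation then shows $\|T_k\|_1=(-i)^kT_k(i)=|T_k(i)|$, i.e. the $\ell_1$ coefficient norm of $T_k$ equals the modulus of $T_k$ at the imaginary unit. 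Substituting $x=i$ into the closed form $T_k(x)=\frac12[(x+\sqrt{x^2-1})^k+(x-\sqrt{x^2-1})^k]$, with $\sqrt{i^2-1}=i\sqrt2$, yields $\|T_k\|_1=\frac12[(1+\sqrt2)^k+(\sqrt2-1)^k]\le(1+\sqrt2)^k$. Summing the geometric series $2\sum_{k=0}^d(1+\sqrt2)^k$ produces a bound of the form $C(1+\sqrt2)^d$, with exactly the claimed exponential rate and base $\sqrt2+1$.

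The main obstacle is the final bookkeeping: the two lossy steps, namely $\max_j|c_j|\le\sum_j|c_j|$ and the crude geometric summation, inflate the estimate by an absolute constant (about $\tfrac{1+\sqrt2}{\sqrt2}$) beyond the stated $(\sqrt2+1)^d$. Recovering the precise constant $1$ requires a sharper per-coefficient analysis of the Chebyshev-to-monomial transition: one may instead bound each $|c_j|\le 2\sum_{k\ge j}|t_{k,j}|$ directly and control the growth of the coefficient of $x^j$ using the generating identity $\sum_j|t_{k,j}|x^j=\frac12[(x+\sqrt{x^2+1})^k+(x-\sqrt{x^2+1})^k]$ together with the nonnegativity of these coefficients. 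This refinement is the content of Markov's original estimate~\cite{Markov1892}. For our purposes the constant is immaterial, since the lemma is invoked only to bound the variance of the polynomial-approximation estimator up to constant factors, so the displayed rate $(\sqrt2+1)^d$ is precisely what we use.
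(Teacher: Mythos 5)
First, a structural point: the paper itself gives no proof of Lemma~\ref{lem:polybound} --- it is imported verbatim from \cite{Markov1892} --- so your attempt can only be compared against the classical result, not an in-paper argument. Your Chebyshev route is essentially sound as a self-contained derivation of the correct exponential rate: expanding $p=\sum_k b_k T_k$ and bounding $|b_k|\le 2$ by orthogonality is valid, the sign-pattern argument giving $\sum_j|t_{k,j}|=(-i)^kT_k(i)$ is correct, and the evaluation at $x=i$ is the standard trick. Two small slips: for odd $k$ the closed form is $\frac12[(1+\sqrt2)^k-(\sqrt2-1)^k]$, since $(1-\sqrt2)^k=(-1)^k(\sqrt2-1)^k$ (your stated equality holds only for even $k$, though the inequality $\le(1+\sqrt2)^k$ you actually use survives); and your final constant is underestimated --- the chain $\max_j|c_j|\le\sum_k|b_k|\,\|T_k\|_1\le 2\sum_{k=0}^d(1+\sqrt2)^k$ gives a factor $2(1+\sqrt2)/\sqrt2=2+\sqrt2\approx 3.41$, not $(1+\sqrt2)/\sqrt2$.

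The genuine shortfall is the one you flag yourself: what your argument proves is $\max_j|c_j|\le(2+\sqrt2)(\sqrt2+1)^d$, not the stated bound with constant $1$. Recovering constant $1$ is not a matter of tightening your bookkeeping; it is V.~A.~Markov's coefficient theorem, which asserts the coefficient-wise extremality of the Chebyshev polynomials $T_d$, $T_{d-1}$ among polynomials bounded by $1$ on $[-1,1]$ --- a substantially harder fact that your column-sum bound $|c_j|\le 2\sum_{k\ge j}|t_{k,j}|$ does not yield either. So, strictly, the lemma as stated is not established. That said, your closing claim about the application checks out: the lemma is used only to obtain \eqref{eqn:coeff-bound}, and in the proofs of Theorems~\ref{t:upper_bounds_arbitrary_poly} and~\ref{t:upper_bounds_arbitrary_poly_small} all that is needed is $a^2=\Order(n^{c_0})$ for some $c_0<1$, where any absolute constant multiplying $(\sqrt2+1)^d$ is absorbed. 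Thus your weaker bound would serve the paper equally well if the lemma's statement were relaxed by a constant factor; as written, the sharp constant must still be credited to \cite{Markov1892}.
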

Since $|x^{\renprm}|\le 1$ for $x\in[0,1]$, the approximation
bound \eqref{e:poly_approx_error} implies 
$|q_\renprm(x)|<1+\frac{c_\renprm}{d^{2\renprm}}$ for all
$x\in[0,1]$. It follows from Lemma~\ref{lem:polybound}  that
\begin{align}
\label{eqn:coeff-bound}
\max_{m}|\coefm|<\left(1+\frac{c_\renprm}{d^{2\renprm}}\right) (\sqrt{2}
+ 1)^d. 
\end{align}


\section{Upper bounds on sample complexity}\label{s:upper_bounds}
In this section, we analyze the performances of the estimators we proposed in 
Section \ref{s:estimators}. 
Our proofs are based on bounding 
the bias and the variance of the estimators under Poisson sampling.
We first describe our general recipe and then analyze the performance of each estimator separately. 

Let $X_1, ..., X_\nsmp$ be $\nsmp$ independent samples drawn from a distribution $\dP$
over $\absz$ symbols. Consider an estimate $\Fnc_\renprm\left(\Xon\right) = \frac1{1-\renprm}
\log \mmntest(\nsmp, \Xon)$ of $\rental \dP$ 
which depends on $\Xon$ only through the multiplicities
and the sample size. 
Here $\mmntest(\nsmp, \Xon)$ is the corresponding estimate
of $\normP \renprm$ -- as discussed in Section~\ref{s:introduction}, small additive error in the estimate $\Fnc_\renprm\left(\Xon\right)$ 
of $\rental \dP$ 
is equivalent to small multiplicative error in the estimate $\mmntest(\nsmp, \Xon)$
of $\normP \renprm$. For simplicity, we analyze a randomized estimator
$\tfnc_\renprm$ described as follows: For $\Nsmp\sim \psns{n/2}$, let
\begin{align*}
\tfnc_\renprm\left(\Xon\right) = \begin{cases}
&\text{constant}, \quad \Nsmp >\nsmp,\\
& \frac{1}{1-\renprm}\log \mmntest(\nsmp/2, \XoN), \quad \Nsmp \leq \nsmp. \end{cases}
\end{align*}
The following reduction to Poisson sampling is well-known.
\begin{lemma}{\bf (Poisson approximation 1)} For $\nsmp \geq 8\log(2/\prerr)$ and $\Nsmp \sim \psns{n/2}$,
\begin{align*}
\bPr{|\rental{\dP} - \tfnc_\renprm\left(\Xon\right)| > \esterr }
\leq \bPr{|\rental{\dP} - \frac{1}{1-\renprm}\log \mmntest(\nsmp/2, \XoN)| > \esterr }
+ \frac{\prerr}{2}.
\end{align*}
\end{lemma}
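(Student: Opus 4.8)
The plan is to condition on the auxiliary Poisson count $\Nsmp\sim\psns{n/2}$ and split according to the two regimes in the definition of $\tfnc_\renprm$. The governing observation is a coupling: the samples $X_1,\ldots,X_\nsmp$ are i.i.d.\ from $\dP$ and $\Nsmp$ is drawn independently of them, so on the event $\{\Nsmp\le\nsmp\}$ the prefix $\XoN=X_1,\ldots,X_\Nsmp$ consists of exactly $\Nsmp$ i.i.d.\ samples from $\dP$. Hence, restricted to $\{\Nsmp\le\nsmp\}$, the joint law of $(\Nsmp,\XoN)$ coincides with the law produced by genuine Poisson sampling of mean $\nsmp/2$, and on this event $\tfnc_\renprm(\Xon)=\frac{1}{1-\renprm}\log\mmntest(\nsmp/2,\XoN)$ by definition. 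This is exactly why the randomized estimator can be analyzed as if the multiplicities were independent Poissons.

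First I would write the decomposition
\begin{align*}
\bPr{|\rental\dP - \tfnc_\renprm(\Xon)| > \esterr}
&= \bPr{|\rental\dP - \tfnc_\renprm(\Xon)| > \esterr,\ \Nsmp \le \nsmp} \\
&\quad + \bPr{|\rental\dP - \tfnc_\renprm(\Xon)| > \esterr,\ \Nsmp > \nsmp}.
\end{align*}
On the first event the estimator equals the Poisson-sampling estimator, so by the coupling above that term is at most $\bPr{|\rental\dP - \frac{1}{1-\renprm}\log\mmntest(\nsmp/2,\XoN)| > \esterr}$, the leading term in the claim. The second event is contained in $\{\Nsmp>\nsmp\}$ irrespective of the estimator's value (there $\tfnc_\renprm$ is a fixed constant), so it contributes at most $\bPr{\Nsmp>\nsmp}$.

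It then remains to show $\bPr{\Nsmp>\nsmp}\le\prerr/2$ under the hypothesis $\nsmp\ge 8\log(2/\prerr)$. Since $\Nsmp\sim\psns{n/2}$, this is a one-sided deviation of $\Nsmp$ above twice its mean, so the standard Chernoff bound for Poisson variables applies: $\bPr{\Nsmp\ge\nsmp}\le (e/4)^{\nsmp/2}=(\sqrt{e}/2)^{\nsmp}$, and more crudely $\bPr{\Nsmp\ge\nsmp}\le e^{-\nsmp/6}$. Because $\Nsmp$ is integer-valued, $\bPr{\Nsmp>\nsmp}\le\bPr{\Nsmp\ge\nsmp}$, and substituting $\nsmp\ge 8\log(2/\prerr)$ makes the right-hand side at most $\prerr/2$. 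Combining the two terms of the decomposition yields the stated inequality; the threshold $8\log(2/\prerr)$ is precisely what the tail bound's constant demands (indeed $6\log(2/\prerr)$ already suffices).

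I do not expect a genuine obstacle here: the argument is the routine ``truncated Poissonization'' reduction, and the only quantitative ingredient is the Poisson upper-tail estimate. The one point that must be stated with care is the coupling claim—that conditioning on $\{\Nsmp\le\nsmp\}$ leaves the conditional law of $\XoN$ equal to that of $\Nsmp$ fresh i.i.d.\ draws—since the entire reduction, including the replacement of $\tfnc_\renprm(\Xon)$ by the Poisson-sampling estimator on that event, rests on it.
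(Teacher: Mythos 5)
Your proof is correct. The paper in fact gives no proof of this lemma---it is stated as a ``well-known'' reduction to Poisson sampling---and your argument is precisely the standard one being invoked: decompose on $\{N\le n\}$ versus $\{N>n\}$, use the coupling under which the truncated estimator on $\{N\le n\}$ has the same joint law as genuine Poisson sampling (so that term is dominated by the Poisson-sampling error probability), and kill the second term with the Poisson Chernoff bound $\Pr(N\ge n)\le (e/4)^{n/2}\le e^{-n/6}\le \epsilon/2$, which holds under the stated hypothesis $n\ge 8\log(2/\epsilon)$.
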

\noindent It remains to bound the probability on the
right-side above, which can be done provided the bias and
the variance of the estimator are bounded.
\begin{lemma}\label{l:bias-variance-PAC}
For $\Nsmp \sim \psns{n}$, let the power sum estimator $\mmntest =\mmntest(\nsmp, \XoN)$ have
bias and variance satisfying
\begin{align*}
\left| \expectation \mmntest  - \normP \renprm\right| &\leq \frac{\esterr}{2} \normP \renprm,\\
\variance \mmntest &\leq \frac{\esterr^2}{12}  \normP \renprm^2.
\end{align*}
Then, there exists an estimator $\mntestmdn$ that uses $18 \nsmp \log (1/\prerr)$ samples
and ensures
\begin{align*}
\bPr{\left| \mntestmdn - \normP \renprm\right| > \esterr\, \normP \renprm} 
&\leq \ep.
\end{align*}
\end{lemma}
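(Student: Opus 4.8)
The plan is to amplify the constant-probability guarantee implicit in the bias and variance bounds into an exponentially small failure probability via the standard \emph{median-of-means} trick.

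First I would convert the two moment bounds into a single-run accuracy guarantee using Chebyshev's inequality. By the triangle inequality, the bias bound $|\expectation{\mmntest} - \normP\renprm| \le \tfrac{\esterr}{2}\normP\renprm$ implies that whenever the centered fluctuation satisfies $|\mmntest - \expectation{\mmntest}| \le \tfrac{\esterr}{2}\normP\renprm$ we also have $|\mmntest - \normP\renprm| \le \esterr\,\normP\renprm$. Applying Chebyshev's inequality at the threshold $\tfrac{\esterr}{2}\normP\renprm$ together with the variance bound $\variance{\mmntest} \le \tfrac{\esterr^2}{12}\normP\renprm^2$ yields
\[
\bPr{|\mmntest - \expectation{\mmntest}| > \tfrac{\esterr}{2}\normP\renprm}
\le \frac{\esterr^2\,\normP\renprm^2/12}{(\esterr/2)^2\,\normP\renprm^2}
= \frac{1}{3},
\]
so a single run of $\mmntest$ produces a $(1\pm\esterr)$-multiplicatively accurate estimate of $\normP\renprm$ with probability at least $2/3$.

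Next I would run $t = 18\log(1/\prerr)$ independent instances of the estimator, each on its own fresh $\poisson(n)$ sample, so that the total number of draws is a sum of $t$ independent $\poisson(n)$ variables concentrated around $18\,n\log(1/\prerr)$, and I would let $\mntestmdn$ be the median of the $t$ resulting estimates. The key observation is that the median lands in the target interval $[(1-\esterr)\normP\renprm,\,(1+\esterr)\normP\renprm]$ whenever a strict majority of the copies do: if more than half of the values are at most $(1+\esterr)\normP\renprm$ then so is their median, and symmetrically for the lower endpoint, so a majority of $(1\pm\esterr)$-accurate copies forces an accurate median.

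Finally, to control the failure probability I would let $Y$ count the copies that fall outside the target interval. Each copy fails independently with probability at most $1/3$, hence $\expectation{Y} \le t/3$, and $\mntestmdn$ can fail only when $Y \ge t/2$, i.e.\ when $Y$ exceeds its mean by at least $t/6$. Hoeffding's inequality for a sum of independent Bernoulli indicators then gives $\bPr{Y \ge t/2} \le \exp(-2(t/6)^2/t) = \exp(-t/18)$, which equals $\prerr$ for the chosen $t = 18\log(1/\prerr)$. The argument is essentially routine; the only points that need mild care are the sample-count bookkeeping under Poisson sampling (the total number of draws over the $t$ copies is random but concentrates at $18\,n\log(1/\prerr)$) and phrasing the median-selection step precisely enough that a majority of accurate copies provably forces the median to be accurate.
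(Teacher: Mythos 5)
Your proposal is correct and follows essentially the same route as the paper's own proof: Chebyshev's inequality (combined with the bias bound via the triangle inequality) gives a per-run failure probability of $1/3$, and the median of $t = 18\log(1/\prerr)$ independent runs is then controlled by Hoeffding's inequality, yielding $\exp(-t/18) = \prerr$. The only difference is cosmetic—you phrase the majority-implies-accurate-median step and the Poisson sample-count bookkeeping slightly more explicitly than the paper does.
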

\begin{proof}
By Chebyshev's Inequality
\begin{align*}
\bPr{\left| \mmntest - \normP \renprm\right| > \esterr\, \normP \renprm} 
\leq \bPr{\left| \mmntest  - \expectation \mmntest  \right| > \frac \esterr 2\, \normP \renprm } 
\leq \frac 13.
\end{align*}
To reduce the probability of error to $\prerr$, we use the estimate $\mmntest$
repeatedly for $O(\log(1/\prerr))$ independent samples $\XoN$ and take the estimate 
$\mntestmdn$ to be the \emph{sample median} of the resulting estimates\footnote{This technique is often referred to as the {\it median trick}.}. Specifically, 
let $\mntests 1, ..., \mntests t$ denote $t$-estimates of $\normP \renprm$ obtained
by applying $\mmntest$ to independent sequences $\XoN$, and let
$\indicator_{\cE_i}$ be the indicator 
function of the event $\cE_i = \{| \mntests i - \normP \renprm| > \esterr\, \normP \renprm\}$. By the analysis 
above we have $\expectation{\indicator_{\cE_i}} \leq 1/3$ and hence by Hoeffding's inequality
\[
\bPr{\sum_{i=1}^t \indicator_{\cE_i} > \frac t2} \leq \exp (- t/18).
\]
The claimed bound follows on choosing $t=18\log(1/\prerr)$ and noting that if more than half of 
$\mntests 1, ..., \mntests t$ satisfy $| \mntests i - \normP \renprm| \leq \esterr\, \normP \renprm$, then their median must also satisfy the same condition.
\end{proof}

In the remainder of the
section, we bound the bias and the variance for our estimators when the number of samples $\nsmp$ are of the appropriate order.
Denote by $\estrenemp$, $\estrenunb$, and $\estrenplndt$, respectively, the empirical estimator $\frac 1 {1-\renprm}\log \estmmnemp$, 
the bias-corrected estimator $\frac 1 {1-\renprm}\log \estmmnunb$, and the polynomial approximation estimator
$\frac 1 {1-\renprm}\log \estmmnplndt$. We begin by analyzing the performances of $\estrenemp$ and $\estrenunb$
and build-up on these steps to analyze $\estrenplndt$.

\subsection{Performance of empirical estimator}
\label{sec:empirical}
The empirical estimator was presented in~\eqref{eqn:empirical}. Using 
the Poisson sampling recipe given above, we derive upper bound
for the sample complexity of the empirical estimator by 
bounding its bias and variance.
The resulting bound 
for $\alpha>1$ is given in Theorem~\ref{t:upper_bounds_arbitrary} and
for $\alpha<1$ in Theorem~\ref{t:upper_bounds_alpha_small}. 

\begin{theorem}\label{t:upper_bounds_arbitrary}
For $\renprm>1$, $0<\esterr<1/2$, and $0<\prerr<1$, the estimator $\estrenemp$ satisfies
\[
\sam_\renprm^{\estrenemp}(\absz, \esterr, \prerr) \leq
\Order_\renprm\left(\frac{\absz}{\min(\esterr^{1/(\renprm -1)},\esterr^2
)
}\log
  \frac1\prerr\right),
\]
for all $\absz$ sufficiently large.
\end{theorem}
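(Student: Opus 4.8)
The plan is to instantiate the general recipe of Lemma~\ref{l:bias-variance-PAC}. Since an additive accuracy of $\esterr$ in $\rental\dP$ corresponds to a multiplicative accuracy of $1\pm\Theta_\renprm(\esterr)$ in $\normP\renprm$ (as noted in Section~\ref{s:estimators}), it suffices to exhibit a Poisson-sampling rate $\nsmp$ under which the empirical power-sum estimator $\estmmnemp$ satisfies $|\expectation{\estmmnemp}-\normP\renprm|\le\frac{\esterr}{2}\normP\renprm$ and $\variance{\estmmnemp}\le\frac{\esterr^2}{12}\normP\renprm^2$ (with $\esterr$ rescaled by an $\renprm$-dependent constant, which does not affect the exponents). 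Lemma~\ref{l:bias-variance-PAC} then upgrades this to a $(1\pm\esterr)$-estimate with confidence $1-\prerr$ using $\Order(\nsmp\log(1/\prerr))$ samples, so the whole task reduces to finding the smallest $\nsmp$ making the bias and variance conditions hold. Throughout I work under Poisson sampling, where the multiplicities $\Mltsmb\sim\psns{\lambda_\smb}$ are independent with $\lambda_\smb=\nsmp\dPx$, and write $\estmmnemp=\nsmp^{-\renprm}\sum_\smb \Mltsmb^\renprm$.

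For the bias, I would write $|\expectation{\estmmnemp}-\normP\renprm|\le \nsmp^{-\renprm}\sum_\smb\bigl|\expectation{\Mltsmb^\renprm}-\lambda_\smb^\renprm\bigr|$ and apply the $\renprm>1$ case of Lemma~\ref{l:bound_Poisson_moments2} termwise. Summing the resulting bound using $\sum_\smb\lambda_\smb=\nsmp$ and $\sum_\smb\lambda_\smb^{\renprm-1/2}=\nsmp^{\renprm-1/2}\normP{\renprm-1/2}$ yields a bias of at most
\[
\renprm\,2^{\renprm}\,\nsmp^{1-\renprm}+\renprm(2^{\renprm}+1)\,\nsmp^{-1/2}\,\normP{\renprm-1/2}.
\]
Using $\normP\renprm\ge \absz^{1-\renprm}$ from \eqref{e:renyi_bound}, the first term is $\le\frac{\esterr}{4}\normP\renprm$ once $\nsmp=\Omega_\renprm(\absz\,\esterr^{-1/(\renprm-1)})$, and using $\normP{\renprm-1/2}\le\absz^{1/2}\normP\renprm$ from Lemma~\ref{lem:bnd_moments} (the case $\beta=1/2$), the second term is $\le\frac{\esterr}{4}\normP\renprm$ once $\nsmp=\Omega_\renprm(\absz\,\esterr^{-2})$. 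Taking the larger of the two requirements already produces the claimed rate $\nsmp=\Order_\renprm\bigl(\absz/\min(\esterr^{1/(\renprm-1)},\esterr^2)\bigr)$.

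The variance is where the main difficulty lies. By independence, $\variance{\estmmnemp}=\nsmp^{-2\renprm}\sum_\smb\variance{\Mltsmb^\renprm}$, and the temptation is to use $\variance{\Mltsmb^\renprm}\le\expectation{\Mltsmb^{2\renprm}}\le 2^{2\renprm+2}\max\{\lambda_\smb,\lambda_\smb^{2\renprm}\}$ from Lemma~\ref{l:bound_Poisson_moments}. This fails: the $\lambda_\smb^{2\renprm}$ term sums to $\nsmp^{2\renprm}\normP{2\renprm}$, and $\normP{2\renprm}\le\normP\renprm^2$ is only $\Order(\normP\renprm^2)$, not $\Order(\esterr^2\normP\renprm^2)$. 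The fix is to capture the correct delta-method scaling $\variance{\Mltsmb^\renprm}=\Order_\renprm(\lambda_\smb^{2\renprm-1})$ for large $\lambda_\smb$: I would bound $\variance{\Mltsmb^\renprm}\le\expectation{(\Mltsmb^\renprm-\lambda_\smb^\renprm)^2}$, invoke the inequality $|x^\renprm-y^\renprm|\le\renprm|x-y|(x^{\renprm-1}+y^{\renprm-1})$ already used in Lemma~\ref{l:bound_Poisson_moments2}, and then apply Cauchy--Schwarz together with the Poisson fourth central moment $\Order(\lambda+\lambda^2)$ and Lemma~\ref{l:bound_Poisson_moments} applied to $\expectation{\Mltsmb^{4\renprm-4}}$. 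This gives $\variance{\Mltsmb^\renprm}\le C_\renprm\max\{\lambda_\smb,\lambda_\smb^{2\renprm-1}\}$. Summing, the dominant piece is $\nsmp^{-2\renprm}\cdot C_\renprm\,\nsmp^{2\renprm-1}\normP{2\renprm-1}=C_\renprm\,\nsmp^{-1}\normP{2\renprm-1}$, and since Lemma~\ref{lem:bnd_moments} (the case $\beta=\renprm-1$) gives $\normP{2\renprm-1}\le\absz^{1-1/\renprm}\normP\renprm^2$, the variance condition holds once $\nsmp=\Omega_\renprm(\absz^{1-1/\renprm}/\esterr^2)$, which is dominated by the bias requirement.

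Combining, the Poisson rate $\nsmp=\Order_\renprm\bigl(\absz/\min(\esterr^{1/(\renprm-1)},\esterr^2)\bigr)$ simultaneously controls bias and variance, so Lemma~\ref{l:bias-variance-PAC} (with the Poisson-approximation reduction preceding it) yields an estimator of sample complexity $\Order_\renprm\bigl(\tfrac{\absz}{\min(\esterr^{1/(\renprm-1)},\esterr^2)}\log\tfrac1\prerr\bigr)$, as claimed. I expect the only delicate step to be the sharp variance bound; the bias and the final assembly are routine once Lemmas~\ref{l:bound_Poisson_moments2} and~\ref{lem:bnd_moments} are in hand.
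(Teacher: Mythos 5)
Your overall scaffolding is exactly the paper's: Poisson sampling, a bias bound and a variance bound for the empirical power-sum estimate, then Lemma~\ref{l:bias-variance-PAC} (with the median trick) to convert these into a high-confidence multiplicative estimate. Your bias analysis coincides with the paper's line for line: termwise application of Lemma~\ref{l:bound_Poisson_moments2}, then $\normP\renprm\ge\absz^{1-\renprm}$ from \eqref{e:renyi_bound} for the $\nsmp^{1-\renprm}$ term and $\normP{\renprm-1/2}\le\absz^{1/2}\normP\renprm$ from Lemma~\ref{lem:bnd_moments} for the other, giving the two requirements whose maximum is the claimed rate; this is where the theorem's $\min(\esterr^{1/(\renprm-1)},\esterr^2)$ comes from in the paper as well. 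Where you genuinely depart is the variance. The paper needs no fourth moments and no delta-method argument: since $z^{\renprm}$ is convex, Jensen gives $\expectation{\Mltx^\renprm}\ge\npsmb^\renprm$, hence
\begin{equation*}
\variance{\Mltx^\renprm}\;\le\;\expectation{\Mltx^{2\renprm}}-\npsmb^{2\renprm}\;\le\;\left|\expectation{\Mltx^{2\renprm}}-\npsmb^{2\renprm}\right|,
\end{equation*}
and the right side is controlled by re-applying Lemma~\ref{l:bound_Poisson_moments2} at order $2\renprm$, yielding the per-symbol bound $O_\renprm\bigl(\npsmb+\npsmb^{2\renprm-1/2}\bigr)$. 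Summing, and using Lemma~\ref{lem:bnd_moments} with $\beta=\renprm-1/2$ (so $\normP{2\renprm-1/2}\le\absz^{(\renprm-1)/(2\renprm)}\normP\renprm^2$), the paper gets variance requirements $\nsmp=\Omega_\renprm(\absz/\esterr^{2/(2\renprm-1)})$ and $\nsmp=\Omega_\renprm(\absz^{1-1/\renprm}/\esterr^4)$, both absorbed by the bias requirement once $\absz$ is large --- which is precisely why the theorem carries the ``for all $\absz$ sufficiently large'' caveat. Your route aims for the sharper per-symbol scaling $\max\{\npsmb,\npsmb^{2\renprm-1}\}$, which (when established) improves the $\esterr$-dependence of the variance condition to $\esterr^{-2}$; the price is Cauchy--Schwarz and fourth-moment bookkeeping that the paper's one-line Jensen trick avoids entirely.

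There is, however, a gap in that variance step as you sketch it, in the regime $1<\renprm<5/4$. Your bound is $\variance{\Mltx^\renprm}\le\renprm^2\sqrt{\expectation{(\Mltx-\npsmb)^4}}\cdot\sqrt{\expectation{(\Mltx^{\renprm-1}+\npsmb^{\renprm-1})^4}}$, with the first factor $O(\max\{\npsmb,\npsmb^2\})$ and the second controlled via Lemma~\ref{l:bound_Poisson_moments} applied to $\expectation{\Mltx^{4\renprm-4}}\le 2^{4\renprm-2}\max\{\npsmb,\npsmb^{4\renprm-4}\}$. When $4\renprm-4<1$, i.e.\ $\renprm<5/4$, this product evaluates to $O_\renprm(\npsmb^{3/2})$ for $\npsmb\ge1$ and $O_\renprm(\npsmb^{2\renprm-3/2})$ for $\npsmb<1$, and in both cases this \emph{exceeds} the claimed $\max\{\npsmb,\npsmb^{2\renprm-1}\}$ (note $2\renprm-1<3/2$ and $2\renprm-3/2<1$ there). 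So the advertised per-symbol bound does not follow from the derivation you give in that regime; the bound itself is true, but proving it needs either a case split (for $\npsmb\le1$ simply use $\variance{\Mltx^\renprm}\le\expectation{\Mltx^{2\renprm}}\le 2^{2\renprm+2}\npsmb$, and for $\npsmb\ge1$ a localization of $\Mltx$ near its mean rather than crude Cauchy--Schwarz) or the paper's Jensen trick. The gap is benign for the theorem: even the weaker exponents $\npsmb^{3/2}$ and $\npsmb^{2\renprm-3/2}$, summed and passed through Lemma~\ref{lem:bnd_moments} and \eqref{e:renyi_bound}, give variance requirements that are sublinear in $\absz$ and hence dominated by the bias requirement for $\absz$ large. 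But as written, the step you yourself flag as the only delicate one is exactly the place where the argument, taken literally, does not deliver what it claims.
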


\begin{proof} 

Denote $\npsmb \ed \nsmp \dPs \smb$. For $\renprm >1$,
we bound the bias of the power sum estimator as follows:
\begin{align}
 \left| \expectation{\frac{\sum_\smb  \Mltsmb^\renprm}{\nsmp^\renprm}} - \normP \renprm\right| &\stackrel {(a)} {\leq}
\frac{1}{\nsmp^\renprm}\sum_\smb \left|\expectation{\Mltsmb^{\renprm}} -\npsmb^\renprm\right| 
\nonumber\\
&\stackrel {(b)} {\leq}  \frac{\renprm}{\nsmp^\renprm} \sum_\smb\left(2^{\renprm}\npsmb+ (2^{\renprm}+1)\npsmb^{\renprm - 1/2}\right)
\nonumber\\
&= \frac{\renprm 2^{\renprm} }{\nsmp^{\renprm-1}} + \frac{\renprm (2^{\renprm}+1)}{\sqrt{\nsmp}} \normP{\renprm-1/2}
\nonumber\\
&\stackrel {(c)} {\leq} \renprm \left( 2^{\renprm}\left(\frac{\absz}{\nsmp}\right)^{\renprm -1} +
  (2^{\renprm}+1)\sqrt{\frac{k}{\nsmp}} \right) \normP \renprm 
\nonumber
\\
& \leq 2 \renprm 2^\renprm \left[
\left(\frac{k}{n}\right)^{\renprm-1}
+\left(\frac{k}{n}\right)^{1/2}\right]\normP \renprm, 
\label{e:central_mean_bound1}
\end{align}
where $(a)$ is from the triangle inequality, $(b)$ from
Lemma~\ref{l:bound_Poisson_moments2}, and $(c)$ follows from 
Lemma \ref{lem:bnd_moments} and \eqref{e:renyi_bound}.
Thus, the bias of the estimator is less than 
$\delta(\renprm -1)\normP \renprm/2$ when 
\[
n \geq k \cdot \left(\frac{8\alpha2^{\alpha}}{\delta (\alpha - 1)}  \right)^{\max(2,1/(\alpha-1))}.
\]
Similarly,  to bound the variance, using independence of
multiplicities:
\begin{align}
  \variance{ \sum_x\frac{\Mltx^\renprm}{\nsmp^\renprm}}
  \nonumber
&=  \frac{1}{\nsmp^{2\renprm}}\sum_x \variance{ \Mltx^\renprm}
  \nonumber\\
&= \frac{1}{\nsmp^{2\renprm}}\sum_x  \expectation{ \Mltx^{2\renprm}}
-\Sparen{\EE \Mltx^\renprm}^2  
  \nonumber\\
&\stackrel {(a)} {\leq} \frac{1}{\nsmp^{2\renprm}} \sum_x  \expectation{ \Mltx^{2\renprm}} - \npsmb^{2\renprm}
  \nonumber\\
& {\leq} \frac{1}{\nsmp^{2\renprm}} \sum_x \left|\expectation{\Mltx^{2\renprm}} - \npsmb^{2\renprm}\right| 
  \nonumber\\
& {\leq}  \frac{2\renprm}{\nsmp^{2\renprm}} \sum_\smb\left(2^{2\renprm}\npsmb+ (2^{2\renprm}+1)\npsmb^{2\renprm - 1/2}\right)
\label{e:bound_var_arbitrary}
\\
&= \frac{2\renprm 2^{2\renprm} }{\nsmp^{2\renprm-1}} + \frac{2\renprm (2^{2\renprm}+1)}{\sqrt{\nsmp}} \normP{2\renprm-1/2}
\nonumber
\\
&\stackrel{(c)}{\leq }2\renprm 2^{2\renprm} \left(\frac{k }{\nsmp}\right)^{2\renprm-1}  \normP{\renprm}^2 
+ 2\renprm( 2^{2\renprm}+1) \left(\frac{k^{\frac{\alpha-1}{\alpha}}}{n} \right)^{1/2}  \normP{\renprm}^2 
\ignore{& \leq 2 \renprm 2^\renprm \left(\frac{k}{n}\right)^{1/2}, 
+ 
2 \renprm 2^\renprm \left(\frac{k}{n}\right)^{\renprm-1}, 
\nonumber
\\
& \stackrel {(c)} {\leq} 2\renprm \left(
  c\left(\frac{\absz}{\nsmp}\right)^{2\renprm -1} +
  (c+1)\sqrt{\frac{k}{\nsmp}} \right) \normP{\renprm}^2 }
\nonumber
\end{align}
$(a)$ is from Jensen's inequality since $z^{\renprm}$ is convex and
$\expectation{\Mltx}=\npsmb$, $(c)$ follows from 
Lemma \ref{lem:bnd_moments}. 
Thus, the variance is less than $\delta^2(\alpha-1)^2\normP \renprm^2/12$ when
\[
n \geq k \cdot \max \left\} \left( \frac{48\alpha 2^{2\alpha}}{\delta^2 (\alpha-1)^2}\right)^{1/(2\alpha-1)}, 
 \left( \frac{96\alpha2^{2\alpha}}{k^{1/2\renprm}\delta^2 (\alpha-1)^2}\right)^{2}  \right\}
= k \cdot \left( \frac{48\alpha 2^{2\alpha}}{\delta^2 (\alpha-1)^2}\right)^{1/(2\alpha-1)},
\] 
where the equality holds for $k$ sufficiently large. The theorem follows by using 
Lemma~\ref{l:bias-variance-PAC}. 
\end{proof}

\begin{theorem}\label{t:upper_bounds_alpha_small}
For $\renprm<1$, $\esterr>0$, and $0<\prerr<1$,
the estimator $\estrenemp$ satisfies
\[
\sam_\renprm^{\estrenemp}(\absz, \esterr, \prerr) \leq\Order\left(\frac{\absz^{1/\renprm}}{\esterr^{\, \max\{4,\, 2/\renprm\}}}\log \frac1\prerr\right).
\]
\end{theorem}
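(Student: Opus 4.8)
The plan is to reuse the Poisson-sampling recipe already established for the $\renprm>1$ case. By the Poisson-approximation lemma and Lemma~\ref{l:bias-variance-PAC} (the median trick), it suffices to produce a sample size $\nsmp_0$ under which the empirical power-sum estimator $\estmmnemp=\nsmp^{-\renprm}\sum_\smb\Mltsmb^\renprm$, computed from independent $\Mltsmb\sim\psns{\npsmb}$ with $\npsmb\ed\nsmp\dPx$, has bias at most $\frac{\esterr}{2}\normP\renprm$ and variance at most $\frac{\esterr^2}{12}\normP\renprm^2$; feeding $\nsmp_0$ into the median trick then boosts the confidence to $1-\prerr$ at the cost of the $\log(1/\prerr)$ factor. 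Throughout I use that $\normP\renprm\ge1$ by \eqref{e:renyi_bound} and that $\normP\renprm=\nsmp^{-\renprm}\sum_\smb\npsmb^\renprm$.

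For the bias, the triangle inequality together with the $\renprm\le1$ branch of Lemma~\ref{l:bound_Poisson_moments2} gives
\[
\bigl|\expectation{\estmmnemp}-\normP\renprm\bigr|\le\frac{1}{\nsmp^\renprm}\sum_\smb\bigl|\expectation{\Mltsmb^\renprm}-\npsmb^\renprm\bigr|\le\frac{1}{\nsmp^\renprm}\sum_\smb\min\{\npsmb^\renprm,\npsmb^{\renprm-1}\}\le\frac{\absz}{\nsmp^\renprm},
\]
where the last step uses that each summand is at most $1$ (take $\npsmb^\renprm\le1$ when $\npsmb\le1$ and $\npsmb^{\renprm-1}\le1$ when $\npsmb>1$). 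Since $\normP\renprm\ge1$, the bias drops below $\frac{\esterr}{2}\normP\renprm$ as soon as $\nsmp\ge(2\absz/\esterr)^{1/\renprm}$, which already accounts for the $\absz^{1/\renprm}$ factor.

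The variance is the one place where the $\renprm>1$ argument genuinely breaks: there one subtracts $\npsmb^{2\renprm}$ from $\expectation{\Mltsmb^{2\renprm}}$ via Jensen, but for $\renprm<1$ the map $z\mapsto z^\renprm$ is concave and that step reverses. I would instead bound the variance by the second moment about the constant $\npsmb^\renprm$ and exploit concavity directly: using $|a^\renprm-b^\renprm|\le|a-b|^\renprm$ for $a,b\ge0$ and then Jensen on the concave map $t\mapsto t^\renprm$,
\[
\variance{\Mltsmb^\renprm}\le\expectation{(\Mltsmb^\renprm-\npsmb^\renprm)^2}\le\expectation{|\Mltsmb-\npsmb|^{2\renprm}}\le\bigl(\expectation{(\Mltsmb-\npsmb)^2}\bigr)^\renprm=\npsmb^\renprm.
\]
Summing over $\smb$ and using independence, $\variance{\estmmnemp}\le\nsmp^{-2\renprm}\sum_\smb\npsmb^\renprm=\normP\renprm/\nsmp^\renprm$, which lies below $\frac{\esterr^2}{12}\normP\renprm^2$ (again by $\normP\renprm\ge1$) once $\nsmp\ge(12/\esterr^2)^{1/\renprm}$. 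This is the main obstacle, and the subadditivity-plus-Jensen estimate is the clean way around it.

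Combining the two requirements, $\nsmp_0=\Order(\absz^{1/\renprm}\esterr^{-2/\renprm})$ suffices, and the median trick yields $\sam_\renprm^{\estrenemp}(\absz,\esterr,\prerr)=\Order(\absz^{1/\renprm}\esterr^{-2/\renprm}\log(1/\prerr))$, which implies the stated bound since $2/\renprm\le\max\{4,2/\renprm\}$ and $\esterr<1$. The literal exponent $\max\{4,2/\renprm\}$ is what emerges if one instead mirrors the $\renprm>1$ proof, bounding $\variance{\Mltsmb^\renprm}$ through $\expectation{\Mltsmb^{2\renprm}}$ and Lemma~\ref{l:bound_Poisson_moments2}: for $\renprm>1/2$ the resulting term sums to $\nsmp^{-1/2}\normP{2\renprm-1/2}\le\nsmp^{-1/2}\normP\renprm^2$ and forces $\nsmp=\Omega(\esterr^{-4})$, exactly the regime where $\max\{4,2/\renprm\}=4$.
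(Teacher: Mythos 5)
Your proof is correct, and it departs from the paper's precisely at the variance step, which is indeed the crux. The paper writes $\variance{\Mltsmb^\renprm}$ as $(\expectation{\Mltsmb^{2\renprm}}-\npsmb^{2\renprm})+(\npsmb^{2\renprm}-\Sparen{\EE \Mltsmb^\renprm}^2)$, bounds the first piece by $0$ when $\renprm\le 1/2$ (concavity) and via Lemma~\ref{l:bound_Poisson_moments2} and \eqref{e:bound_var_arbitrary} when $\renprm>1/2$ --- this is exactly what produces the $\sqrt{\absz/\nsmp}\,\normP\renprm^2$ term and hence the $4$ in the exponent --- and bounds the second piece using the bias estimate \eqref{e:central_mean_bound2} together with concavity of $z^\renprm$. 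Your chain $\variance{\Mltsmb^\renprm}\le\expectation{(\Mltsmb^\renprm-\npsmb^\renprm)^2}\le\expectation{|\Mltsmb-\npsmb|^{2\renprm}}\le(\variance{\Mltsmb})^\renprm=\npsmb^\renprm$ --- minimality of the mean for squared deviation, subadditivity of $t\mapsto t^\renprm$ on $[0,\infty)$, then Jensen for the same concave map --- is valid and replaces both pieces at once, with no case split at $\renprm=1/2$ and no appeal to the Poisson moment lemma. It also buys a genuinely stronger conclusion: $\variance{\estmmnemp}\le\normP\renprm/\nsmp^\renprm$, hence a sample complexity of $\Order(\absz^{1/\renprm}\esterr^{-2/\renprm}\log(1/\prerr))$, which is strictly better than the theorem's bound when $1/2<\renprm<1$ (exponent $2/\renprm<4$), coincides with it when $\renprm\le 1/2$, and implies the stated bound for $\esterr<1$. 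The bias analysis and the surrounding scaffolding (Poisson sampling, Lemma~\ref{l:bias-variance-PAC}, the median trick) match the paper's, and your closing diagnosis of where the paper's $\max\{4,2/\renprm\}$ exponent comes from is accurate.
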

\begin{proof}
 For $\renprm <1$, once again we take a recourse to 
Lemma~\ref{l:bound_Poisson_moments2} to bound the bias as follows:
\begin{align*}
 \left|\expectation{ \frac{\sum_\smb  \Mltsmb^\renprm}{\nsmp^\renprm}} - \normP \renprm\right| &\leq 
\frac{1}{\nsmp^\renprm} \sum_\smb \left|\expectation{\Mltsmb^\renprm} - \npsmb^\renprm\right| 
\\
&\leq  \frac{1}{\nsmp^\renprm} \sum_\smb\min \left( \npsmb^{\renprm}, \npsmb^{\renprm-1}\right)
\\
&\leq  \frac{1}{\nsmp^\renprm} \left[\sum_{\smb \notin A} \npsmb^{\renprm} +  \sum_{\smb\in A} \npsmb^{\renprm-1}\right],
\end{align*}
for every subset $A \subset [k]$. Upon choosing $A = \{\smb: \npsmb \geq 1\}$, we get
\begin{align}
 \left|\expectation{ \frac{\sum_\smb  \Mltsmb^\renprm}{\nsmp^\renprm}} - \normP \renprm\right| 
&\leq 2\frac{\absz}{\nsmp^\renprm}
\nonumber
\\
&= 2\left(\frac{k^{1/\renprm}}{\nsmp}\right)^\renprm
\nonumber
\\
&\leq 2\normP \renprm \left(\frac{k^{1/\renprm}}{\nsmp}\right)^\renprm,
\label{e:central_mean_bound2}
\end{align}
where the last inequality uses~\eqref{e:renyi_bound}. For bounding the
variance, note that 
\begin{align}
  \variance{ \sum_\smb\frac{\Mltx^\renprm}{\nsmp^\renprm}}
  \nonumber
&=  \frac{1}{\nsmp^{2\renprm}}\sum_\smb \variance{ \Mltx^\renprm}
  \nonumber\\
&= \frac{1}{\nsmp^{2\renprm}}\sum_\smb  \expectation{ \Mltx^{2\renprm}} -\Sparen{\EE \Mltx^\renprm}^2 
  \nonumber\\
& = \frac{1}{\nsmp^{2\renprm}} \sum_\smb\expectation{\Mltx^{2\renprm} }- \npsmb^{2\renprm} + \frac{1}{\nsmp^{2\renprm}} \sum_\smb \npsmb^{2\renprm} - \Sparen{\EE \Mltx^\renprm}^2.
\label{eqn:var-empirical}
\end{align}
Consider the first term on the right-side. For $\alpha \leq 1/2$, it is bounded above by $0$ since $z^{2\renprm}$ is concave in $z$, and for $\alpha > 1/2$ the bound in \eqref{e:bound_var_arbitrary} and Lemma~\ref{lem:bnd_moments} applies to give
\begin{align}
\frac{1}{\nsmp^{2\renprm}} \sum_x  \expectation{\Mltx^{2\renprm}} -\npsmb^{2\renprm}
 \leq 2\renprm\left( \frac{c }{\nsmp^{2\renprm-1}} +  (c+1)\sqrt{\frac{\absz}{\nsmp}} \right)\normP{\renprm}^2.
\label{eqn:var-empirical1}
\end{align}
For the second term, we have 
\begin{align}
\sum_{\smb}\npsmb^{2\renprm} - \Sparen{\EE \Mltx^\renprm}^2 &= \sum_{\smb}\left(\npsmb^{\renprm} - \expectation{\Mltx^\renprm}\right)\left(\npsmb^{\renprm} + \expectation{ \Mltx^\renprm}\right)
\nonumber
\\
&\stackrel{(a)}{\leq} 2 \nsmp^\renprm\normP \renprm  \left(\frac{k^{1/\renprm}}{\nsmp}\right)^\renprm \sum_{\smb}\left(\npsmb^{\renprm} + \expectation{ \Mltx^\renprm}\right)
\nonumber
\\
&\stackrel{(b)}{\leq} 4\nsmp^{2\renprm}\normP \renprm^2
\left(\frac{k^{1/\renprm}}{\nsmp}\right)^\renprm, 
\nonumber
\end{align}
where $(a)$ is from~\eqref{e:central_mean_bound2}
and $(b)$ from the concavity of $z^\renprm$ in $z$. The
proof is completed by combining the two bounds above and using
Lemma~\ref{l:bias-variance-PAC}. 
\end{proof}

In fact, we show in the appendix that the dependence on $\absz$ implied by the previous two results are optimal.
\begin{theorem}
Given a sufficiently small $\esterr$, the sample complexity 
$S_{\renprm}^{\estrenemp}(\absz,\esterr, \prerr)$
of the empirical estimator $\estrenemp$  is bounded below as
\begin{align*}
S_{\renprm}^{\estrenemp}(\absz,\esterr, 0.9)= \begin{cases}
\Omega\left(\frac{\absz}{\esterr}\right), \quad &\renprm>1,\\
\Omega\left(\frac{\absz^{1/\renprm}}{\esterr^{1/\renprm}}\right), \quad &\renprm<1.
\end{cases}
\end{align*}
\end{theorem}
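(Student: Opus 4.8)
The plan is to exploit the fact that the empirical estimator $\estmmnemp$ is a fixed, systematically biased statistic: on a well-chosen $\dP$ its expectation differs from $\normP\renprm$ by a multiplicative factor that, for $\nsmp$ below the claimed threshold, is too large to yield a $\delta$-accurate R\'enyi estimate, while its fluctuations about this biased mean are negligible. Since (Section~\ref{s:estimators}) an additive $\delta$-error in $\estrenemp$ corresponds to a multiplicative $2^{\pm\delta(1-\renprm)}$-error in $\estmmnemp$, it suffices to exhibit, for each $\nsmp$ below the target, a distribution on at most $\absz$ symbols whose \emph{relative bias} $|\expectation{\estmmnemp}-\normP\renprm|/\normP\renprm$ exceeds a constant multiple of $\delta$, and then to show by Chebyshev's inequality that $\estmmnemp$ lies within half its bias of its mean with probability at least $0.9$. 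Throughout I work under Poisson sampling, writing $\lambda_x=\nsmp\dPx$ so that the $\Mltsmb\sim\poisson(\lambda_x)$ are independent, and I set $g(\lambda):=|\expectation{\Mltsmb^\renprm}/\lambda^\renprm-1|$, the per-symbol relative bias, an overestimate for $\renprm>1$ and an underestimate for $\renprm<1$ by Jensen's inequality.

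For $\renprm>1$ the hard instance is the uniform distribution on $[\absz]$, where $\lambda_x\equiv\nsmp/\absz$ and $\normP\renprm=\absz^{1-\renprm}$; since every symbol has the same relative bias, the overall relative bias equals $g(\nsmp/\absz)$. The crux here is a matching \emph{lower} bound on $g$ for large $\lambda$: a second-order expansion of $x^\renprm$ (whose second derivative $\renprm(\renprm-1)x^{\renprm-2}$ is positive) together with $\expectation{(\Mltsmb-\lambda)^2}=\lambda$ gives $g(\lambda)\gtrsim\renprm(\renprm-1)/(2\lambda)$; quantitatively I would restrict the expectation to the event $\{|\Mltsmb-\lambda|\le\sqrt\lambda\}$, on which $x^{\renprm-2}\ge c_\renprm\lambda^{\renprm-2}$ and the truncated second moment is still $\Theta(\lambda)$. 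This makes the relative bias $\gtrsim\renprm(\renprm-1)\absz/(2\nsmp)$, which exceeds $\delta(\renprm-1)$ whenever $\nsmp\lesssim\renprm\absz/(2\delta)$; as $g$ is decreasing in $\lambda$, failure persists for \emph{all} $\nsmp$ below this value, giving the $\Omega(\absz/\delta)$ bound.

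For $\renprm<1$ the uniform instance is too easy (its relative bias is only $\Theta(\absz/\nsmp)$, yielding merely $\Omega(\absz/\delta)$), so I use a \emph{heavy--light} construction: one heavy symbol of probability $1-s$ and $\absz-1$ light symbols of equal probability $s/(\absz-1)$, with $s\approx\lambda_0\absz/\nsmp$ chosen so each light symbol has a \emph{fixed} Poisson mean $\lambda_0$. The heavy symbol has $\lambda_{\mathrm{heavy}}\approx\nsmp$, contributes $\approx1$ to $\normP\renprm$ and negligible bias, whereas the light block contributes power sum $\approx\lambda_0^\renprm\absz/\nsmp^\renprm$ and bias equal to $g(\lambda_0)$ times that, where $g(\lambda_0)$ is a fixed positive constant (strict concavity of $x^\renprm$ against the nondegenerate $\poisson(\lambda_0)$). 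Hence the relative bias is $\approx g(\lambda_0)\,\lambda_0^\renprm(\absz/\nsmp^\renprm)/(1+\lambda_0^\renprm\absz/\nsmp^\renprm)$, which is a constant multiple of $\delta$ exactly when $\absz/\nsmp^\renprm\asymp\delta$, i.e. $\nsmp\asymp(\absz/\delta)^{1/\renprm}$. Taking $\nsmp$ a small constant factor below this makes $\absz/\nsmp^\renprm\gg\delta$, so the relative bias approaches the constant $g(\lambda_0)$; choosing $\lambda_0$ small but fixed forces $g(\lambda_0)>1-\renprm$, which is the conversion threshold for a $\delta$-error in $\estrenemp$. For $\nsmp$ as small as $\Theta(\absz)$ one reverts to the uniform instance (whose relative bias then exceeds $g(\lambda_0)$), covering the whole range and yielding the $\Omega((\absz/\delta)^{1/\renprm})$ bound.

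It remains to verify concentration, which in both cases follows from independence of the $\Mltsmb$ and the Poisson-moment estimates of Lemmas~\ref{l:bound_Poisson_moments} and~\ref{l:bound_Poisson_moments2}. For the uniform instance $\variance{\estmmnemp}=\absz\,\variance{\Mltsmb^\renprm}/\nsmp^{2\renprm}$, and the delta-method estimate $\variance{\Mltsmb^\renprm}\asymp\renprm^2\lambda^{2\renprm-1}$ shows the standard deviation is smaller than the bias by a factor $\Theta(1/\sqrt{\absz\delta})\to0$; for the heavy--light instance the light block has mean $\Theta(\delta)$ and standard deviation $\Theta(\delta/\sqrt\absz)$ in $\estmmnemp$, while the heavy symbol concentrates with relative fluctuation $O(1/\sqrt\nsmp)$, so the total standard deviation is $\order(\delta)$. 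Chebyshev's inequality then forces $|\estmmnemp-\normP\renprm|>\tfrac12|\expectation{\estmmnemp}-\normP\renprm|$ with probability approaching $1$, so the R\'enyi estimate errs by more than $\delta$ with probability at least $0.9$. I expect the principal obstacle to be the \emph{lower} bound on the bias for $\renprm>1$: unlike the upper-bound analysis, where Lemma~\ref{l:bound_Poisson_moments2} supplies a one-sided estimate, here I must establish $\expectation{\Mltsmb^\renprm}-\lambda^\renprm\ge c_\renprm\lambda^{\renprm-1}$ uniformly for large $\lambda$, which requires controlling the truncated second moment of a Poisson variable for noninteger $\renprm$, where no exact Touchard-polynomial identity is available.
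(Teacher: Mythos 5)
Your proposal is correct in substance and uses the same hard instances as the paper (the uniform distribution for $\renprm>1$; one heavy symbol plus a uniform light block for $\renprm<1$, though your light symbols sit at a fixed Poisson mean $\lambda_0$ while the paper's sit at mean $\esterr\nsmp^{\renprm}/\absz$), but it establishes the error by a genuinely different mechanism. The paper's Appendix B proof never computes the expected bias of $\estmmnemp$ at all: working directly with multinomial sampling, it shows that with probability at least $0.9$ a constant fraction of the multiplicities deviate from their mean by a standard deviation in a fixed direction (Slud's inequality for $\renprm>1$, Bohman and Anderson--Samuel inequalities for the light block when $\renprm<1$), controls the number of such symbols by McDiarmid's inequality, and converts those deviations into a one-sided bound on $\estmmnemp$ via the rearrangement fact that $(\dP_i-y)^\renprm+(\dP_j+y)^\renprm$ is monotone in $y$; it also runs a separate probability-one phase for $\nsmp\ll\absz$ (resp.\ $\nsmp^\renprm\ll\absz$). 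Your route---lower-bound the bias by a truncated second-order Taylor expansion of the Poisson fractional moment, then show fluctuations are $o(\text{bias})$ by Chebyshev---is more elementary: it needs no binomial anti-concentration inequalities and no McDiarmid, and the same variance-versus-bias comparison serves both ranges of $\renprm$. What the paper's approach buys in exchange is that it analyzes the estimator exactly as defined (fixed $n$, multinomial counts) and gets the high-probability statement directly.

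Two points need real care before your sketch becomes a proof. First, depoissonization: the theorem concerns $\estrenemp(X^n)$ for fixed $n$, while all your moment computations assume independent $\poisson(\nsmp\dPx)$ multiplicities. High failure probability under Poisson sampling does not transfer to exact $n$ by conditioning (the event that the Poisson sample size equals $n$ has probability $\Theta(1/\sqrt{n})$), so you must either redo the bias and variance bounds with ${\tt Bin}(\nsmp,\dPx)$ marginals---your Taylor argument goes through essentially verbatim, and the multinomial dependence only enters the Chebyshev step, where negative association of the counts gives $\Var[\sum_x(\Mltsmb/\nsmp)^\renprm]\le\sum_x\Var[(\Mltsmb/\nsmp)^\renprm]$---or supply an explicit transfer lemma; the paper sidesteps this entirely. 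Second, the truncation in your bias lower bound for $\renprm>1$ must be justified: restricting the expectation to $E=\{|\Mltsmb-\lambda|\le\sqrt\lambda\}$ is legitimate only after subtracting the linear term, i.e., write $\expectation{\Mltsmb^\renprm}-\lambda^\renprm=\expectation{R(\Mltsmb)}$ with $R(x)=x^\renprm-\lambda^\renprm-\renprm\lambda^{\renprm-1}(x-\lambda)$, note $R\ge 0$ everywhere by convexity, and only then drop $E^c$; without this one-sidedness the negative lower-tail contribution (of order $\lambda^\renprm$ in the worst case) swamps the $\lambda^{\renprm-1}$ term you are extracting. With the window exactly $\sqrt\lambda$ the truncated second moment is $\Theta(\lambda)$ only via a CLT-type estimate; a window $C\sqrt\lambda$ combined with a fourth-moment/Cauchy--Schwarz bound is the painless choice. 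Finally, you neither need nor should claim monotonicity of $g$: the bound $g(\lambda)\gtrsim 1/\lambda$ holds for all $\lambda$ above a fixed constant, and for smaller $\lambda$ the relative bias is even larger (of order $\lambda^{1-\renprm}$), so the whole range $\nsmp\lesssim\absz/\esterr$ is covered by two regimes, much as the paper covers it with its probability-one phase.
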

While the performance of the empirical estimator is limited by these bounds, below we exhibit estimators that beat these bounds and thus outperform the empirical estimator. 

\subsection{Performance of bias-corrected estimator for integral $\renprm$}
\label{sec:integral}
To reduce the sample complexity for integer orders $\renprm>1$ to below $\absz$, 
we follow the development of Shannon entropy estimators.
Shannon entropy was first estimated via an empirical
estimator, analyzed in, for instance,~\cite{AntosK01}.
However, with $o(\absz)$ samples, the bias of the empirical estimator
remains high~\cite{Paninski04b}.
This bias is reduced by the Miller-Madow correction~\cite{Miller95,
Paninski04b}, but even then, $\Order(\absz)$ samples are needed for a
reliable Shannon-entropy estimation~\cite{Paninski04b}.

Similarly, we reduce
  the bias for \renyi estimators using \emph{unbiased 
estimators} for $\dPx^\renprm$ for integral $\renprm$.
We first describe our estimator, and in
Theorem~\ref{t:upper_bounds_integer} we show that for $1<\renprm\in\integers$,
$\estmmnunb$ estimates $\normP\renprm$ using
$\Order(\absz^{1-1/\renprm}/\esterr^2)$ samples. 
Theorem~\ref{t:lower_bounds_integer2} in Section~\ref{s:lower_bounds}
shows that this number is optimal up to constant factors.

Consider the unbiased estimator for $\normP\renprm$ given by
\[
\estmmnunb
\ed
\sum_x \frac{\flnpwrss\Mltsmb\renprm}{n^\renprm},
\]
which is unbiased since by Lemma~\ref{lem:fall_fac},
\[
\expectation{
\estmmnunb}
=
\sum_{\smb}
\expectation{\frac{\flnpwrss\Mltsmb\renprm}{\nsmp^\renprm}}
=
\sum_{\smb}
p_\smb^\renprm
=
\normP\renprm.
\]
Our {\em bias-corrected} estimator for $\rental \dP$ is 
\[
{\hat H}_\renprm = \frac 1 {1-\renprm} \log \estmmnunb.
\]

The next result provides a bound for the number of samples 
needed for the bias-corrected
estimator.
\begin{theorem}\label{t:upper_bounds_integer}
For an integer $\renprm>1$,  any $\esterr>0$, and $0<\prerr<1$,
the estimator $\estrenunb$ satisfies
\[
\sam_\renprm^{\estrenunb}(\absz, \esterr, \prerr) \leq O\left(\frac{\absz^{(\renprm-1)/\renprm}}{\esterr^2}\log \frac1\prerr\right).
\]
\end{theorem}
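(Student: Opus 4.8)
The plan is to leverage the fact that $\estmmnunb$ is, by Lemma~\ref{lem:fall_fac}, an \emph{exactly unbiased} estimator of $\normP\renprm$, so that the bias hypothesis of Lemma~\ref{l:bias-variance-PAC} holds with room to spare and the whole argument collapses to a single variance computation. Working under Poisson sampling, the multiplicities $\Mltsmb$ are independent with $\Mltsmb\sim\poisson(\npsmb)$ and $\npsmb=\nsmp\dPx$, so the variance splits as $\variance{\estmmnunb}=\nsmp^{-2\renprm}\sum_\smb \variance{\flnpwrss{\Mltsmb}{\renprm}}$. The goal is then to show that $\nsmp=\Order_\renprm(\absz^{(\renprm-1)/\renprm}/\esterr^2)$ suffices to make this at most $\esterr^2\normP\renprm^2/12$, after which the median trick of Lemma~\ref{l:bias-variance-PAC} supplies the $\log(1/\prerr)$ factor.

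For the per-symbol variance I would invoke Lemma~\ref{lem:fall_fac} with $\lambda=\npsmb$ and $\flnpwrprm=\renprm$, giving $\variance{\flnpwrss{\Mltsmb}{\renprm}}\le \npsmb^\renprm\bigl((\npsmb+\renprm)^\renprm-\npsmb^\renprm\bigr)$. Expanding the difference by the binomial theorem and dropping the cancelled top term yields $\variance{\flnpwrss{\Mltsmb}{\renprm}}\le \sum_{j=0}^{\renprm-1}\binom{\renprm}{j}\renprm^{\,\renprm-j}\,\npsmb^{\renprm+j}$. Summing over $\smb$ and using $\npsmb=\nsmp\dPx$ so that $\sum_\smb \npsmb^{\renprm+j}=\nsmp^{\renprm+j}\normP{\renprm+j}$, I obtain
\[
\variance{\estmmnunb}\le \sum_{j=0}^{\renprm-1}\binom{\renprm}{j}\renprm^{\,\renprm-j}\,\nsmp^{\,j-\renprm}\,\normP{\renprm+j}.
\]
The crucial input is Lemma~\ref{lem:bnd_moments}, applicable since $0\le j\le\renprm-1<\renprm$, which replaces each higher power sum by $\normP{\renprm+j}\le \absz^{(\renprm-1)(\renprm-j)/\renprm}\normP\renprm^2$.

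Substituting this bound and setting $t=\renprm-j$ collects everything into
\[
\frac{\variance{\estmmnunb}}{\normP\renprm^2}\le \sum_{t=1}^{\renprm}\binom{\renprm}{t}\Paren{\frac{\renprm\,\absz^{(\renprm-1)/\renprm}}{\nsmp}}^{t},
\]
which is exactly the point where the sublinear exponent $(\renprm-1)/\renprm$ emerges: once $\nsmp\ge \renprm\,\absz^{(\renprm-1)/\renprm}$ the bracketed quantity is at most one, every term with $t\ge 2$ is dominated by the linear ($t=1$) term, and the whole sum is at most $2^\renprm\,\renprm\,\absz^{(\renprm-1)/\renprm}/\nsmp$. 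Forcing this below $\esterr^2/12$ gives the claimed $\nsmp=\Order_\renprm(\absz^{(\renprm-1)/\renprm}/\esterr^2)$, and Lemma~\ref{l:bias-variance-PAC} finishes the proof. I do not expect a genuine obstacle here: the estimator was engineered to annihilate the bias, so the only content is bookkeeping in the variance, and the one substantive inequality—relating $\normP{\renprm+j}$ to $\normP\renprm^2$—is handed to us by Lemma~\ref{lem:bnd_moments}. The sole point demanding care is verifying that the $t=1$ term genuinely dominates, so that no stray power of $\absz$ larger than $\absz^{(\renprm-1)/\renprm}$ survives; this is why the regime $\nsmp\gtrsim \absz^{(\renprm-1)/\renprm}$ must be secured before summing the geometric-like series.
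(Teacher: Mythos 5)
Your proposal is correct and follows essentially the same route as the paper's proof: exact unbiasedness from Lemma~\ref{lem:fall_fac}, a per-symbol variance bound $\npsmb^\renprm\bigl((\npsmb+\renprm)^\renprm-\npsmb^\renprm\bigr)$ expanded binomially, the power-sum comparison $\normP{\renprm+r}\le \absz^{(\renprm-1)(\renprm-r)/\renprm}\normP{\renprm}^2$ from Lemma~\ref{lem:bnd_moments}, and the median trick of Lemma~\ref{l:bias-variance-PAC}. The only difference is cosmetic bookkeeping: the paper absorbs $\binom{\renprm}{r}\renprm^{\renprm-r}$ into $(\renprm^2)^{\renprm-r}$ and treats the sum as geometric, while you keep the binomial coefficients and bound the sum by $2^\renprm$ times its linear term once $\nsmp\ge\renprm\absz^{(\renprm-1)/\renprm}$.
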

\ignore{
\begin{remark*}
Since we use Poisson sampling to simplify the analysis, we chose to remove the bias of $\Fnc_\alpha$
under Poisson sampling to obtain $\Gnc_\alpha$. Alternatively, we can remove the bias under the usual sampling and define
\[
\Gnc_\renprm(\Xon) = \frac{1}{1-\renprm} \log \sum_{\smb} \frac{\left(\Mltsmb\right)_\renprm}{(\nsmp)_\renprm};
\]
a similar result as Theorem~\ref{t:upper_bounds_integer} can be obtained for this estimator, albeit with a slightly different proof.
\end{remark*}}
\begin{proof}
Since the bias is 0, we only need to bound the variance to use
Lemma~\ref{l:bias-variance-PAC}. 
To that end, we have
\begin{align}
\variance{\frac{{\sum_{\smb}
      \flnpwrrp{\Mltsmb}}}{\nsmp^{\renprm}}}
&= \frac1{n^{2\renprm}}\sum_{\smb}\variance{ \flnpwrrp{\Mltsmb}}
\nonumber
\\
&\le \frac1 {\nsmp^{2\renprm}}\sum_{\smb}\left(\npsmb^{\renprm}(\npsmb+\renprm)^{\renprm}-\npsmb^{2\renprm}\right)
\nonumber
\\
&= \frac1 {\nsmp^{2\renprm}}\sum_{r=0}^{\renprm-1}\sum_{\smb} {\renprm\choose
  r}\renprm^{\renprm-r}{\npsmb}^{\renprm+r}
\nonumber
\\
&= \frac1 {\nsmp^{2\renprm}}\sum_{r=0}^{\renprm-1}n^{\renprm+r} {\renprm\choose
  r}\renprm^{\renprm-r}\normP {\renprm+r},
\label{e:variance_integer}
\end{align}
where the inequality uses Lemma \ref{lem:fall_fac}. It follows from Lemma \ref{lem:bnd_moments} that
\begin{align*}
\frac1{n^{2\renprm}}\frac{\variance{ \sum_{\smb}
      \flnpwrrp{\Mltsmb} }}{\normP {\renprm}^2}
&\le \frac1{n^{2\renprm}} \sum_{r=0}^{\renprm-1}n^{\renprm+r} {\renprm\choose
  r}\renprm^{\renprm-r}\frac{\normP {\renprm+r}}{\normP {\renprm}^2}\\
&\le \sum_{r=0}^{\renprm-1}n^{r-\renprm} {\renprm\choose
  r}\renprm^{\renprm-r}\absz^{{(\renprm-1)(\renprm-r)/\renprm}}\\
& \le \sum_{r=0}^{\renprm-1}\left(\frac{\renprm^2\absz^{{(\renprm-1)/\renprm}}}\nsmp\right)^{\renprm-r},
\end{align*}
which is less than $\esterr^2/12$ if $\renprm^2\absz^{1-1/\renprm}/n$,
for all $\esterr$ sufficiently small.   
Applying Lemma~\ref{l:bias-variance-PAC} completes the proof.
\end{proof}

\subsection{The polynomial approximation estimator} 
\label{sec:nonintegral}
Concurrently with a conference version of this paper \cite{AcharyaOTT15}, a polynomial
approximation based approach was proposed  in~\cite{JiaoVW14} and~\cite{WuY14} 
for estimating \emph{additive functions} of the form $\sum_\smb f(\dPx)$. 
As seen in Theorem~\ref{t:upper_bounds_integer}, 
polynomials of probabilities have succinct unbiased
estimators. Motivated by this observation, instead of estimating $f$,
these papers consider
estimating a polynomial that is a \emph{good approximation} to $f$.  
The underlying heuristic for this approach
is that the difficulty in estimation arises from small probability symbols
since empirical estimation is nearly optimal for symbols with large probabilities.
On the other hand, there is no loss in estimating a polynomial approximation of 
the function of interest for symbols with small probabilities.

In particular, \cite{JiaoVW14} considered the problem of estimating 
power sums $\normP \renprm$ up to additive accuracy and showed that
$\Order\left(\absz^{1/\renprm}/\log \absz\right)$ samples suffice for $\renprm<1$.
Since $\normP \renprm \geq 1$ for $\renprm <1$, this in turn implies
a similar sample complexity for estimating $\rental \dP$ for $\renprm <1$.
On the other hand, $\alpha>1$, the power
sum $\normP\renprm\leq 1$ and can be small ($e.g.$, it is $\absz^{1-\alpha}$ for
the uniform distribution). In fact, we show in the Appendix that 
additive-accuracy estimation of power sum is easy for $\renprm>1$
and has a constant sample complexity.
Therefore, additive guarantees for
estimating the power sums are insufficient to estimate the \renyi. 
Nevertheless, our
analysis of the polynomial estimator below shows that it attains the
$\Order(\log \absz)$ improvement in sample complexity over the empirical estimator
even for the case $\renprm>1$.

We first give a brief description of the polynomial estimator of
\cite{WuY14} 
and then in Theorem~\ref{t:upper_bounds_arbitrary_poly} prove that for
$\renprm>1$ the sample complexity of $\estmmnplndt$ is
$\Order(\absz/\log \absz)$. For completeness, we also include a proof
for the case $\renprm<1$, which is slightly different from the one in \cite{JiaoVW14}.

Let $N_1, N_2$ be independent $\poid{\nsmp}$ random variables. We
consider Poisson sampling with two set of samples drawn from $\dP$,
first of size $N_1$ and the second $N_2$.  Note that the total number
of samples $N=N_1+N_2\sim\poid{2\nsmp}$.  The polynomial approximation
estimator uses different estimators for different estimated values of
symbol probability $\dPx$.  We use the first $N_1$ samples for
comparing the symbol probabilities $\dPx$ with $\thr/\nsmp$ and the
second is used for estimating $\dPx^\renprm$. Specifically, denote by
$\Mltsmb$ and $\Mltsmb^\prime$ the number of appearances of $\smb$ in
the $N_1$ and $N_2$ samples, respectively.  Note that both $\Mltsmb$
and $\Mltsmb^\prime$ have the same distribution
$\poid{\nsmp\dPx}$. Let $\thr$ be a threshold, and $\degree$ be the
degree chosen later. Given a threshold $\thr$, the polynomial
approximation estimator is defined as follows:
\begin{itemize}
\item[]{$\Mltsmb^\prime > \thr$:} For all such symbols, estimate
  $\dPx^\renprm$ using the empirical estimate
  $(\Mltsmb/\nsmp)^{\renprm}$.

\item[]{$\Mltsmb^\prime\le\thr$:} Suppose $\polyx
  =\sum_{m=0}^\degree\coefm x^m$ is the polynomial satisfying
  Lemma~\ref{lem:timan}. Since we expect $\dPx$ to be less than
  $2\thr/\nsmp$ in this case, we estimate $\dPx^\renprm$ using an
  unbiased estimate of\footnote{Note that if $|q(x) - x^\renprm|<\ep$
    for all $x\in [0,1]$, then $|\eta^\renprm q(x/eta) - x^\renprm|<
    \eta^\renprm\ep$ for all $x\in [0,\eta]$.}  $(2\thr/\nsmp)^\renprm
  \polys {\nsmp \dPx/2\thr}$, namely
\[
\left(\sum_{m=0}^\degree \frac{\coefm(2\thr)^{\renprm -m} \flnpwrss
  \Mltsmb m}{\nsmp^\renprm}\right).
\]  
\end{itemize}

Therefore, for a given $\thr$ and $d$ the combined estimator
$\estmmnplndt$ is
\[
\estmmnplndt \ed \sum_{\smb: \Mltsmb^\prime \le
  \thr}\left(\sum_{m=0}^\degree \frac{\coefm(2\thr)^{\renprm -m}
  \flnpwrss \Mltsmb m}{\nsmp^\renprm}\right) + \sum_{\smb
  :\Mltsmb^\prime > \thr} \left(\frac{\Mltsmb}{\nsmp}\right)^\renprm.
\]
Denoting by $\hdPx$ the estimated probability of the symbol $\smb$,
note that the polynomial approximation estimator relies on the
empirical estimator when $\hdPx>\thr/\nsmp$ and uses the the
bias-corrected estimator for estimating each term in the polynomial
approximation of $\dPx^\renprm$ when $\hdPx\le\thr/\nsmp$.

We derive upper bounds for the sample complexity of the polynomial
approximation estimator.
\begin{theorem}\label{t:upper_bounds_arbitrary_poly}
For $\renprm>1$, $\esterr>0$, $0<\prerr<1$, there exist constants
$c_1$ and $c_2$ such that the estimator $\estmmnplndt$ with $\thr =
c_1\log \nsmp$ and $\degree=c_2\log\nsmp$ satisfies
\[
\sam_\renprm^{\estmmnplndt}(\absz, \esterr, \prerr) \leq
\Order\left(\frac {\absz} { \log\absz}\frac{\log
  (1/\prerr)}{\esterr^{1/\renprm}}\right).
\]
\end{theorem}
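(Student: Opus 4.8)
The plan is to invoke the general recipe of Lemma~\ref{l:bias-variance-PAC}: under Poisson sampling it suffices to show that, at a sample size $\nsmp=\Order\!\left(\absz/(\esterr^{1/\renprm}\log\absz)\right)$, the power-sum estimator $\estmmnplndt$ has bias at most $\tfrac\esterr2\normP\renprm$ and variance at most $\tfrac{\esterr^2}{12}\normP\renprm^2$; the factor $\log(1/\prerr)$ then comes for free from the median trick built into that lemma. Write $\npsmb=\nsmp\dPx$. Two features of Poisson sampling drive the whole analysis: the selection counts $\Mltsmbp$ and the estimation counts $\Mltsmb$ are independent and each $\poid{\npsmb}$-distributed, and the per-symbol contributions are independent across $\smb$, so the total variance is a sum of per-symbol variances.

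For the bias, independence of $\Mltsmbp$ and $\Mltsmb$ lets me write the expectation of the $\smb$-th summand as
\[
\bPr{\Mltsmbp>\thr}\,\expectation{\big(\Mltsmb/\nsmp\big)^\renprm}
+\bPr{\Mltsmbp\le\thr}\,\big(2\thr/\nsmp\big)^\renprm\,\polys{\npsmb/2\thr},
\]
where the second expectation is evaluated exactly using the unbiasedness of falling factorials (Lemma~\ref{lem:fall_fac}). Subtracting $\dPx^\renprm$ and summing, I would split the symbols at $\dPx=2\thr/\nsmp$. For $\dPx\le 2\thr/\nsmp$ the polynomial is evaluated on $[0,1]$, so its discrepancy from $\dPx^\renprm$ is at most $(2\thr/\nsmp)^\renprm\,c_\renprm/\degree^{2\renprm}$ by the approximation bound \eqref{e:poly_approx_error}, while the rare empirical term is handled by Lemma~\ref{l:bound_Poisson_moments2}. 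Summing the polynomial error over the at most $\absz$ such symbols and substituting $\thr=c_1\log\nsmp$, $\degree=c_2\log\nsmp$ makes the logarithmic factors collapse, leaving a quantity of order $\absz\,\nsmp^{-\renprm}(\log\nsmp)^{-\renprm}$; comparing this against the worst-case lower bound $\normP\renprm\ge\absz^{1-\renprm}$ from \eqref{e:renyi_bound} is precisely what yields $\nsmp\gtrsim\absz/(\esterr^{1/\renprm}\log\absz)$ and fixes the $\esterr^{-1/\renprm}$ exponent. For $\dPx>2\thr/\nsmp$ the empirical estimator dominates, and since $\npsmb>2\thr=\Omega(\log\nsmp)$, Lemma~\ref{l:bound_Poisson_moments2} gives a per-symbol \emph{relative} bias of order $1/\npsmb=\Order(1/\log\nsmp)$, so these contributions sum to at most $\Order(\normP\renprm/\log\nsmp)$.

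The delicate term, and what I expect to be the main obstacle, is the \emph{mismatch} in which a high-probability symbol, $\dPx>2\thr/\nsmp$ and hence $\npsmb>2\thr$, nevertheless has $\Mltsmbp\le\thr$ and is fed to the polynomial. Here $\polys{\npsmb/2\thr}$ is evaluated at an argument exceeding $1$, where, since the coefficients of $\polyx$ can be as large as $(\sqrt2+1)^\degree$ by \eqref{eqn:coeff-bound}, its value can be enormous. The size of this contribution is controlled by the product of that growth with the Poisson lower tail $\bPr{\Mltsmbp\le\thr}$, which decays exponentially in $\npsmb$. The crux is that with $\degree=c_2\log\nsmp$ the coefficient blow-up is only $(\sqrt2+1)^\degree=\nsmp^{c_2\log(\sqrt2+1)}$, a fixed power of $\nsmp$, whereas with $\thr=c_1\log\nsmp$ the tail probability is $\nsmp^{-\Omega(c_1)}$ once $\npsmb\ge2\thr$; choosing $c_1$ large relative to $c_2$ therefore renders the mismatch negligible. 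I would execute this by bounding $|\polys{y}|\le(\degree+1)(\sqrt2+1)^\degree y^\degree$ for $y\ge1$, splitting the range of $\npsmb$ dyadically, and applying the Chernoff bound for the Poisson lower tail on each piece.

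Finally the variance, which by cross-symbol independence is $\sum_\smb\variance{\cdot}$. The empirical branch is bounded exactly as in Theorem~\ref{t:upper_bounds_arbitrary} via \eqref{e:bound_var_arbitrary} and Lemma~\ref{lem:bnd_moments}; because it is selected only when $\npsmb\gtrsim\thr$ is large, its summed variance is subdominant at the target $\nsmp$. For the polynomial branch I would bound the variance of each monomial $\coefm(2\thr)^{\renprm-m}\flnpwrss{\Mltsmb}{m}/\nsmp^\renprm$ using the falling-factorial variance of Lemma~\ref{lem:fall_fac} and then combine the $\degree+1$ terms. The key structural point is that in the regime where the polynomial branch dominates one has $\npsmb=\Order(\thr)$, so the factor $(2\thr)^{\renprm-m}$ built into each monomial cancels the $\thr^m$-type growth of $\variance{\flnpwrss{\Mltsmb}{m}}$; what remains is the squared coefficient $\coefm^2\le\nsmp^{2c_2\log(\sqrt2+1)}$ together with a benign geometric factor in $m$. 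Keeping the resulting exponent of $\nsmp$ below $1$ by taking $c_2$ small enough (compatibly with the large $c_1$ demanded by the mismatch step) bounds the summed variance by $\tfrac{\esterr^2}{12}\normP\renprm^2$ at $\nsmp=\Order(\absz/(\esterr^{1/\renprm}\log\absz))$. Collecting the bias and variance bounds and invoking Lemma~\ref{l:bias-variance-PAC} gives the claimed sample complexity.
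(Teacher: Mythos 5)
Your proposal is correct and shares the paper's skeleton: Poisson sampling, the bias/variance-plus-median-trick reduction of Lemma~\ref{l:bias-variance-PAC}, the split of symbols at $\dPx=2\thr/\nsmp$, the approximation bound \eqref{e:poly_approx_error}, the falling-factorial identities of Lemma~\ref{lem:fall_fac}, the empirical-branch bounds from Lemma~\ref{l:bound_Poisson_moments2}, and the coefficient bound \eqref{eqn:coeff-bound}; your accounting of which term forces $\nsmp\gtrsim\absz/(\esterr^{1/\renprm}\log\absz)$ matches the paper's \eqref{eqn:bias} exactly. Where you genuinely diverge is in the treatment of the random branch assignment. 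The paper never confronts your ``delicate'' mismatch term: it chooses $\thr=4\log\nsmp$ so that, by Poisson tail bounds and a union bound over symbols, with probability at least $1-\prerr$ every symbol with $\Mltsmbp>\thr$ has $\dPx>\thr/(2\nsmp)$ and every symbol with $\Mltsmbp\le\thr$ has $\dPx\le2\thr/\nsmp$, and it conditions on this event throughout; since the selection counts $\Mltsmbp$ are independent of the estimation counts $\Mltsmb$, this conditioning costs only an additive $\prerr$ in error probability while leaving the bias and variance computations untouched, so the polynomial is never evaluated outside $[0,1]$. You instead work unconditionally, decomposing each per-symbol expectation over the selection event and killing the mismatch by playing the coefficient blow-up $(\sqrt2+1)^\degree=\nsmp^{\Theta(c_2)}$ against the Poisson lower tail $\nsmp^{-\Omega(c_1)}$ with $c_1$ large relative to $c_2$ --- a trade that is sound (your dyadic/Chernoff argument goes through), but note it must be repeated inside the variance, where the per-symbol variance should be handled through a conditional second-moment decomposition because the summand mixes the two branches through the indicator $\indicator(\Mltsmbp\le\thr)$. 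Both routes work. The paper's conditioning is shorter and imposes no relation between the constants beyond $\degree\le\thr$; your unconditional computation is more laborious but more self-contained, and it makes explicit something the paper's write-up glosses over --- the sums in its bias and variance bounds are indexed by the random sets $\{\smb:\Mltsmbp\le\thr\}$, and it is precisely the conditioning step that licenses replacing them with deterministic probability ranges.
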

\begin{proof} 
We follow the approach in \cite{WuY14} closely.  Choose
$\tau={c^*}{\log \nsmp}$ such that with probability at least
$1-\prerr$ the events $\Mltsmb^\prime>\tau$ and
$\Mltsmb^\prime\le\tau$ do not occur for all symbols $\smb$ satisfying
$\dPx\leq \tau/(2n)$ and $\dPx>2\tau/n$, respectively.  Or
equivalently, with probability at least $1-\prerr$ all symbols $\smb$
such that $\Mltsmb^\prime>\tau$ satisfy $\dPx>\tau/(2n)$ and all
symbols such that $\Mltsmb^\prime\le\tau$ satisfy $\dPx\le2\tau/n$.
We condition on this event throughout the proof.  For concreteness, we
choose $c^* = 4$, which is a valid choice for $n>20\log(1/\prerr)$ by
the Poisson tail bound and the union bound.

Let $\polyx = \sum_{m=0}^\degree \coefm x^m$ satisfy the polynomial
approximation error bound guaranteed by Lemma~\ref{lem:timan}, $i.e.$,
\begin{align}
\label{e:q_approx_bound}
\max_{x\in(0,1)} |\polyx -
x^{\renprm}|<{c_\renprm}/{\degree^{2\renprm}}
\end{align}
To bound the bias of $\estmmnplndt$, note first that for
$\Mltsmb^\prime<\thr$ (assuming $\dPx\le 2\thr/nsmp$ and estimating
$(2\thr/\nsmp)^\renprm q(n\dPx/2\thr)$)
\begin{align}
\left|\expectation{\sum_{m=0}^\degree \frac{\coefm(2\thr)^{\renprm -m}
    \flnpwrss \Mltsmb m}{\nsmp^\renprm}} - \dPx^\renprm\right| &=
\left|\sum_{m=0}^\degree \coefm
\left(\frac{2\thr}{\nsmp}\right)^{\renprm -m} \dPx^m -
\dPx^\renprm\right| \nonumber\\ &=
\frac{(2\thr)^{\renprm}}{\nsmp^{\renprm}}\left|\sum_{m=0}^\degree
\coefm \left(\frac{\nsmp\dPx}{2\thr}\right)^m -
\left(\frac{\nsmp\dPx}{2\thr}\right)^\renprm\right|\nonumber\\ & =
\frac{(2\thr)^{\renprm}}{\nsmp^{\renprm}}\left|q\left(\frac{\nsmp\dPx}{2\thr}\right)-
\left(\frac{\nsmp\dPx}{2\thr}\right)^\renprm\right|\nonumber\\ &<
\frac{(2\thr)^\renprm
  c_\renprm}{(n\degree^2)^\renprm}, \label{eqn:bias-poly-small}
\end{align}
where the last inequality uses~\eqref{e:q_approx_bound} and
$\nsmp\dPx/(2\thr)\le 1$.

For $\Mltsmb^\prime>\thr$, the bias of empirical part of the power sum
is bounded as
\begin{align*}
\left|\expectation{\left(\frac{\Mltsmb}{\nsmp}\right)^{\renprm}} -
\dPx^\renprm\right| & \stackrel{(a)}{\le} \renprm c\frac{
  \dPx}{\nsmp^{\renprm-1}} + \renprm(c+1)\frac{\dPx^{\renprm -\frac
    12}}{\sqrt \nsmp}\\ &\stackrel{(b)}{\le} \renprm c\frac
    {\dPx^\renprm} {(\thr/2)^{\renprm-1}} +
    \renprm(c+1)\frac{\dPx^{\renprm}}{\sqrt {\thr/2}},
\end{align*}
and $(a)$ is from Lemma~\ref{l:bound_Poisson_moments2} and $(b)$ from
$\dPx>\thr/(2\nsmp)$, which holds when $\Mltsmb^\prime>\thr$.  Thus,
by using the triangle inequality and applying the bounds above to each
term, we obtain the following bound on the bias of $\estmmnplndt$:
\begin{align}
\left|\expectation{\mmntest} - \normP\renprm\right| & \le
\frac{\absz(2\thr)^\renprm c_\renprm}{(n\degree^2)^\renprm} +
\renprm\normP \renprm\left[ \frac c{{(\thr/2)}^{\renprm-1}} + \frac
  {c+1}{\sqrt{\thr/2}}\right]\nonumber \\ &\le \normP
\renprm\left[c_\renprm\left(\frac{k\cdot2\thr}{\nsmp\degree^2}\right)^\renprm
  + \frac {\renprm c}{{(\thr/2)}^{\renprm-1}} + \frac
  {\renprm(c+1)}{\sqrt{\thr/2}}\right]
\label{eqn:bias},
\end{align}
where the last inequality uses~\eqref{e:renyi_bound}.

For variance, independence of multiplicities under Poisson sampling
gives
\begin{align}
\label{eqn:var-poly}
\variance{\mmntest} = \sum_{\smb: \Mltsmb^\prime \le
  \thr}\Var\left(\sum_{m=0}^\degree \frac{\coefm(2\thr)^{\renprm -m}
  \flnpwrss {\Mltsmb} m}{\nsmp^\renprm}\right)\ + \sum_{\smb :\Mltsmb
  > \thr} \Var\left(\frac{\Mltsmb}{\nsmp}\right)^\renprm.
\end{align}
Let $a=\max_m |a_m|$.  By Lemma~\ref{lem:fall_fac}, for any $\smb$
with $\dPx\le 2\thr/n$,
\begin{align}
\Var\left(\sum_{m=0}^\degree \frac{\coefm(2\thr)^{\renprm -m}
  \flnpwrss \Mltsmb m}{\nsmp^\renprm}\right) &\le
a^2\degree^2\max_{1\le m\le d} \left\{
\frac{(2\thr)^{2\renprm-2m}}{\nsmp^{2\renprm}}{\Var \flnpwrss \Mltsmb
  m}\right\}\nonumber\\ &\stackrel{(a)}{\le}a^2\degree^2\max_{1\le
  m\le d} \left\{
\frac{(2\thr)^{2\renprm-2m}}{\nsmp^{2\renprm}}(\nsmp\dPx)^m
((\nsmp\dPx+m)^m -\nsmp\dPx^m) \right\}\nonumber
\\ &\stackrel{(b)}{\le} \frac{a^2\degree^2
  (2\thr+\degree)^{2\renprm}}{\nsmp^{2\renprm}},\label{eqn:variance-poly-small}
\end{align}
where $(a)$ is from Lemma~\ref{lem:fall_fac}, and $(b)$ from plugging
$\nsmp\dPx\le 2\thr$.  Furthermore, using similar steps
as~\eqref{e:bound_var_arbitrary} together with
Lemma~\ref{l:bound_Poisson_moments2}, for $\smb$ with $\dPx >
\thr/(2\nsmp)$ we get
\[
\variance{ \left(\frac{\Mltsmb}{\nsmp}\right)^\renprm} \le 2\renprm
c\frac {\dPx^{2\renprm}} {(\thr/2)^{2\renprm-1}} +
2\renprm(c+1)\frac{\dPx^{2\renprm}}{\sqrt {\thr/2}}.
\] 
The two bounds above along with Lemma~\ref{lem:bnd_moments}
and~\eqref{e:renyi_bound} yield
\begin{align}
\label{eqn:var}
\variance{\mmntest} &\le \normP\renprm^2\left[\frac{a^2 \degree^2
    (2\thr+\degree)^{2\renprm}}\nsmp\left(\frac{\absz}{\nsmp}\right)^{2\renprm
    -1} + \frac{2\renprm c} {(\thr/2)^{2\renprm-1}} +
  \frac{2\renprm(c+1)}{\sqrt {\thr/2}} \right].
\end{align}

For $d=\thr/8=\frac12\log\nsmp$, the last terms in~\eqref{eqn:bias}
are $o(1)$ which gives
\[
\left|\expectation{\mmntest} - \normP\renprm\right|
=\normP\renprm\left(c_{\alpha}\left(\frac{32\absz}{(\nsmp \log
  \nsmp)}\right)^\renprm+o(1)\right) .
\]

Recall from~\eqref{eqn:coeff-bound} that
$a<(1+c_{\renprm}/\degree^{2\renprm})(\sqrt 2+1)^\degree$, and
therefore, $a^2=O((\sqrt{2}+1)^{\log n})=n^{c_0}$ for some
$c_0<1$. Using \eqref{eqn:var} we get
\[
\variance{\mmntest} = \Order\left(\normP
\renprm^2\frac{n^{c_0}\log^{2\renprm+2}\nsmp}{\nsmp}\left(\frac{\absz}{\nsmp}\right)^{2\renprm
  -1}\right).
\]
Therefore, the result follows from Lemma~\ref{l:bias-variance-PAC} for
$k$ sufficiently large.
\end{proof}

We now prove an analogous result for $\renprm<1$. 
\begin{theorem}\label{t:upper_bounds_arbitrary_poly_small}
For $\renprm<1$, $\esterr>0$, $0<\prerr<1$, there exist constants
$c_1$ and  $c_2$ such that the estimator $\estmmnplndt$  
with $\thr = c_1\log \nsmp$ and $\degree=c_2\log\nsmp$ satisfies
\[
\sam_\renprm^{\estmmnplndt}(\absz, \esterr, \prerr) \leq
\Order\left(\frac {\absz^{1/\renprm}} { \log\absz}\frac{\log
    (1/\prerr)}{\renprm^2\esterr^{1/\renprm}}\right). 
\]
\end{theorem}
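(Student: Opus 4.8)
The plan is to mirror the proof of Theorem~\ref{t:upper_bounds_arbitrary_poly} for the regime $\renprm>1$, keeping the same two-sample Poisson scheme, the same threshold $\thr=c_1\log\nsmp$ and degree $\degree=c_2\log\nsmp$, and the same high-probability conditioning event on which every symbol with $\Mltsmbp>\thr$ has $\dPx>\thr/(2\nsmp)$ and every symbol with $\Mltsmbp\le\thr$ has $\dPx\le2\thr/\nsmp$. As before I would bound the bias and variance of $\estmmnplndt$ and then invoke Lemma~\ref{l:bias-variance-PAC}. The two structural changes forced by $\renprm<1$ are: (i) for the empirical part I must use the $\renprm\le1$ branch of Lemma~\ref{l:bound_Poisson_moments2}, namely $|\expectation{X^\renprm}-\lambda^\renprm|\le\min\{\lambda^\renprm,\lambda^{\renprm-1}\}$, in place of the $\renprm>1$ branch; and (ii) the conversion from additive to multiplicative error now uses $\normP\renprm\ge1$ from \eqref{e:renyi_bound} rather than $\normP\renprm\ge\absz^{1-\renprm}$.

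For the bias, the polynomial part (symbols with $\dPx\le2\thr/\nsmp$) is unchanged and is bounded exactly as in \eqref{eqn:bias-poly-small} by $(2\thr)^\renprm c_\renprm/(\nsmp\degree^2)^\renprm$ per symbol. For the empirical part (symbols with $\dPx>\thr/(2\nsmp)$), the new ingredient is the chain
\[
\Bigl|\expectation{(\Mltsmb/\nsmp)^\renprm}-\dPx^\renprm\Bigr|\le\frac{1}{\nsmp^\renprm}\min\{(\nsmp\dPx)^\renprm,(\nsmp\dPx)^{\renprm-1}\}=\frac{\dPx^{\renprm-1}}{\nsmp}=\frac{\dPx^\renprm}{\nsmp\dPx}<\frac{2}{\thr}\dPx^\renprm,
\]
where the last step uses $\nsmp\dPx>\thr/2$. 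Summing the polynomial part over at most $\absz$ symbols and the empirical part against $\normP\renprm$, and using $\normP\renprm\ge1$ to absorb the first term, gives the analogue of \eqref{eqn:bias},
\[
\bigl|\expectation{\mmntest}-\normP\renprm\bigr|\le\normP\renprm\left[c_\renprm\left(\frac{2\thr\,\absz^{1/\renprm}}{\nsmp\degree^2}\right)^\renprm+\frac{2}{\thr}\right].
\]
With $\thr,\degree=\Theta(\log\nsmp)$ the $2/\thr$ term is $o(1)$, and the first term is driven below $\esterr/2$ once $\nsmp\gtrsim(\thr/\degree^2)\,\absz^{1/\renprm}\esterr^{-1/\renprm}$.

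For the variance I would reuse the decomposition \eqref{eqn:var-poly}. The polynomial-part variance per symbol is bounded exactly as in \eqref{eqn:variance-poly-small} by $a^2\degree^2(2\thr+\degree)^{2\renprm}/\nsmp^{2\renprm}$ with $a=\max_m|\coefm|$ obeying the coefficient bound \eqref{eqn:coeff-bound}; summing over $\le\absz$ symbols and using $\normP\renprm\ge1$ yields a relative variance of order $\absz\,a^2\degree^2(2\thr+\degree)^{2\renprm}/\nsmp^{2\renprm}$. The empirical-part variance, now handled through Lemma~\ref{l:bound_Poisson_moments} (or the $2\renprm$ instance of Lemma~\ref{l:bound_Poisson_moments2}) together with $\normP{2\renprm}\le\normP\renprm^2$ from Lemma~\ref{lem:bnd_moments}, is $o(1)\normP\renprm^2$, since every surviving symbol has $\lambda=\nsmp\dPx>\thr/2=\Omega(\log\nsmp)$ and so contributes at most a $1/\sqrt{\thr}$ multiple of $\dPx^{2\renprm}$. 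Hence the polynomial part is the binding one.

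The main obstacle, exactly as in the $\renprm>1$ case, is the tension in choosing $\degree$. The coefficient bound \eqref{eqn:coeff-bound} makes $a^2=(\sqrt2+1)^{2\degree}=\nsmp^{c_0}$ with $c_0=2c_2\log(\sqrt2+1)$, so the relative polynomial variance is of order $\absz\,\nsmp^{c_0}\,\mathrm{polylog}(\nsmp)/\nsmp^{2\renprm}$; for this to vanish at the target $\nsmp\asymp\absz^{1/\renprm}$ I need $c_0<\renprm$, i.e. the degree constant $c_2$ must be taken of order $\renprm$. A smaller degree in turn inflates the $1/\degree^2$ factor in the bias constraint to order $1/(\renprm^2\log^2\nsmp)$; combined with $\log\nsmp=\Theta(\tfrac1\renprm\log\absz)$ at $\nsmp\asymp\absz^{1/\renprm}$, this is the source of the inverse-polynomial $\renprm$-dependence (the $1/\renprm^2$ factor) in the final bound. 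Once $\degree=c_2\log\nsmp$ and $\thr=c_1\log\nsmp$ are fixed so that the bias is below $\esterr\normP\renprm/2$ and the variance below $\esterr^2\normP\renprm^2/12$, the claimed complexity $\Order\bigl(\tfrac{\absz^{1/\renprm}}{\log\absz}\cdot\tfrac{\log(1/\prerr)}{\renprm^2\esterr^{1/\renprm}}\bigr)$ follows from Lemma~\ref{l:bias-variance-PAC}.
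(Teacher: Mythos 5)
Your plan coincides with the paper's proof in all of its main lines: the same two-sample Poisson scheme and conditioning event, the same bias treatment (the polynomial part via \eqref{eqn:bias-poly-small}, the empirical part via the $\renprm\le 1$ branch of Lemma~\ref{l:bound_Poisson_moments2} together with $\nsmp\dPx>\thr/2$), the same use of $\normP{\renprm}\ge 1$ from \eqref{e:renyi_bound}, the same variance split \eqref{eqn:var-poly} with the polynomial part controlled by \eqref{eqn:variance-poly-small} and \eqref{eqn:coeff-bound}, and---most importantly---the same resolution of the bias/variance tension in the degree: the paper also takes $\degree = \frac{\renprm}{2}\log\nsmp$ precisely so that $a^2 = \Order(\nsmp^{c_0\renprm})$ with $c_0<1$, and this choice is exactly where its $1/\renprm^2$ factor comes from.

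The one step that would fail as you have written it is the empirical-part variance. For $\renprm<1$ the map $z\mapsto z^\renprm$ is concave, so $\expectation{\Mltx^\renprm}\le \npsmb^\renprm$, and therefore $\variance{\Mltx^\renprm} = \expectation{\Mltx^{2\renprm}} - (\expectation{\Mltx^\renprm})^2$ is \emph{not} bounded above by $\expectation{\Mltx^{2\renprm}} - \npsmb^{2\renprm}$ (for $2\renprm\le 1$ that quantity is in fact nonpositive). Hence the ``$2\renprm$ instance'' of Lemma~\ref{l:bound_Poisson_moments2} controls only part of the variance, and Lemma~\ref{l:bound_Poisson_moments} by itself gives only $\variance{(\Mltx/\nsmp)^\renprm}\le \expectation{(\Mltx/\nsmp)^{2\renprm}} = \Order(\dPx^{2\renprm})$ for surviving symbols, i.e.\ $\Order(1)\normP{\renprm}^2$ in total, which is too weak for Lemma~\ref{l:bias-variance-PAC}. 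What is needed, and what the paper does by borrowing \eqref{eqn:var-empirical} from the proof of Theorem~\ref{t:upper_bounds_alpha_small}, is the two-term decomposition
\[
\variance{\Mltx^\renprm} = \left(\expectation{\Mltx^{2\renprm}} - \npsmb^{2\renprm}\right) + \left(\npsmb^{2\renprm} - (\expectation{\Mltx^\renprm})^2\right),
\]
where the first term is handled as in \eqref{eqn:var-empirical1} (or is nonpositive when $2\renprm\le1$), and the second, dominant term is bounded by $(\npsmb^{\renprm}-\expectation{\Mltx^\renprm})(\npsmb^{\renprm}+\expectation{\Mltx^\renprm}) \le \npsmb^{\renprm-1}\cdot 2\npsmb^{\renprm} \le (4/\thr)\,\npsmb^{2\renprm}$, using the $\renprm\le1$ branch of Lemma~\ref{l:bound_Poisson_moments2}, concavity, and $\npsmb>\thr/2$; summing over symbols and applying $\normP{2\renprm}\le\normP{\renprm}^2$ from Lemma~\ref{lem:bnd_moments} then yields the $o(1)\,\normP{\renprm}^2$ contribution you asserted. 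With this purely local patch your argument is complete and identical to the paper's.
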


\begin{proof}
We proceed as in the previous proof and set $\thr$ to be $4\log n$. 
The contribution to the bias of the estimator 
for a symbol $\smb$
with $\Mltsmb^\prime<\thr$ remains bounded as in~\eqref{eqn:bias-poly-small}.
For a symbol $\smb$ with $\Mltsmb^\prime>\thr$, 
the bias contribution of the empirical estimator
is bounded as 
\begin{align}
\left|\expectation{\left(\frac{\Mltsmb}{\nsmp}\right)^{\renprm}} -
  \dPx^\renprm\right| &
\stackrel{(a)}{\le} \frac{\dPx^{\renprm-1}}{\nsmp}\nonumber\\
&\stackrel{(b)}{\le} \frac {2\dPx^\renprm} {\thr},
\nonumber
\end{align}
where $(a)$ is by Lemma~\ref{l:bound_Poisson_moments2} and $(b)$ 
uses $\dPx>\thr/(2\nsmp)$, which holds if $\Mltsmb^\prime>\thr$.
Thus, we obtain the following bound on the
bias of $\estmmnplndt$: 
\begin{align}
\left|\expectation{\mmntest} - \normP\renprm\right| &
{\le}
\frac{\absz(2\thr)^\renprm c_\renprm}{(n\degree^2)^\renprm}  
+ \frac 2{\thr}\normP \renprm
\nonumber 
\\
&{\le} \normP
\renprm\left[c_\renprm\left(\frac{k^{1/\renprm}\cdot2\thr}{\nsmp\degree^2}\right)^\renprm
+ \frac 2 \thr
\right],
\nonumber
\end{align}
where the last inequaliy is by~\eqref{e:renyi_bound}.

To bound the variance, first note that bound 
\eqref{eqn:variance-poly-small} still holds for $\dPx \leq 2\thr/\nsmp$.
To bound the contribution to the variance from the terms with $\nsmp\dPx>\thr/2$,
we borrow steps from the
proof of Theorem~\ref{t:upper_bounds_alpha_small}. In
particular,~\eqref{eqn:var-empirical} gives
\begin{align}
\variance{ \sum_{\smb:\Mltsmb^\prime>\thr}\frac{\Mltx^\renprm}{\nsmp^\renprm}}
\leq \frac{1}{\nsmp^{2\renprm}}
\sum_{\smb:\Mltsmb^\prime>\thr}\expectation{\Mltx^{2\renprm} 
}- \npsmb^{2\renprm} + \frac{1}{\nsmp^{2\renprm}} \sum_{\smb:\Mltsmb^\prime>\thr}
\npsmb^{2\renprm} - \Sparen{\EE \Mltx^\renprm}^2. 
\end{align}
The first term can be bounded in the manner of \eqref{eqn:var-empirical1}
as
\begin{align}
\frac{1}{\nsmp^{2\renprm}} \sum_{{\smb:\Mltsmb^\prime>\thr}}
 \expectation{\Mltx^{2\renprm}} -\npsmb^{2\renprm} 
{\leq}2\renprm\left( \frac{c }{\nsmp^{2\renprm-1}} +
 (c+1)\frac{1}{\sqrt{\thr/2}} \right)\normP\renprm^2,
\nonumber
\end{align}
For the second term, we have 
\begin{align}
\frac{1}{\nsmp^{2\renprm}}\sum_{{\smb:\Mltsmb^\prime>\thr}}\npsmb^{2\renprm}
- \Sparen{\EE \Mltx^\renprm}^2 
&=
\frac{1}{\nsmp^{2\renprm}}\sum_{\smb:\Mltsmb^\prime>\thr}
\left(\npsmb^{\renprm} -
\expectation{\Mltx^\renprm}\right)\left(\npsmb^{\renprm} +
\expectation{ \Mltx^\renprm}\right)\nonumber\\
&\stackrel{(a)}{\leq} 
\frac{1}{\nsmp^{2\renprm}}\sum_{\smb:\Mltsmb^\prime>\thr}\left(\npsmb^{\renprm-1}\right)\left(2\npsmb^{\renprm}\right) 
\nonumber
\\
&{=} 2 \sum_{\smb:\Mltsmb^\prime>\thr} \frac{\dPx^{2\renprm}}{\nsmp\dPx}
\nonumber
\\
&\stackrel{(b)}{\leq} \frac 4{\thr} \normP\renprm^2, 
\nonumber
\end{align}
where $(a)$ follows from Lemma~\ref{l:bound_Poisson_moments2}
and concavity of $z^\renprm$ in $z$ and $(b)$ from
$\nsmp\dPx>\thr/2$ and Lemma~\ref{lem:bnd_moments}. 

Thus, the contribution of the terms corresponding to 
$\Mltsmb^\prime>\thr$ in the bias and the variance are  
 $\normP\renprm\cdot o(1)$ and 
$ \normP\renprm^2\cdot o(1)$, respectively, and can be ignored. 
Choosing $d = \frac \renprm 2 \log n$ and combining the observations above,
we get the following bound for the bias:
\[
\left|\expectation{\mmntest} - \normP\renprm\right|
=\normP\renprm\left(c_{\alpha}\left(\frac{32\absz^{1/\renprm}}{\nsmp\log
      \nsmp \renprm^2}\right)^\renprm+o(1)\right),
\]
and, using~\eqref{eqn:variance-poly-small}, the following bound for the variance:
\begin{align}
\variance{\mmntest} 
&\le \absz \frac{a^2\degree^2
  (2\thr+\degree)^{2\renprm}}{\nsmp^{2\renprm}}\nonumber + \normP \renprm^2\cdot \order(1)\\
&
\le \normP \renprm^2\left[ \left(\frac {a^2}{\nsmp^{\renprm}}\right)(9\log
  \nsmp)^{2\renprm+2}\left(\frac{\absz^{1/\renprm}}{n}\right)^{\renprm}
+ \order(1)\right]
\nonumber 
\end{align}
Here $a^2$ is the largest squared coefficient of the approximating
polynomial and, by~\eqref{eqn:coeff-bound}, is $\Order(2^{2c_0 d}) = \Order(n^{c_0\renprm})$ for some $c_0<1$. 
Thus,
$a^2 = \order(\nsmp^\renprm)$ and the proof follows by Lemma~\ref{l:bias-variance-PAC}.
\end{proof}

\section{Examples and experiments}\label{s:experiments}
We begin by computing  \renyi for uniform and Zipf distributions; the latter example illustrates the lack of uniform continuity of $\rental \dP$ in $\renprm$.
\begin{example}
The \emph{uniform distribution} $U_\absz$ over $[\absz]=\Sets{1\upto\absz}$
is given by
\[
p_i=\frac1\absz\text{\quad for }i\in[\absz].
\]
Its \renyi for every order $1\ne\renprm\ge0$, and hence for all $\renprm\ge0$, is
\[
H_\renprm(U_\absz)
=
\frac1{1-\renprm}\log\sum_{i=1}^\absz\frac 1{\absz^\renprm}
=
\frac1{1-\renprm}\log \absz^{1-\renprm}
=
\log \absz.\hfill
\]
\end{example}
\begin{example}
\label{exm:zpf_rny}
The \emph{Zipf distribution} $\zipfbk$ 
for $\beta > 0$ and $\absz\in[\absz]$ is given by
\[
\dP_i
=
\frac{i^{-\beta}}{\sum_{j=1}^\absz j^{-\beta}}
\quad\text{for } i\in[\absz].
\]
Its R\'enyi entropy of order $\renprm\ne1$ is
\[
\rental \zipfbk = \frac{1}{1-\renprm}\log \sum_{i=1}^\absz i^{-\renprm\beta} - \frac{\renprm}{1-\renprm}\log \sum_{i=1}^\absz i^{-\beta}.
\]
Table~\ref{tab:Zipf_entropy} summarizes the leading term $g(k)$ in the approximation\footnote{We say $f(n) \sim g(n)$ to denote $\lim_{n\rightarrow \infty} f(n)/g(n) =1$.}  
$\rental \zipfbk \sim g(k)$.

\begin{table}[H]
\begin{center}
\begin{tabular}{|l|l|l|l|}
\hline
 &$\beta<1$ & $\beta=1$   & $\beta>1$\\
\hline
	 $\renprm \beta<1$  &$ \log \absz$    &$ \frac{1-\renprm \beta}{1-\renprm} \log \absz$ &$ \frac{1-\renprm \beta}{1-\renprm} \log \absz$ \\
 \hline
 	 $\renprm \beta=1$  &$ \frac{\renprm- \renprm \beta}{\renprm - 1}\log \absz $  &  $\frac{1}{2}\log \absz$ &$\frac{1}{1 -\renprm} \log \log \absz$\\
\hline
	 $\renprm \beta>1$  &$\frac{\renprm- \renprm \beta}{\renprm - 1}\log \absz$   &$\frac{\renprm}{\renprm -1} \log \log \absz$&$ \text{constant}$\\
 \hline
\end{tabular}
\\
\caption{The leading terms $g(k)$ in the approximations $\rental \zipfbk \sim g(k)$ for different values of $\renprm\beta$ and $\beta$. The case $\renprm\beta =1$ and $\beta =1$ corresponds to the Shannon entropy of $\zipfsk 1$.}
\label{tab:Zipf_entropy}
\end{center}
\end{table}
In particular,  for $\renprm>1$ 
\[
H_\renprm(\zipfsk 1) = \frac{\renprm}{1-\renprm}\log\log \absz + \Theta\left(\frac1{\absz^{ \renprm -1}}\right) + c(\renprm),
\] 
and the difference 
$|H_2(\dP) - H_{2+\ep}(\dP)|$ is  
$\Order\Paren{\ep\log\log \absz}$. Therefore, even for very small $\ep$ this difference is unbounded and approaches infinity in the limit as $\absz$ goes to infinity. \end{example}

We now illustrate the performance of the proposed estimators for various
distributions for $\renprm =2$ in Figures~\ref{fig:estimation_Halpha_2}
and $\renprm =1.5$ in Figures~\ref{fig:estimation_Halpha_15}.  For $\renprm = 2$,
we compare the performance of bias-corrected and empirical estimators.
For $\renprm = 1.5$, we compare the performance of the 
polynomial-approximation  and the empirical estimator. 
For the polynomial-approximation estimator, the threshold 
$\tau$ is chosen  as $\tau= \ln (n)$ and the approximating polynomial degree is chosen as $d = \lceil 1.5 \tau \rceil$.

We test the performance of these estimators over six different distributions: 
the uniform distribution, a step distribution with half of the symbols having
probability $1/(2k)$ and the other half have probability $3/(2k)$, Zipf
distribution with parameter $3/4$ ($p_i \propto i^{-3/4}$), Zipf
distribution with parameter $1/2$ ($p_i \propto i^{-1/2}$), a
randomly generated distribution using the uniform prior on the probability simplex, and another one generated using the Dirichlet-$1/2$ prior.

In both the figures the true value is shown in black and the
estimated values are color-coded, with the solid line representing their
mean estimate and the shaded area corresponding to one standard
deviation.  As expected, bias-corrected estimators outperform
empirical estimators for $\renprm = 2$ and polynomial-approximation
estimators perform better than empirical estimators for $\renprm =
1.5$.

\begin{figure}[H]
\centering     
\subfloat[Uniform]{\label{fig:uniform}\includegraphics[width=45mm]{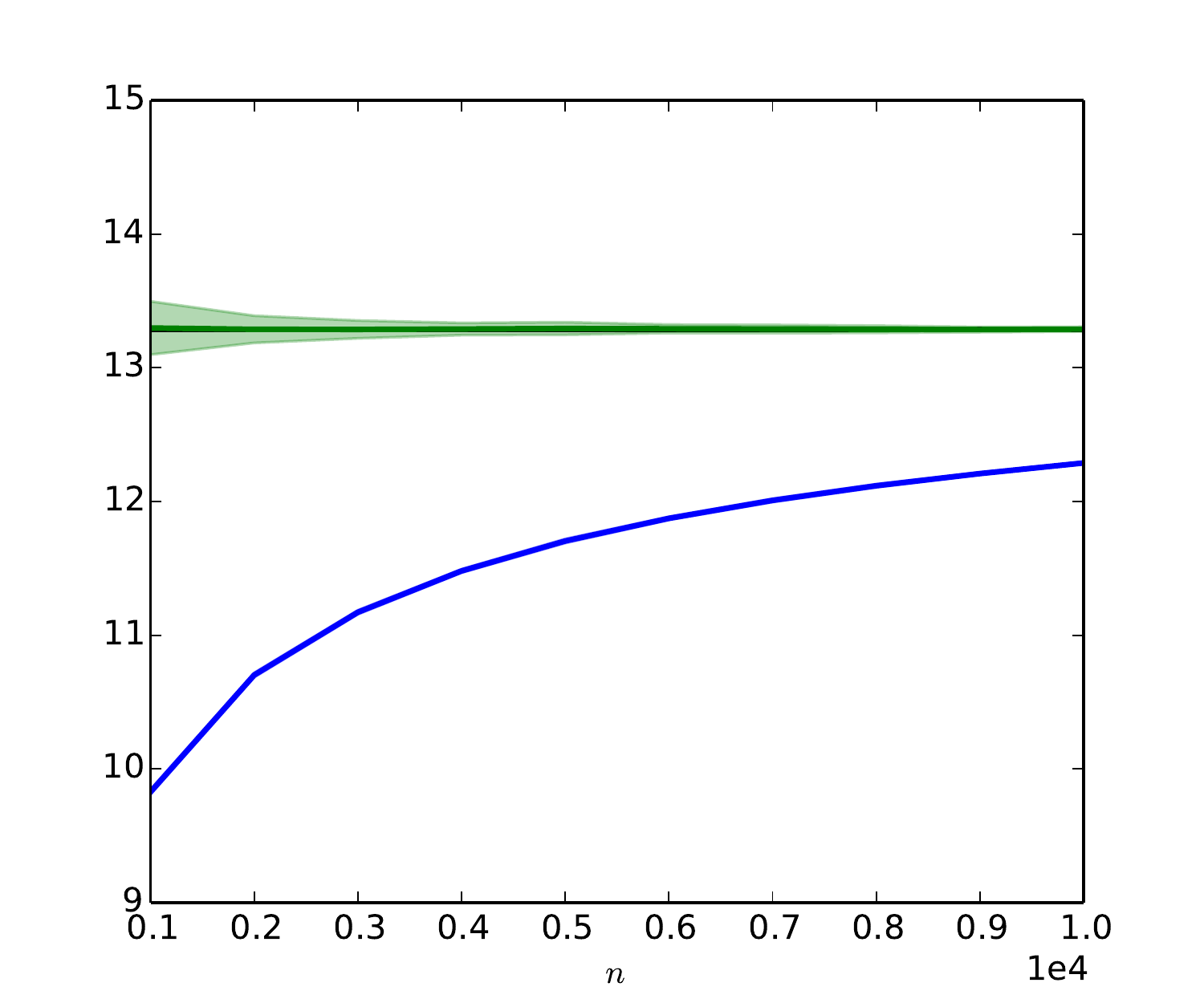}}
\subfloat[Step]{\label{fig:step}\includegraphics[width=45mm]{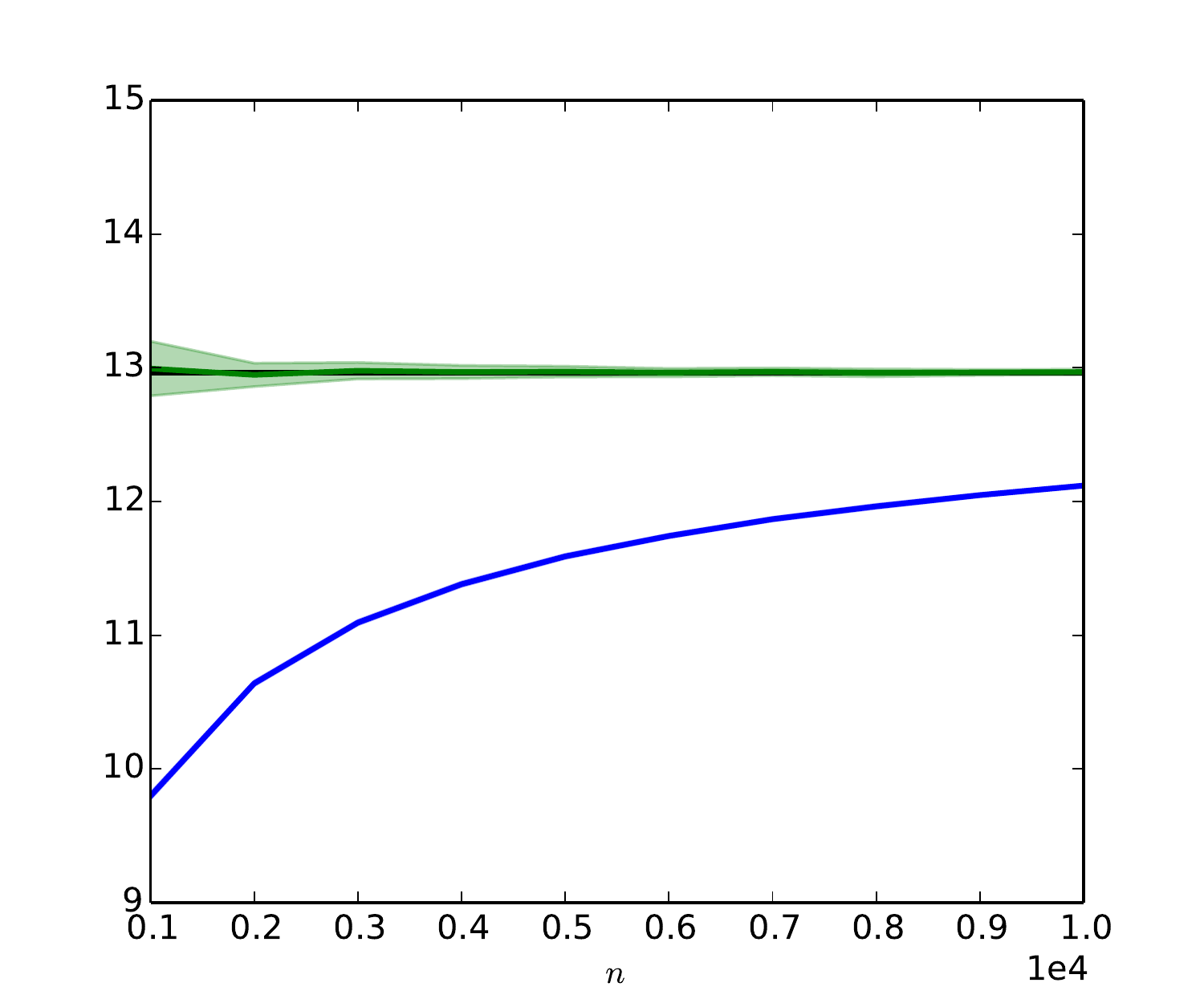}}
\subfloat[Zipf with parameter $3/4$]{\label{fig:zipf1}\includegraphics[width=45mm]{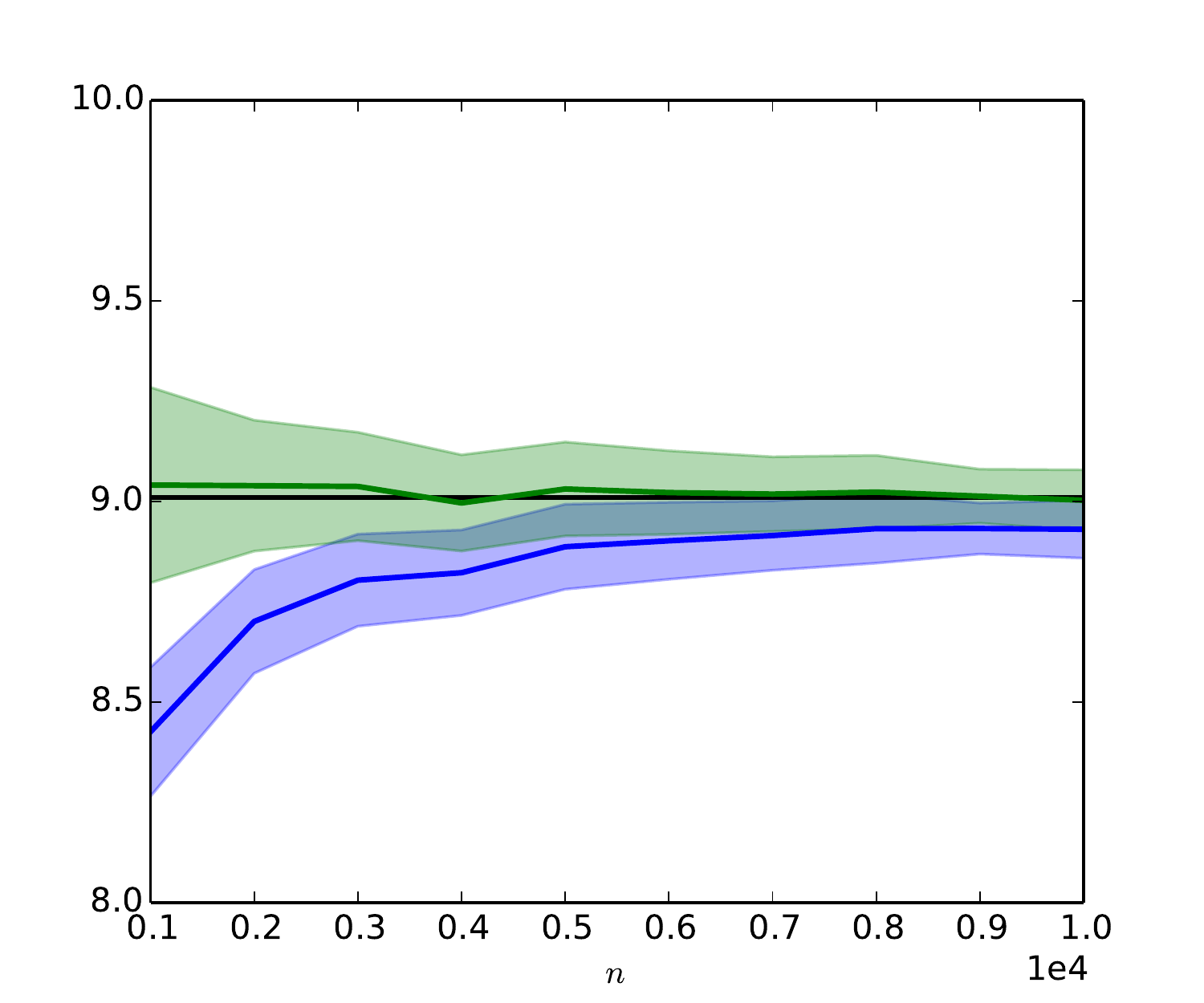}}
\end{figure}
\begin{figure}[H]
\centering
\subfloat[Zipf with parameter $1/2$]{\label{fig:zipf15}\includegraphics[width=45mm]{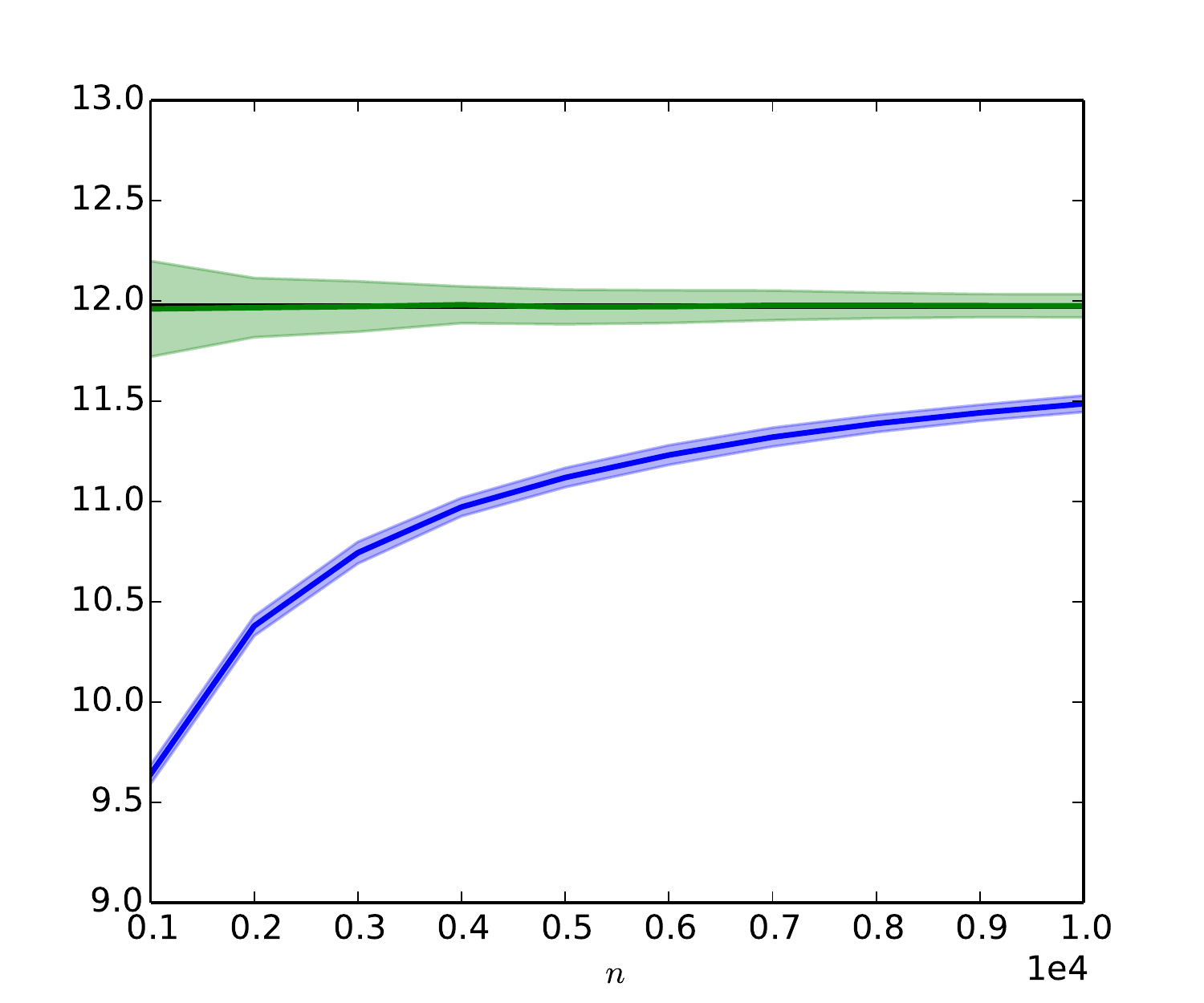}}
\subfloat[Uniform prior (Dirichlet $1$)]{\label{fig:random}\includegraphics[width=45mm]{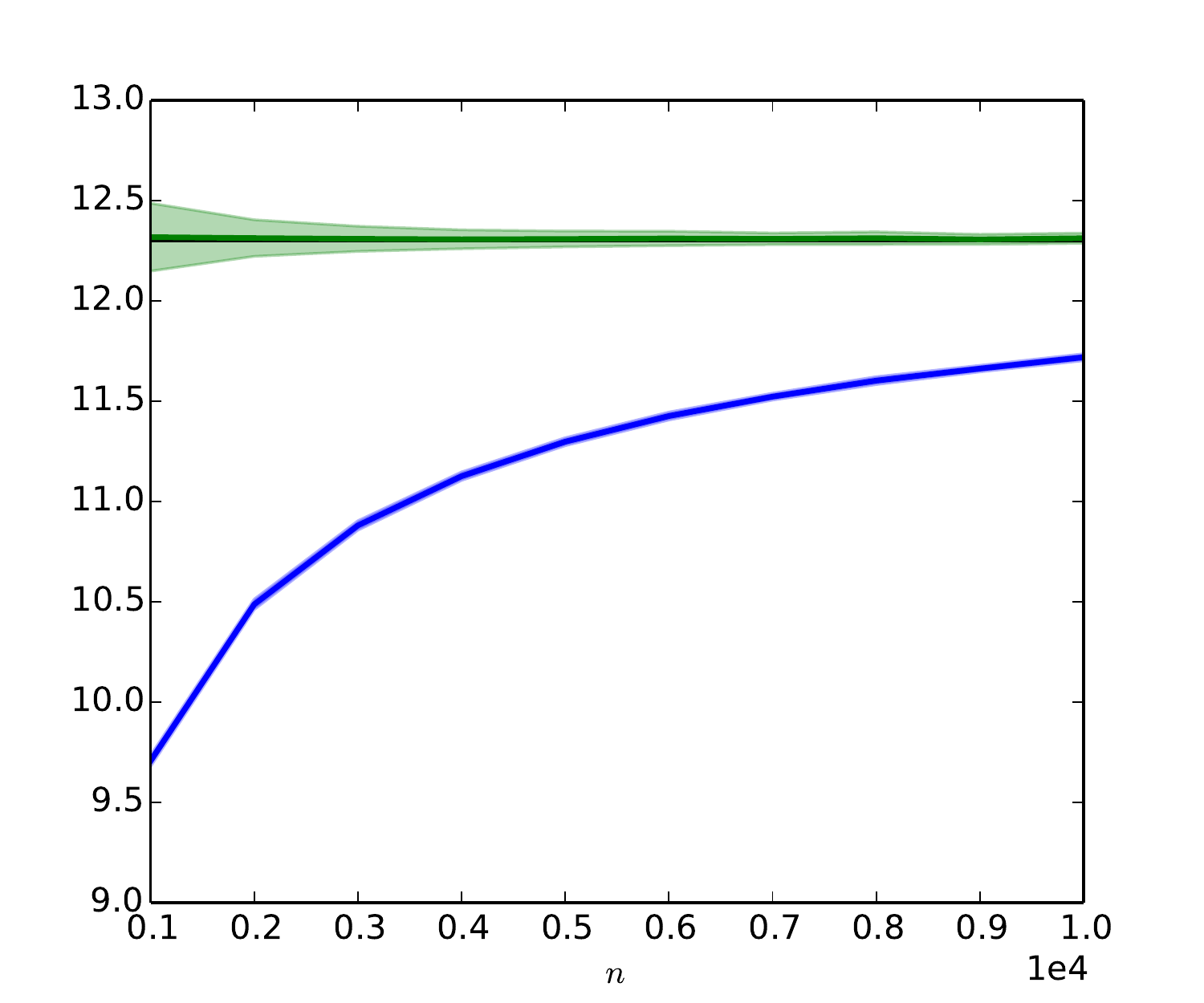}}
\subfloat[Dirichlet $1/2$ prior]{\label{fig:gammahalf}\includegraphics[width=45mm]{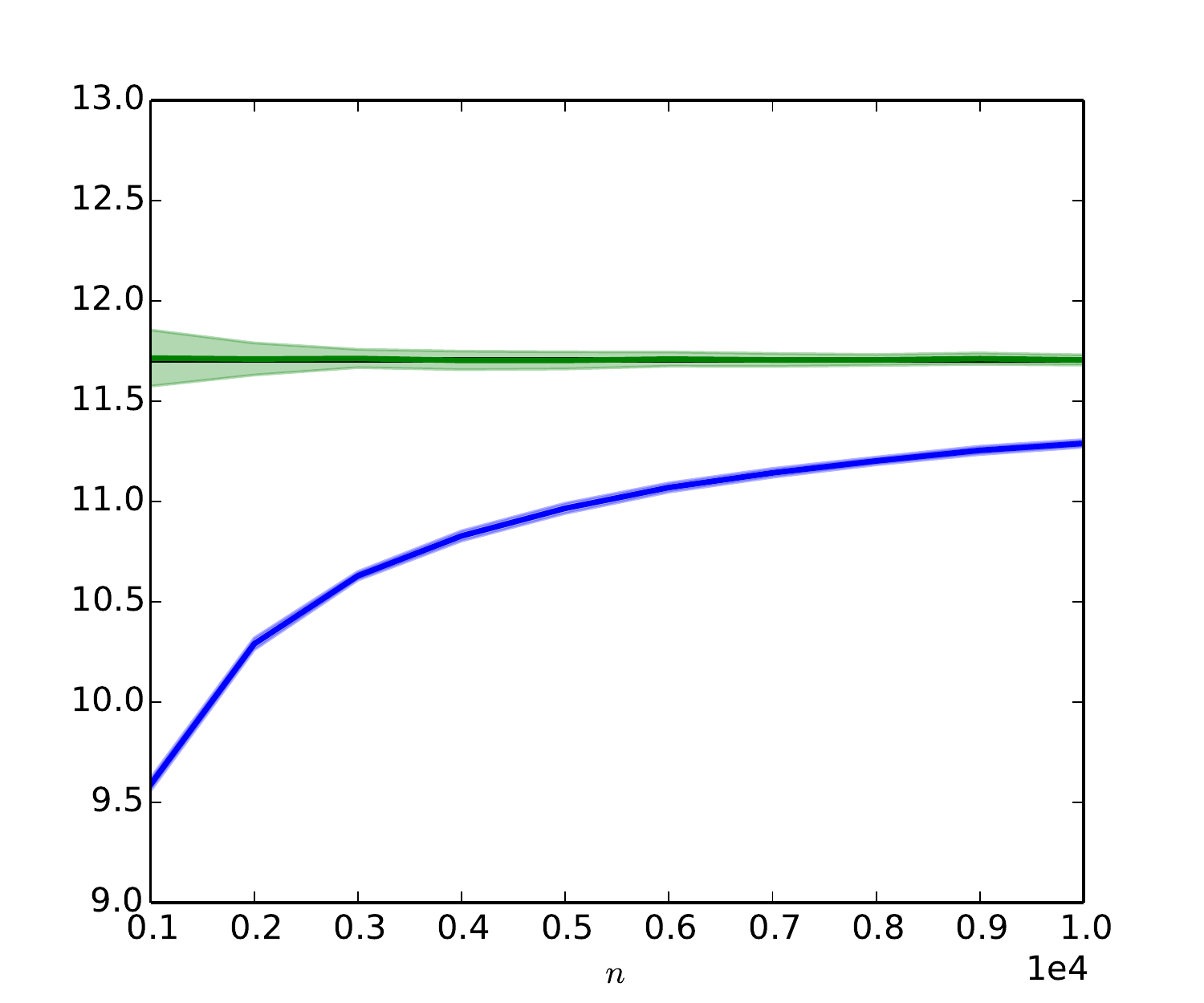}}
\end{figure}
\begin{figure}[H]
\centering
\begin{tabular}{| l l | }
\hline
True value  &  {\color{black} \rule[0.08cm]{0.05\textwidth}{1pt}}   \\ 
Bias-corrected estimator &  {\color{mygreen} \rule[0.08cm]{0.05\textwidth}{1pt}} \\ 
Empirical estimator estimator &  {\color{myred} \rule[0.08cm]{0.05\textwidth}{1pt}} \\ \hline
\end{tabular}
\caption{\renyi estimates for order $2$ for support $10000$, 
number of samples ranging from $1000$ to $10000$, averaged over $100$ trials.}
\label{fig:estimation_Halpha_2}
\end{figure}

\begin{figure}[H]
\centering     
\subfloat[Uniform]{\label{fig:uniform}\includegraphics[width=45mm]{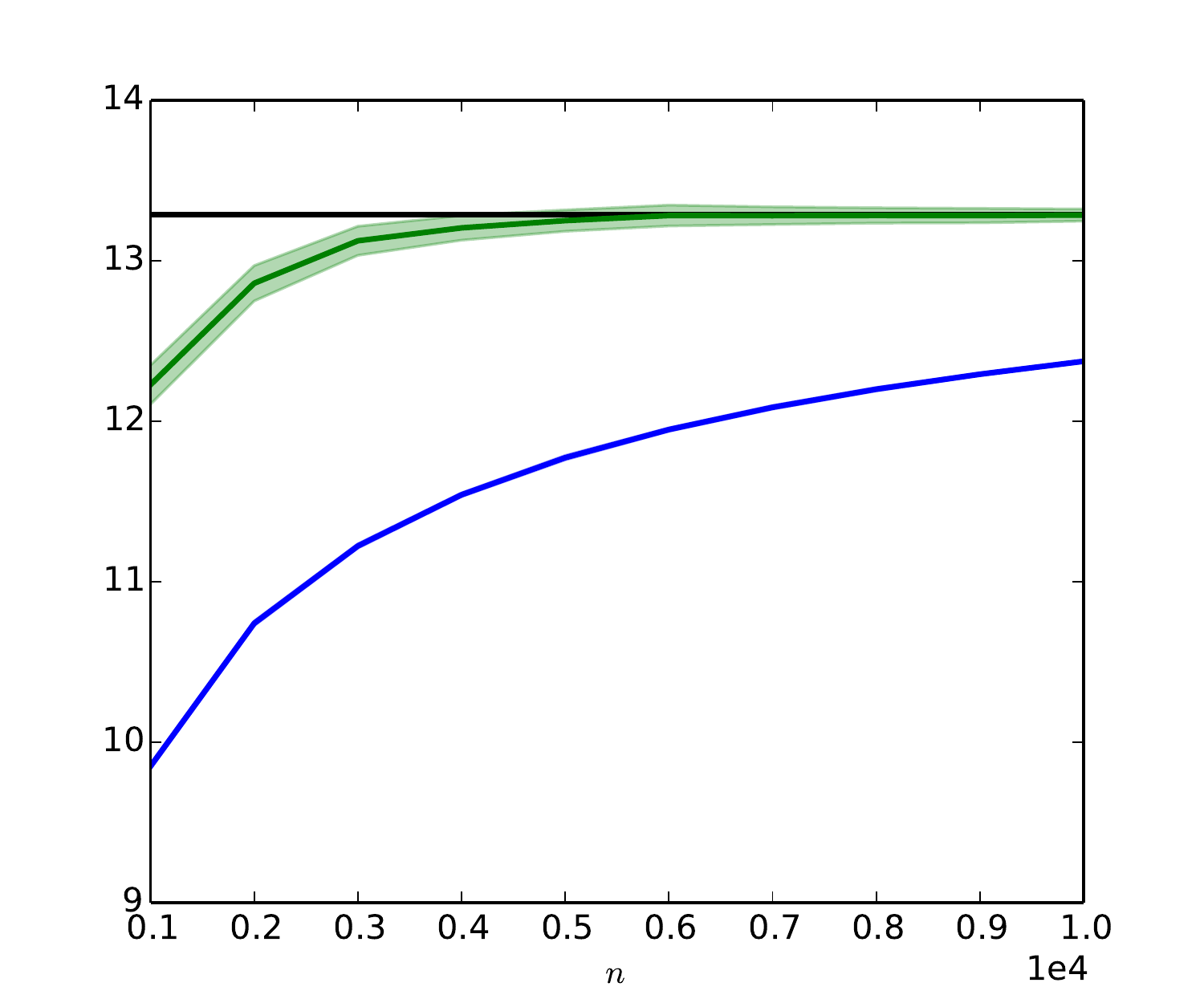}}
\subfloat[Step]{\label{fig:step}\includegraphics[width=45mm]{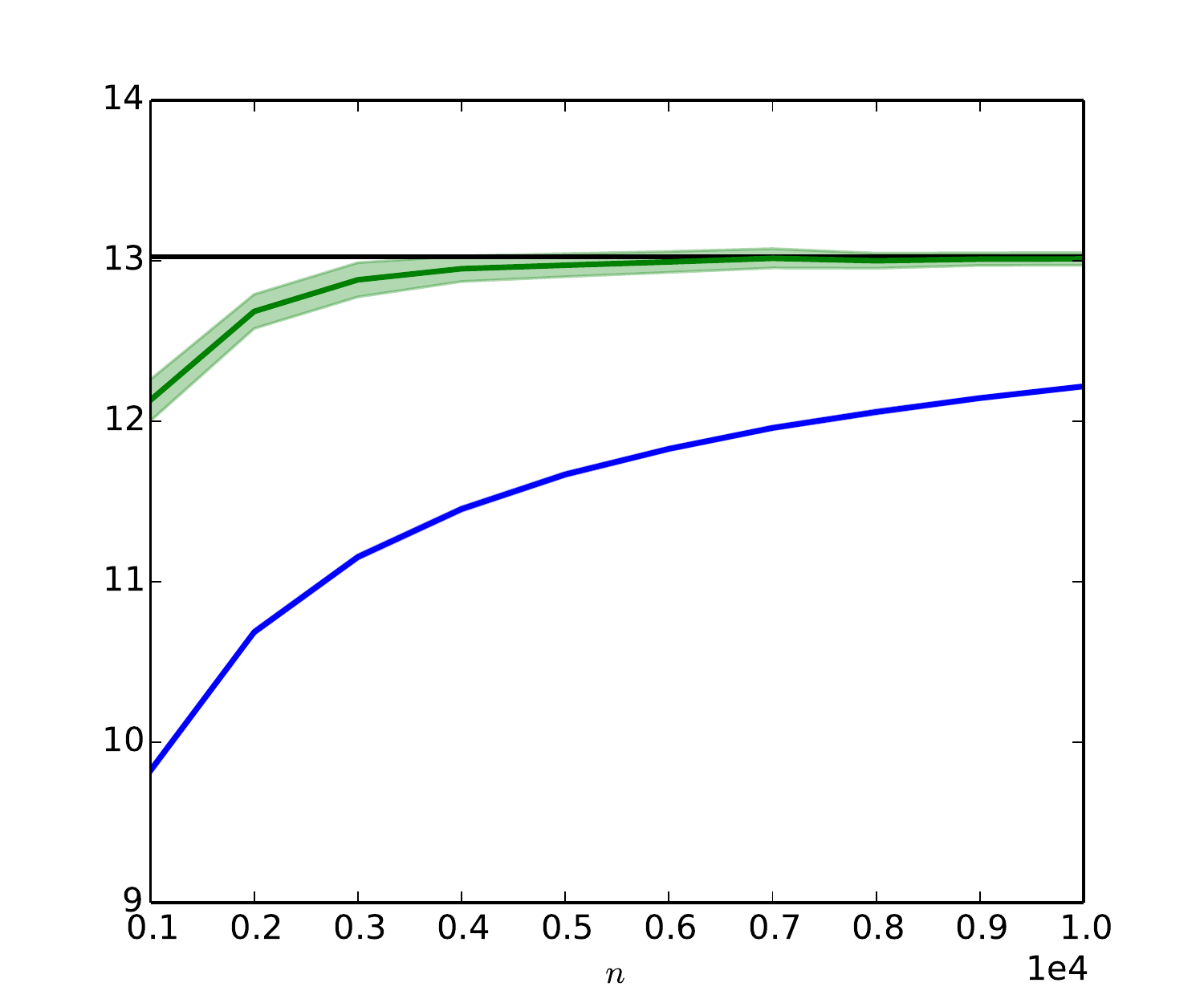}}
\subfloat[Zipf with parameter $3/4$]{\label{fig:zipf1}\includegraphics[width=45mm]{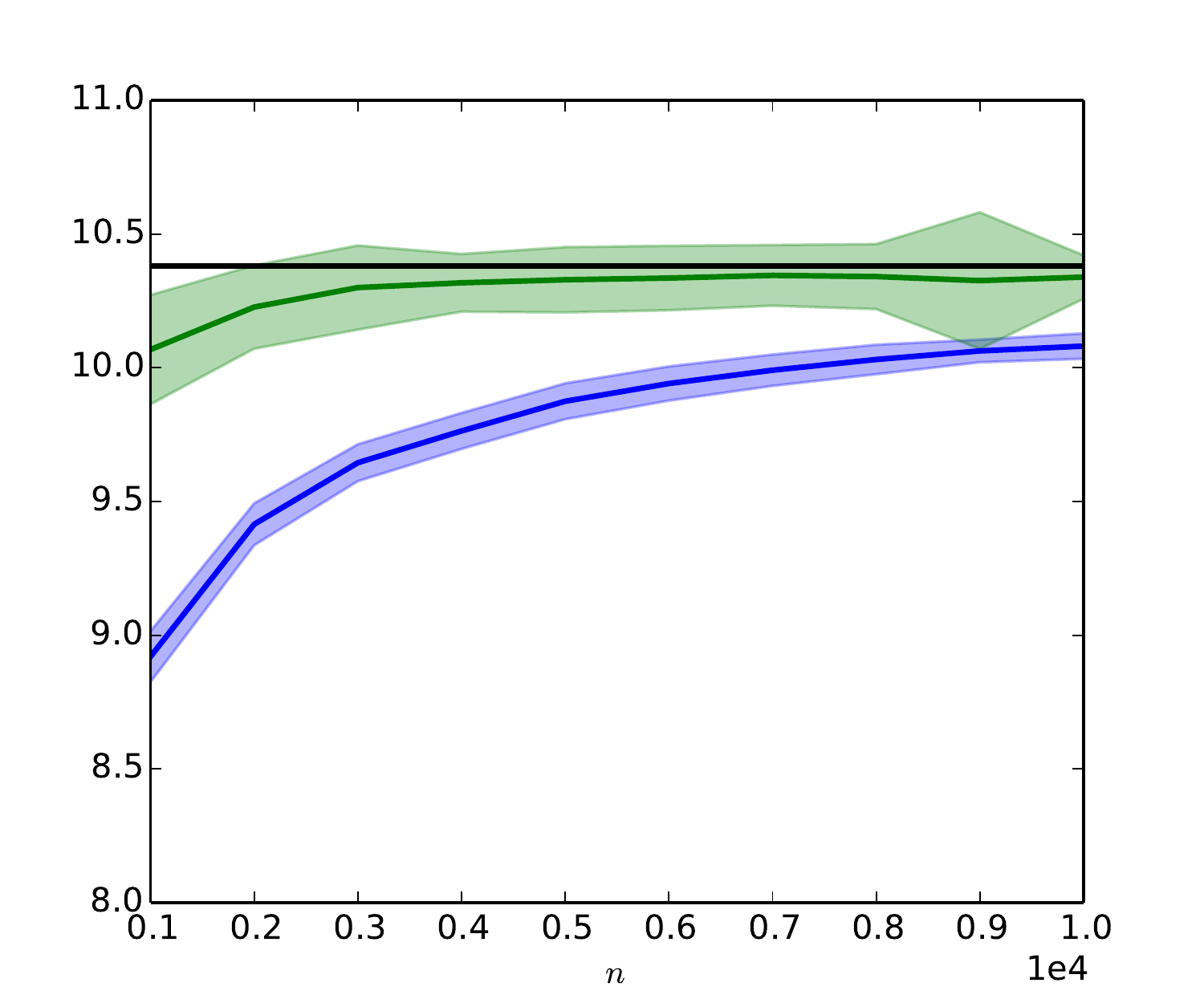}}
\end{figure}
\begin{figure}[H]
\centering
\subfloat[Zipf with parameter $1/2$]{\label{fig:zipf15}\includegraphics[width=45mm]{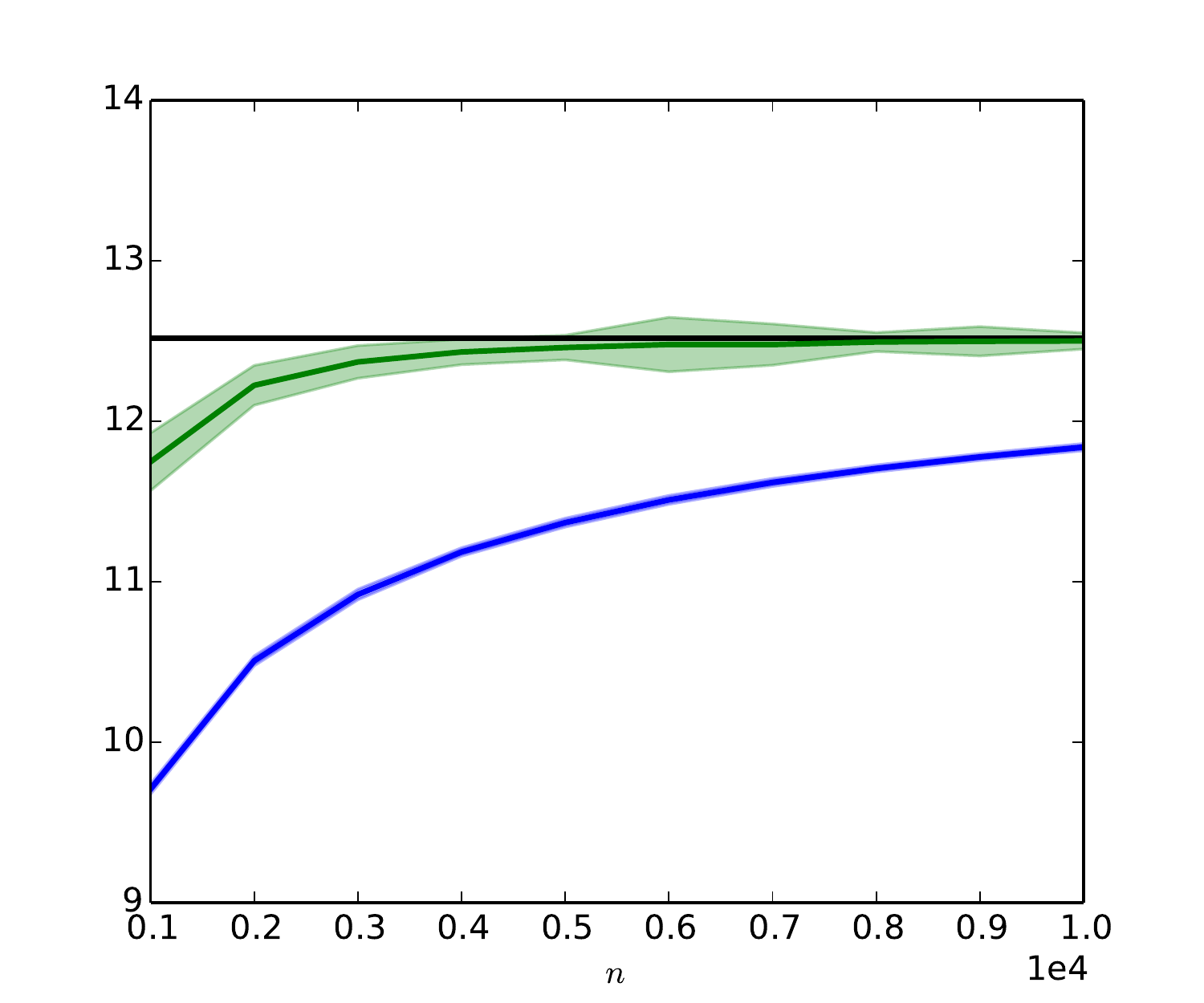}}
\subfloat[Uniform prior (Dirichlet $1$)]{\label{fig:random}\includegraphics[width=45mm]{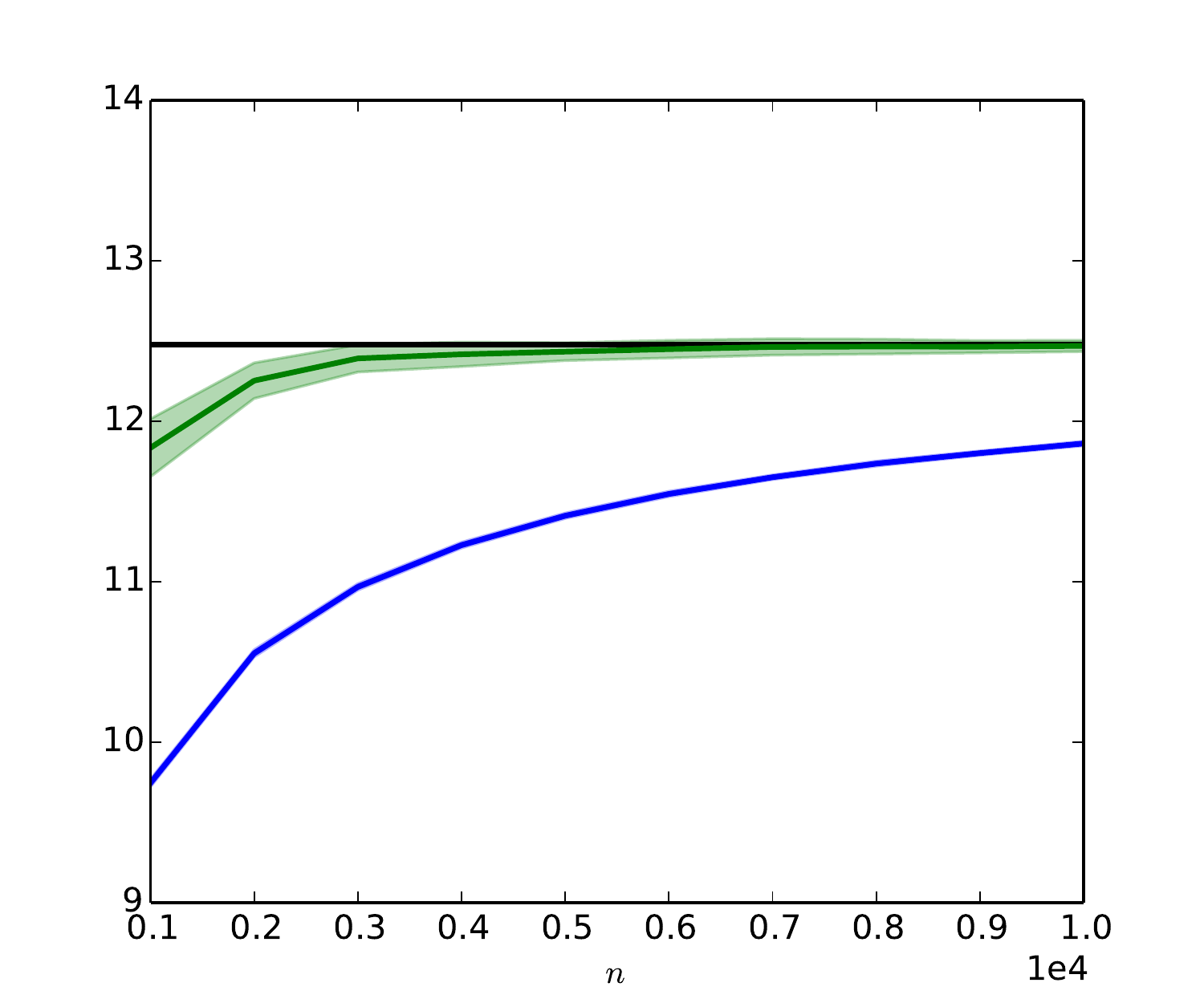}}
\subfloat[Dirichlet $1/2$ prior]{\label{fig:gammahalf}\includegraphics[width=45mm]{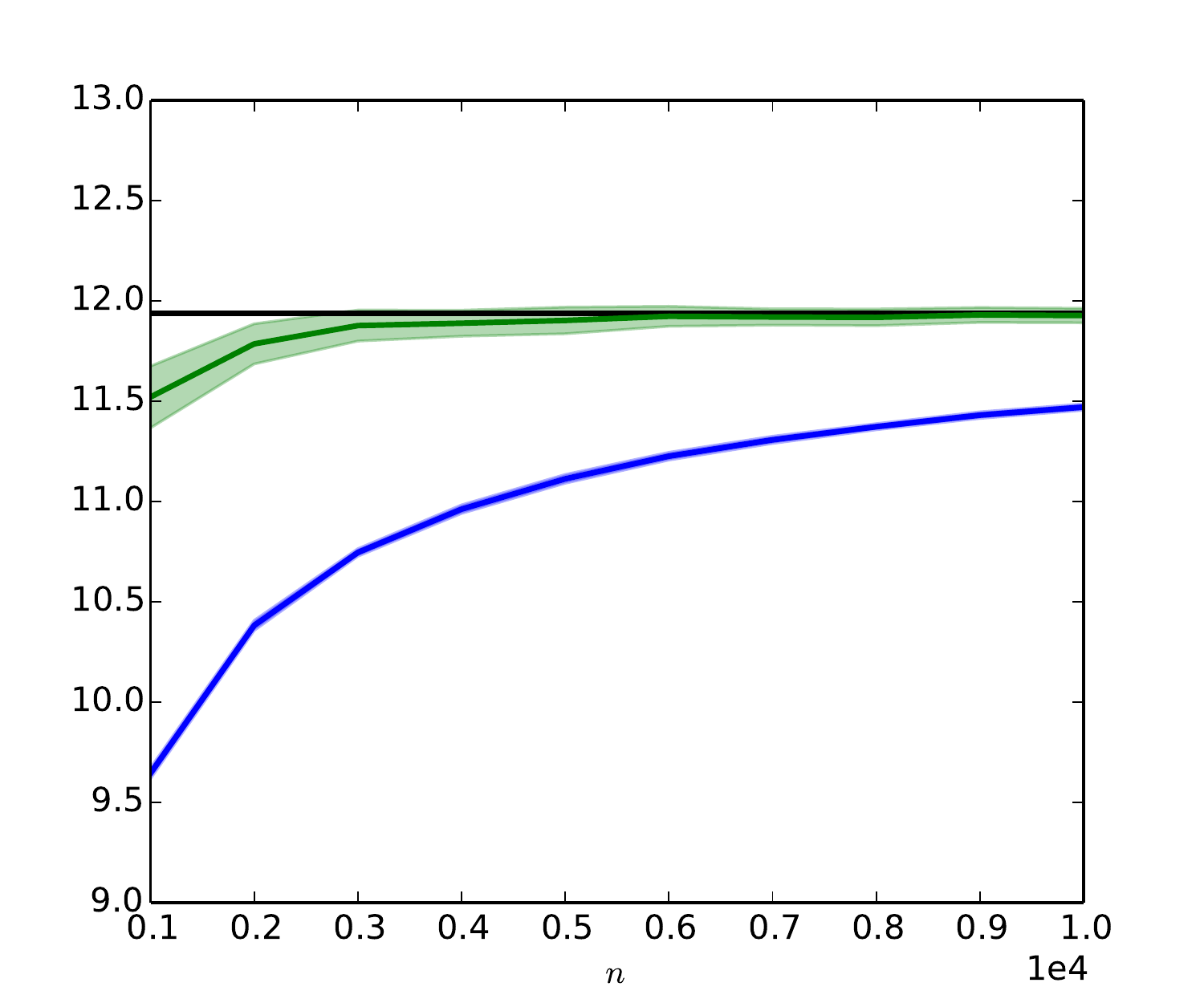}}
\end{figure}
\begin{figure}[H]
\centering
\begin{tabular}{| l l | }
\hline
True value  &  {\color{black} \rule[0.08cm]{0.05\textwidth}{1pt}}   \\ 
Polynomial-approximation estimator &  {\color{mygreen} \rule[0.08cm]{0.05\textwidth}{1pt}} \\ 
Empirical estimator estimator &  {\color{myred} \rule[0.08cm]{0.05\textwidth}{1pt}} \\ \hline
\end{tabular}
\caption{\renyi estimates for order $1.5$ for support $10000$, 
number of samples ranging from $1000$ to $10000$, averaged over $100$ trials.}
\label{fig:estimation_Halpha_15}
\end{figure}

 
\section{Lower bounds on sample complexity}
\label{s:lower_bounds} 
We now establish lower bounds on $\sakde$.
The proof is based on exhibiting two distributions $\dP$ and $\dQ$
with $\rental \dP \neq \rental \dQ$ such that the set of $\Mltsmb$'s
have very similar distribution from $\dP$ and $\dQ$, 
if fewer samples than the claimed lower bound are available. This
method is often referred to as {\it Le Cam's two-point method} (see,
for instance, \cite{Yu97}). The key idea is summarized in the
following result which is easy to derive.  

\begin{lemma}\label{l:two_point} If for two distributions $\dP$ and $\dQ$ on $\cX$ and
  $\nsmp\in\mN$ the total variation distance $\|\dP^n -\dQ^n\| < \prerr$, then one of the following holds for every function $\hfnc$:
\begin{align}
&\dP\left(|\rental{\dP} - \hfnc(X^n)| \geq \frac{|\rental{\dP} - \rental{\dQ}|}{2}\right) \geq \frac{1-\prerr}{2},
\nonumber
\\\text{or }\,\,&\dQ\left(|\rental{\dQ}- \hfnc(X^n)| \geq \frac{|\rental{\dP} - \rental{\dQ}|}{2}\right) \geq \frac{1-\prerr}{2}.
\nonumber
\end{align}
\end{lemma}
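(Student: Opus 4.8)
The plan is to argue by contradiction, combining the triangle inequality with the characterization of total variation distance as the maximal gap between the probabilities that two distributions assign to a common event.

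First I would set $\Delta \ed |\rental\dP - \rental\dQ|$ and introduce the two ``success'' events
\[
A \ed \Sets{x^\nsmp : |\hfnc(x^\nsmp) - \rental\dP| < \Delta/2}, \qquad A' \ed \Sets{x^\nsmp : |\hfnc(x^\nsmp) - \rental\dQ| < \Delta/2}.
\]
The crucial structural observation is that $A$ and $A'$ are disjoint: if some $x^\nsmp$ lay in both, the triangle inequality would give $\Delta \le |\hfnc(x^\nsmp) - \rental\dP| + |\hfnc(x^\nsmp) - \rental\dQ| < \Delta/2 + \Delta/2 = \Delta$, a contradiction. This is exactly why the radius $\Delta/2$ appears in the statement: it is the largest value for which no single estimate can be close to both $\rental\dP$ and $\rental\dQ$ at once.

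Next I would assume, toward a contradiction, that \emph{both} displayed inequalities fail, so that $\dP^\nsmp(A^c) < (1-\prerr)/2$ and $\dQ^\nsmp(A'^c) < (1-\prerr)/2$; passing to complements yields $\dP^\nsmp(A) > (1+\prerr)/2$ and $\dQ^\nsmp(A') > (1+\prerr)/2$. Finally I would play these estimates against the hypothesis $\|\dP^\nsmp - \dQ^\nsmp\| < \prerr$. Using the definition $\|\dP^\nsmp - \dQ^\nsmp\| = \sup_E |\dP^\nsmp(E) - \dQ^\nsmp(E)|$ with the choice $E = A$ gives $\dQ^\nsmp(A) > \dP^\nsmp(A) - \prerr > (1-\prerr)/2$. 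Since $A' \subseteq A^c$ by disjointness, this forces $\dQ^\nsmp(A') \le \dQ^\nsmp(A^c) = 1 - \dQ^\nsmp(A) < (1+\prerr)/2$, contradicting $\dQ^\nsmp(A') > (1+\prerr)/2$ and completing the proof.

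There is no genuine difficulty here; the argument is elementary once the events are set up. The only point meriting care is fixing the convention for total variation distance: the supremum-over-events form is precisely the inequality needed to transport mass from $\dP^\nsmp$ to $\dQ^\nsmp$, whereas the $\ell_1$ form $\frac12\sum_{x^\nsmp}|\dP^\nsmp(x^\nsmp) - \dQ^\nsmp(x^\nsmp)|$ would introduce a factor of two that one must track to keep the thresholds consistent.
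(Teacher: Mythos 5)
Your proof is correct and is exactly the standard Le Cam two-point argument the paper has in mind; the paper itself omits the derivation, noting only that the lemma ``is easy to derive.'' The disjointness of the two success events via the triangle inequality, the passage to complements, and the transfer of mass using the supremum-over-events form of total variation are all handled properly, and your closing remark about tracking the normalization convention for $\|\dP^n-\dQ^n\|$ is the one point of the argument that genuinely merits care.
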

We first prove the lower bound for integers $\renprm>1$, which matches the upper bound in Theorem~\ref{t:upper_bounds_integer} up
to a constant factor. 
\begin{theorem}
\label{t:lower_bounds_integer2}
Given an $1<\renprm\in \mN$ and $1< \prerr<1$, for every sufficienly
small $\esterr>0$
\[
\sam_\renprm(k, \esterr, \prerr) =
\Omega\left(\frac{\absz^{(\renprm-1)/\renprm}}{\esterr^2}\right),
\]
where the constant implied by $\Omega$ may depend on $\ep$.
\end{theorem}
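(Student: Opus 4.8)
The plan is to apply Le~Cam's two-point method (Lemma~\ref{l:two_point}): I will exhibit two distributions $\dP$ and $\dQ$ on $[\absz]$ whose R\'enyi entropies differ by at least $2\esterr$, yet whose $\nsmp$-sample laws satisfy $\|\dP^\nsmp - \dQ^\nsmp\| < \prerr$ for every $\nsmp$ up to $c\,\absz^{1-1/\renprm}/\esterr^2$. Since $\rental\dP = \tfrac1{1-\renprm}\log\normP\renprm$, a gap of $2\esterr$ in entropy is the same as a multiplicative gap of order $(\renprm-1)\esterr$ between the power sums $\normP\renprm$ and $\normQ\renprm$; so the task reduces to changing the power sum by a relative amount $\Theta((\renprm-1)\esterr)$ while keeping the two sampling laws statistically close.

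The construction I would use is a single elevated symbol sitting on a near-uniform background. Fix a ``spike'' height $h = \absz^{1/\renprm - 1}$ and let $\dP$ place probability $h$ on one symbol and spread the remaining mass uniformly over the other $\absz-1$ symbols, each getting $\approx 1/\absz$. The point of this scale is that the spike and the sea contribute equally to the power sum, $h^\renprm = \absz^{1-\renprm}$ and $(\absz-1)(1/\absz)^\renprm \approx \absz^{1-\renprm}$, so that $\normP\renprm = \Theta(\absz^{1-\renprm})$ and the single spike carries a constant fraction of it. Let $\dQ$ raise the spike to $h(1+\beta)$, with $\beta$ a suitable constant multiple of $\esterr$, and rescale the sea to renormalize. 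To first order the power sum changes by $\renprm\beta\,h^\renprm = \Theta(\beta)\,\normP\renprm$, so that $|\rental\dP - \rental\dQ| = \Theta(\beta) = \Theta(\esterr)$, which I calibrate to exceed $2\esterr$ (the implied constants may depend on $\renprm$). Centering the perturbation away from uniformity is essential here: since the uniform distribution maximizes $\rental{\cdot}$, perturbing it changes $\normP\renprm$ only at second order in the perturbation size and would yield merely an $\Omega(1/\esterr)$ bound.

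For the indistinguishability bound I would Poissonize, so that the counts are independent with $\Mltsmb \sim \poid{\nsmp\dPx}$, and control the chi-squared divergence via $1+\chi^2(\dQ^\nsmp\|\dP^\nsmp) = \exp\!\big(\nsmp\,\chi^2(\dQ\|\dP)\big)$. The per-symbol divergence is dominated by the spike, $\chi^2(\dQ\|\dP)\approx (h\beta)^2/h = h\beta^2 = \absz^{1/\renprm-1}\beta^2$, the sea contributing only a lower-order $O(h^2\beta^2)$ term. Hence $\nsmp\,\chi^2(\dQ\|\dP)\approx \nsmp\,\absz^{1/\renprm-1}\beta^2$, which is a small constant exactly when $\nsmp = c\,\absz^{1-1/\renprm}/\esterr^2$; choosing $c$ small enough (depending on $\renprm$ and $\prerr$) makes $\|\dP^\nsmp-\dQ^\nsmp\| < \prerr$, and Lemma~\ref{l:two_point} then forces every estimator to err by at least $\esterr$ on one of the two distributions. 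This yields $\sam_\renprm(\absz,\esterr,\prerr) > \nsmp = \Omega(\absz^{1-1/\renprm}/\esterr^2)$.

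The main obstacle is selecting the spike scale so that the two competing exponents balance: the relative power-sum change achievable per unit of statistical distinguishability is governed by $\normP{2\renprm-1}/\normP\renprm^2$, and one must verify that $h=\absz^{1/\renprm-1}$ is precisely the height making this ratio grow like $\absz^{1-1/\renprm}$ (any uniform block, by contrast, gives ratio $1$ and is useless, which is the reason integer-order estimation is genuinely cheaper). A secondary technical point is transferring the chi-squared estimate from the Poissonized model back to a fixed sample size $\nsmp$, and checking that the first-order expansions of $(1+\beta)^\renprm$ and of the Poisson divergence are valid for $\beta=\Theta(\esterr)\to0$ and $\absz$ large.
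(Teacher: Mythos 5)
Your proposal is correct and takes essentially the same approach as the paper: the paper's proof uses the identical two-point construction (a single symbol of probability $\absz^{1/\renprm-1}$ over a uniform background, elevated to $(1+\esterr)\absz^{1/\renprm-1}$ in $\dQ$), giving the same $\Theta(\esterr)$ gap in $\rental{\cdot}$ via the first-order change in the power sum. The only difference is the indistinguishability bookkeeping: the paper bounds $\|\dP^\nsmp-\dQ^\nsmp\|$ directly via the Hellinger distance $h^2(\dP,\dQ)=O\left(\esterr^2/\absz^{1-1/\renprm}\right)$ and its tensorization over $\nsmp$ i.i.d.\ samples (no Poissonization needed), whereas you use $\chi^2$-divergence under Poisson sampling; both give the same $\Omega\left(\absz^{1-1/\renprm}/\esterr^2\right)$ threshold.
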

\begin{proof}
We rely on Lemma~\ref{l:two_point} and exhibit two distributions $\dP$
and $\dQ$ with appropriate properties. Specifically, consider the following distributions $\dP$ and $\dQ$ over
$[\absz]$: 
$\dP_1 = 1/{\absz^{1-1/\renprm}}$, and for $x=2,\ldots,\absz$, $\dPx =
(1-\dP_1)/(\absz-1)$; $\dQ_1 = (1+\esterr)/{\absz^{1-1/\renprm}}$, and for $x=2,\ldots,\absz$, $\dQ_x =
(1-\dQ_1)/(\absz-1)$. Then, we have 
\begin{align*}
\normP\renprm &= \frac1{k^{\renprm-1}} +
(k-1)\cdot\left(\frac{1-\frac1{k^{1-1/\renprm}}}{k-1}\right)^{\renprm}
\\
&= \frac1{k^{\renprm-1}}
+\frac1{(k-1)^{\renprm-1}}\cdot\left(1-\frac1{k^{1-1/\renprm}}\right)^{\renprm} 
\\
&= \frac{(2+o_k(1))}{k^{\renprm-1}}.
\end{align*}
Similarly, 
\begin{align*}
\normQ\renprm &= \frac{(1+\esterr)^\renprm}{k^{\renprm-1}} +
(k-1)\cdot\left(\frac{1-\frac{(1+\esterr)}{k^{1-1/\renprm}}}{k-1}\right)^{\renprm}
\\
=\frac{(2+\renprm\esterr+o_k(1))}{k^{\renprm-1}}.
\end{align*}
Therefore, $|\rental\dP-\rental\dQ|=\Omega(\esterr)$. To complete the
proof, we show that  there exists a constant $C_\ep$ such that
$\|\dP^n- \dQ^n\|\leq \ep$ if $n\leq C_\ep \absz^{1-1/\renprm}/\esterr^2$.
To that end, we bound the squared Hellinger distance between
$\dP^n$ and $\dQ^n$ given by
\[
h^2(\dP, \dQ) = 2- 2\sum_{x}\sqrt{\dPx\dQx} = \sum_x (\sqrt{\dPx} -\sqrt{\dQx})^2.
\] 
Since for small values of $\esterr$ we have $(1+\esterr)^{1/2}<1+\esterr$,
\begin{align*}
h^2(\dP,\dQ) &= \left(\sqrt{\frac{1+\esterr}{\absz^{1-1/\renprm}}} -
\sqrt{\frac{1}{\absz^{1-1/\renprm}}} \right)^2
+ \left(\sqrt{1- \frac{1+\esterr}{\absz^{1-1/\renprm}}} -
\sqrt{1- \frac{1}{\absz^{1-1/\renprm}}} \right)^2
\\
&=O\left(\frac{\esterr^2}{\absz^{1-1/\renprm}}\right).
\end{align*}
The required bound for $\|\dP^n-\dQ^n\|$ follows using the following
standard steps ($cf.$~\cite{Yu97})
\begin{align*}
\|\dP^n-\dQ^n\| &\leq \sqrt{h^2(\dP,\dQ)} 
\\
&= \sqrt{1- \left(1-\frac 12h^2(\dP, \dQ)\right)^n}
\\
&\leq \sqrt{\frac n2 h^2(\dP, \dQ)}.
\end{align*}
\end{proof}

Next, we lower bound $\sam_\renprm(\absz)$ for noninteger $\renprm>1$ and show that it must be almost linear in
$\absz$. While we still rely on Lemma~\ref{l:two_point} for our lower
bound, we take recourse to Poisson sampling to simplify our calculations.
\begin{lemma}{\bf (Poisson approximation 2)} 
Suppose there exist $\esterr, \prerr >0$ such that, with $\Nsmp \sim \poid{2\nsmp}$, for all
estimators $\hfnc$ we have
\begin{align}
\max_{\dP \in \cP}\bPr{|\rental{\dP} - \hfnc_\renprm(\XoN)| > \esterr } > \prerr,
\nonumber
\end{align}
where $\cP$ is a fixed family of distributions. Then, for all fixed
 length estimators $\tfnc$
\begin{align*}
\max_{\dP \in \cP}\bPr{|\rental{\dP} - \tfnc_\renprm(\Xon)| > \esterr } >\frac{\prerr}{2},
\end{align*}
when $\nsmp > 4\log(2/\prerr)$.
\end{lemma}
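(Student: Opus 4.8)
The plan is to prove the contrapositive via a standard Poissonization coupling: I would show that any good fixed-length estimator using $\nsmp$ samples can be converted into an almost-as-good estimator under $\poid{2\nsmp}$ sampling, contradicting the assumed hardness. First I would fix an arbitrary fixed-length estimator $\tfnc$ that uses exactly $\nsmp$ samples and build from it a Poisson-sampling estimator $\hfnc$ as follows: draw $\Nsmp\sim\poid{2\nsmp}$ samples $X_1,\dots,X_\Nsmp$; if $\Nsmp\ge\nsmp$, discard all but the first $\nsmp$ of them and output $\tfnc_\renprm(\Xon)$ with $\Xon=(X_1,\dots,X_\nsmp)$; if $\Nsmp<\nsmp$, output an arbitrary constant. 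It is convenient to realize the sampling through a single infinite i.i.d.\ sequence $X_1,X_2,\dots$ drawn from $\dP$ together with an independent $\Nsmp\sim\poid{2\nsmp}$, so that the first $\nsmp$ coordinates $\Xon$ are i.i.d.\ from $\dP$ and independent of the event $\{\Nsmp\ge\nsmp\}$.

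Next I would bound the error probability of $\hfnc$ in terms of that of $\tfnc$. Splitting on whether $\Nsmp\ge\nsmp$,
\begin{align*}
\bPr{|\rental{\dP} - \hfnc_\renprm(\XoN)| > \esterr}
&\le \bPr{|\rental{\dP} - \tfnc_\renprm(\Xon)| > \esterr,\ \Nsmp\ge\nsmp} + \bPr{\Nsmp<\nsmp}\\
&\le \bPr{|\rental{\dP} - \tfnc_\renprm(\Xon)| > \esterr} + \bPr{\Nsmp<\nsmp},
\end{align*}
where the last step uses the independence of $\{\Nsmp\ge\nsmp\}$ from $\Xon$ noted above. Taking the maximum over $\dP\in\cP$ on both sides preserves the inequality.

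The remaining step is to control the Poisson lower-tail term. Since $\Nsmp\sim\poid{2\nsmp}$ has mean $2\nsmp$, the standard Chernoff bound for Poisson random variables (with deviation $\delta=1/2$ below the mean) gives $\bPr{\Nsmp<\nsmp}\le e^{-\nsmp/4}$, which is at most $\prerr/2$ precisely when $\nsmp\ge 4\log(2/\prerr)$. Combining this with the hypothesis applied to our constructed estimator $\hfnc$, namely $\max_{\dP\in\cP}\bPr{|\rental{\dP} - \hfnc_\renprm(\XoN)| > \esterr} > \prerr$, yields
\[
\prerr < \max_{\dP\in\cP}\bPr{|\rental{\dP} - \tfnc_\renprm(\Xon)| > \esterr} + \frac{\prerr}{2},
\]
and rearranging gives the claimed bound $\max_{\dP\in\cP}\bPr{|\rental{\dP} - \tfnc_\renprm(\Xon)| > \esterr}>\prerr/2$.

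The only genuine subtlety — and the step I would be most careful about — is the coupling that makes $\{\Nsmp\ge\nsmp\}$ independent of the first $\nsmp$ samples, so that conditioning on having enough samples does not distort the distribution fed to $\tfnc$ and the middle inequality in the display above is valid. Everything else is the Poisson tail estimate, whose constant matches the stated threshold $\nsmp>4\log(2/\prerr)$ exactly.
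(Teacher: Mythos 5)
Your proof is correct: the paper states this reduction without proof (as a well-known fact), and your argument --- constructing the Poisson-sampling estimator from an arbitrary fixed-length one via truncation, using the coupling so that the first $\nsmp$ coordinates are i.i.d.\ from $\dP$, and absorbing the event $\{\Nsmp<\nsmp\}$ with the Poisson lower-tail bound $e^{-\nsmp/4}\le\prerr/2$ --- is exactly the standard argument the paper implicitly relies on, with the constant matching the stated threshold.
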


Also, it will be convenient to replace the observations $\XoN$ with its 
{\it profile}  $\Pfl = \Pfl(\XoN)$ 
\cite{OrlitskySVZ04}, i.e., $\Pfl = (\Pfl_1, \Pfl_2,\ldots)$
where $\Pfll$ is the number of elements $x$ that appear $l$ times in
the sequence $\XoN$. The following well-known result
says that for estimating $\rental{\dP}$, it suffices to consider only the functions of the profile.
\begin{lemma}{\bf (Sufficiency of
    profiles).}\label{l:sufficiency_of_profiles} Consider an estimator  
$\hfnc$ such that 
\begin{align}
\bPr{|\rental{\dP} - \hfnc(\XoN)| > \esterr} \leq \prerr, \quad \text{for all } \dP.
\nonumber
\end{align}
Then, there exists an estimator $\tfnc(\XoN) = \tfnc(\Pfl)$ such that
\begin{align}
\bPr{|\rental{\dP} - \tfnc(\Pfl)| > \esterr} \leq \prerr, \quad \text{for all } \dP.
\nonumber
\end{align}
\end{lemma}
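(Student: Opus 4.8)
The plan is to exploit the fact that R\'enyi entropy is invariant under relabeling of the alphabet: for any permutation $\sigma$ of $\cX$ and the permuted distribution $\dP^\sigma$ defined by $\dP^\sigma(y)=\dP(\sigma^{-1}(y))$, one has $\rental{\dP^\sigma}=\rental{\dP}$ since $\sum_y \dP^\sigma(y)^\renprm=\sum_y \dPx^\renprm$. Because the profile $\Pfl$ retains exactly the multiset of multiplicities while discarding symbol identities, it is a maximal invariant for this relabeling action (together with the exchangeability of i.i.d.\ samples under reordering of the sample positions). I would therefore obtain $\tfnc$ by \emph{symmetrizing} the given estimator $\hfnc$ over all relabelings, so that the symmetrized estimator reads $\XoN$ only through $\Pfl(\XoN)$ while inheriting the accuracy guarantee of $\hfnc$.

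The symmetrization rests on two observations. First, for each \emph{fixed} permutation $\sigma$, the relabeled estimator $\XoN\mapsto\hfnc(\sigma\XoN)$, where $\sigma\XoN=(\sigma(X_1)\upto\sigma(X_\Nsmp))$, is already as accurate as $\hfnc$ on every distribution: if the samples are drawn from $\dP$ then $\sigma\XoN$ is distributed as a sample from $\dP^\sigma$, so using $\rental{\dP}=\rental{\dP^\sigma}$,
\[
\bPr{|\rental{\dP} - \hfnc(\sigma\XoN)| > \esterr}
= \bPr{|\rental{\dP^\sigma} - \hfnc(Y^N)| > \esterr}
\le \prerr,
\]
where $Y^N\sim(\dP^\sigma)^N$ and the final inequality is the hypothesis applied to the distribution $\dP^\sigma$. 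Second, I draw a \emph{uniformly random} permutation $\Sigma$ of $\cX$ together with an independent uniform permutation of the $\Nsmp$ coordinates, and let the symmetrized estimator output $\hfnc$ evaluated on the doubly permuted sequence. Averaging the displayed bound over $\Sigma$ keeps the error probability at most $\prerr$ for every $\dP$, since it is an average of quantities each at most $\prerr$.

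It then remains to verify \emph{profile-measurability}: the conditional law of the symmetrized estimator's output given $\XoN$ depends only on $\Pfl(\XoN)$. This follows from the orbit description --- two sequences have the same profile precisely when their symbol-count vectors are permutations of one another, hence precisely when one can be carried to the other by a relabeling of symbols followed by a reordering of positions. Consequently, applying the uniform random symbol- and position-permutations turns $\XoN$ into a sequence that is uniform over the set of all sequences sharing its profile, so the output law is a function of $\Pfl(\XoN)$ alone. Equivalently, the estimator can be implemented from $\Pfl$ by sampling a uniformly random sequence consistent with $\Pfl$ and feeding it to $\hfnc$; this yields the desired $\tfnc(\XoN)=\tfnc(\Pfl)$.

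The \textbf{main obstacle} I anticipate is the bookkeeping that makes profile-measurability precise, namely establishing the orbit claim and confirming that averaging over the random permutations renders the output a function of the profile rather than merely of the type. A related subtlety is whether $\tfnc$ must be deterministic: the symmetrization naturally produces a \emph{randomized} estimator depending on the data only through $\Pfl$, which already suffices for the two-point lower bounds of Section~\ref{s:lower_bounds} (total-variation-based bounds are unaffected by such data-independent post-processing). I would note that one cannot simply replace the randomized output by its mean or median over $\Sigma$ without risking the PAC-style guarantee, so I would either retain the randomized formulation or invoke a separate derandomization step.
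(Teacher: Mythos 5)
Your argument is correct. The paper states this lemma as a well-known fact and supplies no proof of its own, so there is nothing to compare against; the symmetrization you describe is the standard justification. Concretely: for a fixed symbol permutation $\sigma$ and position permutation $\pi$, the sequence $\sigma\pi\XoN$ with $\XoN$ drawn i.i.d.\ from $\dP$ is distributed as an i.i.d.\ sample from $\dP^\sigma$, and $\rental{\dP^\sigma}=\rental{\dP}$, so the hypothesis applied to $\dP^\sigma$ bounds the error of $\XoN\mapsto\hfnc(\sigma\pi\XoN)$ by $\prerr$ under $\dP$; averaging over uniform $(\Sigma,\Pi)$ preserves the bound, and by the orbit--stabilizer argument the doubly permuted sequence is uniform on the set of sequences sharing $\Pfl(\XoN)$, so the output law depends on $\XoN$ only through $\Pfl$. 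Your closing caveat is also handled correctly: the resulting $\tfnc$ is a randomized function of the profile, which is all that the downstream use requires, since the Le Cam two-point bound via $\|\dP_\Pfl-\dQ_\Pfl\|$ in Theorem~\ref{t:profiles_var_dis} is unaffected by data-independent randomization (and, if a deterministic-in-$\Pfl$ map were insisted upon, taking the median over $(\Sigma,\Pi)$ costs only a factor of $2$ in $\prerr$ by Markov's inequality). The one implicit assumption worth flagging is that the alphabet is finite so that a uniform symbol permutation exists, which holds throughout Section~\ref{s:lower_bounds}.
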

\noindent Thus, lower bounds on the sample complexity will follow upon 
showing a contradiction for the second inequality above when the number
of samples $\nsmp$ is sufficiently small. We obtain the required
contradiction by using Lemma~\ref{l:two_point} upon showing there are distributions $\dP$ and $\dQ$ of support-size $\absz$ such that the following hold:
\begin{itemize}
\item[(i)] There exists $\esterr >0$ such that 
\begin{align}
|\rental{\dP} - \rental{\dQ}| > \esterr;
\label{e:estimation_error}
\end{align}
\item[(ii)] denoting by $\dP_\Pfl$ and $\dQ_\Pfl$, respectively, the distributions on the profiles
under Poisson sampling corresponding to underlying distributions $\dP$ and $\dQ$, there exist 
$\prerr>0$ 
such that
\begin{align}
\|\dP_\Pfl - \dQ_\Pfl\| < \prerr,
\label{e:L1_profiles}
\end{align}
if $\nsmp <  k^{\,c(\renprm)}$.
\end{itemize}
Therefore, it suffices to find two distributions $\dP$ and $\dQ$ with
different R\'enyi entropies and with small total variation distance between
the distributions of their profiles, when $\nsmp$ is sufficiently small.
For the latter requirement, we recall a result of~\cite{Valiant08} that
allows us to bound the total variation distance in \eqref{e:L1_profiles}
in terms of the differences of power sums $|\norma(\dP)- \norma(\dQ)|$.
\begin{theorem}{\cite{Valiant08}}\label{t:profiles_var_dis}
Given distributions $\dP$ and $\dQ$ such that
\begin{align}
\max_x \max\{\dPx ; \dQ_x\} \leq \frac{\prerr}{40\nsmp},
\nonumber
\end{align}
for Poisson sampling with $\Nsmp \sim \psns{\nsmp}$, it holds that
$$\|\dP_\Pfl - \dQ_\Pfl\|  \leq \frac{\prerr}{2} + 5\sum_a \nsmp^a |\norma(\dP)- \norma(\dQ)|.$$
\end{theorem}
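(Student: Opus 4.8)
The plan is to use Poisson sampling to reduce the profile to a permutation-invariant functional of independent Poisson multiplicities, and then to exploit that its law depends on the underlying distribution \emph{only} through the power sums $\norma(\dP)$. First I would Poissonize: with $\Nsmp\sim\psns{\nsmp}$ the multiplicities $\{\Mltsmb\}$ are independent with $\Mltsmb\sim\poisson(\nsmp\dPx)$, and the profile is obtained by aggregating them, $\Pfll=\sum_\smb\indicator\{\Mltsmb=l\}$. Since $\Pfl$ is invariant under relabeling of symbols, the induced law $\dP_\Pfl$ is a function of the \emph{multiset} $\{\dPx\}$ alone, equivalently of the sequence of power sums $(\norma(\dP))_{a\ge1}$.

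The structural heart of the argument is an exact expansion of the probability generating function of the profile in these power sums. Writing $G_\dP(\bu)=\expectation{\prod_{l\ge1}u_l^{\Pfll}}$, independence of the multiplicities gives the product form
\[
G_\dP(\bu)=\prod_\smb\Paren{e^{-\nsmp\dPx}+\sum_{l\ge1}u_l\,\poiprob{\nsmp\dPx}{l}}.
\]
Taking logarithms, $\log G_\dP(\bu)=\sum_\smb g(\nsmp\dPx;\bu)$ with $g(\lambda;\bu)=-\lambda+\log\Paren{1+\sum_{l\ge1}u_l\lambda^l/l!}$, a function that vanishes at $\lambda=0$ and is analytic for small $\lambda$. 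Expanding $g(\lambda;\bu)=\sum_{a\ge1}\psi_a(\bu)\lambda^a$ and summing over $\smb$ yields
\[
\log G_\dP(\bu)=\sum_{a\ge1}\psi_a(\bu)\,\nsmp^a\,\norma(\dP),
\]
with coefficient functions $\psi_a$ that do not depend on $\dP$. Hence $\log G_\dP(\bu)-\log G_\dQ(\bu)=\sum_{a\ge1}\psi_a(\bu)\,\nsmp^a\Paren{\norma(\dP)-\norma(\dQ)}$, so the two profile laws differ \emph{only} through the power-sum gaps $|\norma(\dP)-\norma(\dQ)|$. Crucially, no term depending on a single distribution in isolation survives the subtraction, which is exactly what prevents the spurious self-error terms (such as a $\nsmp^2\norma(\dP)$ with $a=2$) that a naive marginal Poisson approximation of each $\Pfll$ would incur.

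The hypothesis $\max_\smb\max\{\dPx;\dQx\}\le\prerr/(40\nsmp)$ forces every $\nsmp\dPx,\nsmp\dQx\le\prerr/40<1$, which is what makes the expansion converge rapidly: since $\nsmp^a\norma(\dP)=\sum_\smb(\nsmp\dPx)^a$ decays geometrically in $a$, the tail of the series is negligible, the coefficients $\psi_a(\bu)$ stay bounded, and the high-multiplicity coordinates of $\Pfl$ are vanishingly unlikely. I would use this smallness both to truncate the contribution of large $a$ and large $l$ — this truncation is precisely what produces the additive $\prerr/2$ slack — and to justify the interchange of sums above. The final and hardest step is to convert the bound on the generating-function difference into a bound on $\|\dP_\Pfl-\dQ_\Pfl\|$ in total variation with the explicit constant $5$; this is the technical core of Valiant's argument and would be carried out either by a Fourier/contour estimate relating $G_\dP-G_\dQ$ to the $\ell_1$ gap of the two laws, or by a hybrid interpolation across the power sums that charges each increment a cost proportional to $\nsmp^a|\norma(\dP)-\norma(\dQ)|$. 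Controlling the accumulation of error in this last conversion, rather than establishing the purely power-sum dependence, is where the main difficulty lies.
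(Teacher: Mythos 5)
First, a point of reference: the paper does not prove Theorem~\ref{t:profiles_var_dis} at all --- it is imported verbatim from \cite{Valiant08} --- so your attempt can only be compared against the known proof there. Your structural reduction is correct and matches the starting point of that proof: under Poisson sampling the multiplicities are independent, the profile $\Pfl$ with $\Pfll=\sum_\smb\indicator\{\Mltsmb=l\}$ is permutation-invariant, its law is a symmetric function of $\{\nsmp\dPx\}$, and your identity $\log G_{\dP}(\bu)=\sum_{a\ge1}\psi_a(\bu)\,\nsmp^a\,\norma(\dP)$ is a valid formal expansion. The genuine gap is precisely the step you defer: converting control of $G_{\dP}-G_{\dQ}$ into control of $\|\dP_\Pfl-\dQ_\Pfl\|$. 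This is not a technicality that a ``Fourier/contour estimate'' or a ``hybrid interpolation'' dispatches generically. The PGF tests the profile law only against the restricted class of product functions $\prod_l u_l^{\Pfll}$ on the polydisc, whereas total variation is a supremum over \emph{all} bounded test functions; recovering probabilities from a generating function by coefficient extraction loses factors that grow with the order of the coefficient, so pointwise closeness of generating functions does not by itself imply TV closeness. Since the entire quantitative content of the theorem --- the constant $5$, the additive $\prerr/2$, and the role of the hypothesis $\max_\smb\max\{\dPx,\dQx\}\le\prerr/(40\nsmp)$ --- lives in this deferred step, what you have is a correct (and fairly standard) reduction plus an unproven core, not a proof.

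The actual argument in \cite{Valiant08} avoids inverting any generating function by working throughout with $\ell_1$ distances between intensity vectors. Under Poisson sampling, $\Pfl=\sum_\smb V_\smb$ where the $V_\smb$ are independent indicator vectors ($V_\smb$ has a one in coordinate $l$ exactly when $\Mltsmb=l\ge1$). A multivariate Poisson approximation theorem (due to Roos) bounds the TV distance between the law of such a sum and the product of Poisson laws with matching intensities $\lambda_l(\dP)=\sum_\smb\poiprob{\nsmp\dPx}{l}$, with an error controlled by $\max_\smb \nsmp\dPx$; this is exactly where the hypothesis $\dPx\le\prerr/(40\nsmp)$ enters and what produces the $\prerr/2$ slack (not the truncation of large $a$ and $l$, as you guessed). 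It then remains to compare two product-Poisson laws, for which TV is at most the $\ell_1$ distance of intensities, $\sum_l|\lambda_l(\dP)-\lambda_l(\dQ)|$. Expanding $\poiprob{\lambda}{l}=\sum_{j\ge0}(-1)^j\lambda^{l+j}/(l!\,j!)$ and resumming over $l$ gives
\begin{equation*}
\sum_{l\ge1}\bigl|\lambda_l(\dP)-\lambda_l(\dQ)\bigr|
\;\le\;
\sum_{a\ge1}\frac{2^a-1}{a!}\,\nsmp^a\,\bigl|\norma(\dP)-\norma(\dQ)\bigr|,
\end{equation*}
whose coefficients are uniformly bounded, yielding the $5\sum_a\nsmp^a|\norma(\dP)-\norma(\dQ)|$ term. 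Replacing your placeholder with this Poisson-approximation step is what a complete proof requires.
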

It remains to construct the required distributions $\dP$
and $\dQ$, satisfying \eqref{e:estimation_error} and \eqref{e:L1_profiles} above. 
By Theorem~\ref{t:profiles_var_dis}, the total variation distance $\|\dP_\Pfl - \dQ_\Pfl\|$ can be
made small by ensuring that the power sums of distributions $\dP$ and $\dQ$ are matched, that is,
we need distributions $\dP$ and $\dQ$ with different R\'enyi entropies and identical
power sums for as large an order as possible. To that end, for every positive integer ${\dm}$ and every vector $\bx = (x_1, ..., x_{\dm}) \in \mR^{\dm}$, associate
with $\bx$ a distribution $\dP^{\bx}$ of support-size ${\dm}\absz$ such that
$$\dP^{\bx}_{ij} = \frac{|x_i|}{k\|\bx\|_1}, \quad 1\leq i \leq {\dm}, \, 1\leq j\leq k.$$
Note that  
\begin{align}
\rental{\dP^{\bx}} &=  \log k + \frac{\renprm}{\renprm-1}\log\frac{\|\bx\|_1}{\|\bx\|_\renprm},
\nonumber
\end{align}
and for all $a$
\begin{align}
\norma\left(\dP^{\bx}\right) &=  \frac{1}{k^{a-1}}\left(\frac{\|\bx\|_a}{\|\bx\|_1}\right)^a.
\nonumber
\end{align}
We choose the required distributions $\dP$ and $\dQ$, respectively, as $\dP^{\bx}$ and $\dP^{\by}$,
where the vectors $\bx$ and $\by$ are given by the next result.
\begin{lemma}\label{l:xy_construction}
For every ${\dm}\in \mN$ and $\renprm$ not integer, 
there exist positive vectors $\bx, \by \in \mR^{\dm}$ such that
\begin{align}
\|\bx\|_r &= \|\by\|_r, \quad 1\leq r \leq {{\dm}-1},
\nonumber
\\
\|\bx\|_{\dm} &\neq \|\by\|_{\dm},
\nonumber
\\
\|\bx\|_\renprm &\neq \|\by\|_\renprm.
\nonumber
\end{align}
\end{lemma}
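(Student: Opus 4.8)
The plan is to recast the three requirements in terms of power sums and then obtain $\bx$ and $\by$ as two members of a one‑parameter family of positive vectors produced by perturbing the roots of a fixed polynomial. Because the vectors are positive, matching the $\ell_r$ norms $\|\bx\|_r=\|\by\|_r$ for $1\le r\le \dm-1$ is equivalent to matching the power sums $\sum_i x_i^r=\sum_i y_i^r$, and the last two (non‑)equalities translate the same way. First I fix $\dm$ distinct positive reals $x_1^0<\cdots<x_{\dm}^0$, write $p_0(z)=\prod_i(z-x_i^0)=z^{\dm}-e_1z^{\dm-1}+\cdots+(-1)^{\dm}e_{\dm}$, and consider the deformation $p_a(z)=p_0(z)+(-1)^{\dm}(a-e_{\dm})$ that alters only the constant term, i.e.\ it keeps $e_1,\ldots,e_{\dm-1}$ fixed and sets the last elementary symmetric function equal to $a$. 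Since all roots of $p_0$ are simple, the implicit function theorem supplies, for $a$ in a small interval around $e_{\dm}$, exactly $\dm$ distinct positive real roots $x_1(a),\ldots,x_{\dm}(a)$ that depend smoothly on $a$. By Newton's identities the power sum $\sum_i x_i(a)^r$ for $1\le r\le \dm-1$ is a function of $e_1,\ldots,e_{\dm-1}$ alone, hence constant along the family, which already secures the first block of equalities.

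Next I would show that both the top integer power sum and the fractional one genuinely move with $a$. Differentiating a simple root gives $\frac{dx_i}{da}=-(-1)^{\dm}/p_a'(x_i)=-(-1)^{\dm}/\prod_{j\ne i}(x_i-x_j)$, so for any real exponent $s$,
\[
\frac{d}{da}\sum_i x_i(a)^s
=-s\,(-1)^{\dm}\sum_{i}\frac{x_i^{\,s-1}}{\prod_{j\ne i}(x_i-x_j)}
=-s\,(-1)^{\dm}\,g_s[x_1,\ldots,x_{\dm}],
\]
where $g_s(x)=x^{s-1}$ and $g_s[\,\cdot\,]$ is its divided difference on the $\dm$ nodes. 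By the mean‑value form of divided differences, $g_s[x_1,\ldots,x_{\dm}]=g_s^{(\dm-1)}(\xi)/(\dm-1)!$ for some $\xi$ strictly between $\min_i x_i$ and $\max_i x_i$, so $\xi>0$. For $s=\dm$ the derivative $g_{\dm}^{(\dm-1)}(\xi)=(\dm-1)!$ yields divided difference $1\ne 0$, so $\sum_i x_i(a)^{\dm}$ is affine in $a$ with nonzero slope. For $s=\renprm$ one has $g_{\renprm}^{(\dm-1)}(x)=\bigl(\prod_{i=1}^{\dm-1}(\renprm-i)\bigr)x^{\renprm-\dm}$, and because $\renprm\notin\mN$ none of the factors $\renprm-i$ vanishes, whence $g_{\renprm}^{(\dm-1)}(\xi)\ne 0$ and $\frac{d}{da}\sum_i x_i(a)^{\renprm}\ne 0$ throughout the interval.

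Finally, since both $\sum_i x_i(a)^{\dm}$ and $\sum_i x_i(a)^{\renprm}$ have nowhere‑vanishing $a$‑derivative, each is strictly monotone, hence injective, in $a$. Picking any two distinct values $a_0\ne a_1$ in the interval and setting $\bx=(x_i(a_0))_i$, $\by=(x_i(a_1))_i$ produces positive vectors with $\|\bx\|_r=\|\by\|_r$ for $1\le r\le \dm-1$, $\|\bx\|_{\dm}\ne\|\by\|_{\dm}$, and $\|\bx\|_{\renprm}\ne\|\by\|_{\renprm}$, as claimed. The delicate point — and exactly where non‑integrality of $\renprm$ is indispensable — is the fractional case: the divided difference of $x^{\renprm-1}$ is forced to be nonzero because the $(\dm-1)$‑th derivative of $x^{\renprm-1}$ never vanishes for positive $x$ when $\renprm\notin\mN$, whereas for an integer order $\renprm\le\dm-1$ this derivative vanishes identically and the construction would collapse. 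The only routine care needed elsewhere is checking that the roots stay real, simple, and positive on a genuine interval of $a$, which is immediate from simplicity of the initial roots together with continuity.
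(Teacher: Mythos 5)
Your proof is correct, and it uses the same underlying construction as the paper (perturb only the constant coefficient of a polynomial with $\dm$ simple positive roots, and invoke Newton's identities to freeze the first $\dm-1$ power sums), but the decisive step---showing that the $\renprm$-th power sum actually moves---is handled by a genuinely different argument. The paper fixes the specific nodes $x_i=i$, expands $\sum_i g(y_i)-g(x_i)$ to first order in $\Delta$, evaluates $\sum_i g'(x_i)/p'(x_i)$ in closed form as the $\dm$-th finite difference $\frac{\renprm}{\dm!}\sum_i\binom{\dm}{i}(-1)^{\dm-i}i^\renprm$, and then appeals to the fact that a nontrivial linear combination of $\dm$ exponentials in $\renprm$ has at most $\dm-1$ zeros, which are exhausted by $\renprm=1,\ldots,\dm-1$. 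You instead recognize $\sum_i x_i^{\,s-1}/\prod_{j\ne i}(x_i-x_j)$ as the divided difference of $x^{s-1}$ on the nodes and apply its mean-value form, so that non-integrality of $\renprm$ enters through the nonvanishing of $\prod_{i=1}^{\dm-1}(\renprm-i)\,\xi^{\renprm-\dm}$ for $\xi>0$. Your route buys three things: it works for an arbitrary choice of distinct positive nodes rather than the specific $x_i=i$; it replaces the paper's slightly informal first-order Taylor bookkeeping (the $o(\Delta)$ terms) with an exact derivative along the one-parameter family, yielding strict monotonicity of both $\sum_i x_i(a)^{\dm}$ and $\sum_i x_i(a)^{\renprm}$ on the whole interval rather than mere nonvanishing for ``sufficiently small'' $\Delta$; and it makes transparent exactly why integer orders $\renprm\le\dm-1$ are excluded. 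The paper's version, in exchange, is more elementary in that it needs only the zero-counting fact for exponential sums rather than the mean-value theorem for divided differences. The only points requiring care in your write-up---persistence of real, simple, positive roots on an interval of $a$, and smoothness of $x^{\renprm-1}$ on $(0,\infty)$---are both addressed, so I see no gap.
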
 
\begin{proof} 
Let $\bx = (1, ..., {\dm}))$. Consider the polynomial
\begin{align*}
p(z) &= (z-x_1)...(z-x_{\dm}),
\end{align*}
and $q(z) = p(z) - \Delta$, where $\Delta$ is chosen small enough so that $q(z)$ has ${\dm}$
positive roots. Let $y_1, ..., y_{\dm}$ be the roots of the polynomial $q(z)$. By Newton-Girard identities, while the sum of 
${\dm}$th power of roots of a polynomial does depend on the constant term, the sum of
first ${\dm}-1$ powers of roots of a polynomial do not depend on it. Since $p(z)$ and $q(z)$ differ only by a constant, it holds that
\begin{align*}
\sum_{i=1}^{{\dm}}x_i^r = \sum_{i=1}^{{\dm}}y_i^r, \quad 1 \leq r \leq {\dm}-1,
\end{align*}
and that
\begin{align*}
\sum_{i=1}^{{\dm}}x_i^{{\dm}} \neq \sum_{i=1}^{{\dm}}y_i^{\dm}.
\end{align*}
Furthermore, using a first order Taylor approximation, we have
\begin{align*}
y_i - x_i = \frac{\Delta}{p'(x_i)} + \order(\Delta),
\end{align*}
and for any differentiable function $g$,
\begin{align*}
g(y_i) - g(x_i) = g'(x_i)(y_i - x_i) + o(|y_i - x_i|).
\end{align*}
It follows that
\begin{align*}
\sum_{i=1}^{\dm} g(y_i) - g(x_i) = \sum_{i=1}^{\dm}\frac{g'(x_i)}{p'(x_i)}\Delta + \order(\Delta),
\end{align*}
and so, the left side above is nonzero for all $\Delta$ sufficiently small provided
\begin{align}
\sum_{i=1}^{\dm}\frac{g'(x_i)}{p'(x_i)} \neq 0.
\nonumber
\end{align}
Upon choosing $g(x) = x^\renprm$, we get
\begin{align*}
\sum_{i=1}^{\dm}\frac{g'(x_i)}{p'(x_i)} = \frac{\renprm}{{\dm} !}\sum_{i=1}^{\dm} \left(
\begin{matrix}
{\dm}\\ i
\end{matrix}
\right)(-1)^{{\dm}-i}\, i^\renprm.
\end{align*}
Denoting the right side above by $h(\renprm)$, note that $h(i) = 0$
for $i = 1, ..., {\dm}-1$. Since $h(\renprm)$ is a linear combination of 
${\dm}$ exponentials, it cannot have more than ${\dm}-1$ zeros (see, for instance, \cite{Tos06}).
Therefore, $h(\renprm) \neq 0$ for all $\renprm \notin \{1,..., {\dm}-1\}$; in particular,  $\|\bx\|_\renprm \neq \|\by\|_\renprm$
for all $\Delta$ sufficiently small.
\end{proof}
We are now in a position to prove our converse results.

 
\begin{theorem}\label{t:lower_bounds_arbitrary}
Given a nonintegral $\renprm>1$, for any fixed $0 < \prerr< 1/2$, we have
$$\sam_\renprm(\absz, \esterr, \prerr) = \dbltldOmg(\absz).$$
\end{theorem}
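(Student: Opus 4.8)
The plan is to invoke Le Cam's two-point method through the profile reduction, using the explicit distributions from the construction preceding Lemma~\ref{l:xy_construction}. Fix an integer $\dm\ge2$, let $\bx,\by\in\mR^\dm$ be the positive vectors from Lemma~\ref{l:xy_construction}, and instantiate $\dP:=\dP^\bx$ and $\dQ:=\dP^\by$ with block size $\absz/\dm$, so that both have support size $\absz$. Because $\|\bx\|_1=\|\by\|_1$ (the case $r=1$), the $\log(\absz/\dm)$ and $\|\cdot\|_1$ contributions cancel in the entropy formula, leaving $|\rental{\dP}-\rental{\dQ}|=\frac{\renprm}{\renprm-1}\bigl|\log(\|\by\|_\renprm/\|\bx\|_\renprm)\bigr|=:2\esterr_\dm$, a strictly positive constant depending only on $\dm$ and $\renprm$ (positive since $\|\bx\|_\renprm\neq\|\by\|_\renprm$) and, crucially, independent of $\absz$. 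This establishes requirement~(i).

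Next I bound $\|\dP_\Pfl-\dQ_\Pfl\|$ via Theorem~\ref{t:profiles_var_dis}. Its side condition $\max_x\max\{\dPx,\dQx\}\le\prerr/(40\nsmp)$ holds once $\nsmp=O(\prerr\absz)$, since $\max_x\dPx=\Theta(1/\absz)$. The crux is the series $\sum_a\nsmp^a|\norma(\dP)-\norma(\dQ)|$: from $\norma(\dP^\bx)=(\absz/\dm)^{-(a-1)}(\|\bx\|_a/\|\bx\|_1)^a$, every term with $2\le a\le\dm-1$ vanishes identically because $\|\bx\|_a=\|\by\|_a$ there. The first surviving term, at $a=\dm$, has size $\Theta\bigl(\nsmp^\dm/(\absz/\dm)^{\dm-1}\bigr)$, and bounding each remaining term by $\norma(\dP)+\norma(\dQ)\le 2(\absz/\dm)^{-(a-1)}$ (using $\|\bx\|_a\le\|\bx\|_1$) turns the tail into a convergent geometric series of the same order once $\nsmp\le\absz/(2\dm)$. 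Hence $\|\dP_\Pfl-\dQ_\Pfl\|<\prerr$ whenever $\nsmp\le c(\prerr,\dm)\,\absz^{1-1/\dm}$, which is requirement~(ii).

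Combining Lemma~\ref{l:sufficiency_of_profiles}, Poisson approximation 2, and Lemma~\ref{l:two_point}, no estimator can estimate $\rental{\dP}$ to additive accuracy $\esterr_\dm$ with confidence $1-\prerr$ from fewer than $c(\prerr,\dm)\,\absz^{1-1/\dm}$ samples; that is, $\sam_\renprm(\absz,\esterr_\dm,\prerr)=\Omega_\dm(\absz^{1-1/\dm})$ for all sufficiently small $\prerr$. Since $\dm$ is arbitrary, given any $\eta>0$ I choose $\dm>1/\eta$ and set $\esterr=\esterr_\dm$; then $\sam_\renprm(\absz,\esterr,\prerr)=\Omega(\absz^{1-1/\dm})\ge\Omega(\absz^{1-\eta})$, which is exactly the $\dbltldOmg(\absz)$ conclusion.

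The delicate point is the power-sum series: I must ensure the leading discrepancy is genuinely the $a=\dm$ term — which is precisely what matching the first $\dm-1$ norms in Lemma~\ref{l:xy_construction} provides — and control the infinite tail so that it does not dominate. The remaining bookkeeping is to carry the $\dm$-dependence through the support substitution $\absz\mapsto\absz/\dm$, so that the exponent $1-1/\dm$ can be pushed arbitrarily close to $1$ while, for each fixed $\dm$, the entropy gap $\esterr_\dm$ stays bounded away from $0$.
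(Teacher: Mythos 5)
Your proposal is correct and follows essentially the same route as the paper: the two-point method via profiles, the block distributions $\dP^\bx,\dP^\by$ built from the vectors of Lemma~\ref{l:xy_construction}, the entropy gap from $\|\bx\|_\renprm\neq\|\by\|_\renprm$ with $\|\bx\|_1=\|\by\|_1$, the vanishing of the power-sum discrepancies up to order $\dm-1$ in Theorem~\ref{t:profiles_var_dis}, and letting $\dm\to\infty$ to push the exponent to $1$. In fact your write-up supplies details (the side condition of Theorem~\ref{t:profiles_var_dis}, the geometric tail bound for $a>\dm$, and the $\absz\mapsto\absz/\dm$ support bookkeeping) that the paper's own very terse proof leaves implicit.
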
 
\begin{proof} For a fixed $\dm$, let distributions
$\dP$ and $\dQ$ be as in the previous proof. Then, as in the proof of Theorem~\ref{t:lower_bounds_arbitrary},  inequality \eqref{e:estimation_error} holds by Lemma~\ref{l:xy_construction} and~\eqref{e:L1_profiles} holds by Theorem \ref{t:profiles_var_dis}
if $\nsmp < \Const_2\absz^{({\dm}-1)/{\dm}}$. The theorem follows since
$\dm$ can be arbitrary large.
\end{proof}

Finally, we show that $\sam_\renprm(\absz)$ must be super-linear in $\absz$ for $\renprm<1$.
\begin{theorem}\label{t:lower_bounds_alpha_small}	
Given $\renprm<1$, for every $0 < \prerr< 1/2$, we have
$$\sam_\renprm(k, \esterr, \prerr) = \dbltldOmg\left(\absz^{1/\renprm}\right).$$
\end{theorem}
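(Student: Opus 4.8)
The plan is to follow the Le~Cam two-point route used for Theorem~\ref{t:lower_bounds_arbitrary}: reduce the problem, via the Poisson-sampling reduction (Poisson approximation~2) and the sufficiency of profiles (Lemma~\ref{l:sufficiency_of_profiles}), to producing two distributions $\dP,\dQ$ of support size at most $\absz$ satisfying \eqref{e:estimation_error} and \eqref{e:L1_profiles}, and then to invoke Lemma~\ref{l:two_point}. The new difficulty is that $1/\renprm>1$, so the target $\nsmp\approx\absz^{1/\renprm}$ exceeds $\absz$; consequently the hypothesis $\max_\smb\max\{\dPx,\dQx\}\le\prerr/(40\nsmp)$ of Theorem~\ref{t:profiles_var_dis} fails for \emph{every} normalized $\absz$-symbol distribution, whose largest probability is at least $1/\absz>\prerr/(40\nsmp)$. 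In particular the construction $\dP^{\bx}$ used for $\renprm>1$ has maximal probability $\Theta(1/\absz)$ and only yields indistinguishability up to $\nsmp=O(\absz)$. I therefore need a construction whose discriminating symbols have probability as small as $\absz^{-1/\renprm}$.

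To this end I would keep almost all of the mass on a single heavy atom common to $\dP$ and $\dQ$, and carry the discrimination on light symbols at scale $s=\theta\,\absz^{-1/\renprm}$. Fix $\dm\in\mN$ and the positive moment-matched vectors $\bx,\by\in\mR^{\dm}$ of Lemma~\ref{l:xy_construction}. Put $m=\lfloor(\absz-1)/\dm\rfloor\approx\absz/\dm$ light symbols at each of the $\dm$ levels, assigning probability $s|x_i|$ (for $\dP$) or $s|y_i|$ (for $\dQ$) to the symbols of level $i$, and place the remaining mass $p_0=1-ms\|\bx\|_1$ on one atom. Because $\|\bx\|_1=\|\by\|_1$ the atom carries the same probability under $\dP$ and $\dQ$, and since $1-1/\renprm<0$ the light mass $ms\|\bx\|_1=\Theta(\absz^{1-1/\renprm})\to0$, so $p_0\to1$. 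The order-$a$ power sums of the light parts equal $ms^a\|\bx\|_a^a$ and $ms^a\|\by\|_a^a$, which by Lemma~\ref{l:xy_construction} agree for every integer $a\le\dm-1$ and first differ at $a=\dm$. The atom's contribution cancels in the difference, giving $\normP\renprm-\normQ\renprm=ms^\renprm(\|\bx\|_\renprm^\renprm-\|\by\|_\renprm^\renprm)=\Theta(1/\dm)$ while $\normP\renprm,\normQ\renprm=1+\Theta(1/\dm)$; hence $|\rental\dP-\rental\dQ|=\Theta(1/\dm)$, and I set $\esterr$ to a small multiple of $1/\dm$, which is permitted since the $\dbltldOmg$ notation allows the accuracy to depend on $\eta$. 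This establishes \eqref{e:estimation_error}.

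For \eqref{e:L1_profiles} the crux is to strip off the heavy atom before applying Theorem~\ref{t:profiles_var_dis}, whose hypothesis \emph{does} hold for the light symbols once $\nsmp\lesssim\absz^{1/\renprm}$, since their maximal probability is $s\max_i|x_i|\le\prerr/(40\nsmp)$. Under Poisson sampling the atom's count is independent of the light counts and has the same law under $\dP$ and $\dQ$; writing the full profile $\Pfl$ as a fixed function of the light profile $\Pfl_L$ and this common, independent count, the data-processing and tensorization properties of total variation yield $\|\dP_\Pfl-\dQ_\Pfl\|\le\|\dP_{\Pfl_L}-\dQ_{\Pfl_L}\|$. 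Applying Theorem~\ref{t:profiles_var_dis} to the light parts, only the moments $a\ge\dm$ survive and the resulting series is geometric, dominated by its first term of order $m\dm\,(\nsmp sB)^{\dm}\approx\absz^{1-\dm/\renprm}\nsmp^{\dm}$, with $B=\max(\|\bx\|_\infty,\|\by\|_\infty)$. This keeps the bound below $\prerr$ exactly when $\nsmp\lesssim\absz^{1/\renprm-1/\dm}$, so no estimator succeeds with fewer samples; letting $\dm\to\infty$ sends the exponent $1/\renprm-1/\dm$ to $1/\renprm$, and choosing $\dm>1/\eta$ realizes the $\eta$-slack, proving $\sam_\renprm(\absz,\esterr,\prerr)=\dbltldOmg(\absz^{1/\renprm})$.

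The step demanding the most care, and the only genuine departure from the $\renprm>1$ proof, is this heavy-atom reduction. I must justify that factoring out the common atom is valid — namely that under Poisson sampling $\Pfl$ is determined by $\Pfl_L$ together with the atom's independent, identically distributed count, so that total variation can only decrease — and that Theorem~\ref{t:profiles_var_dis} may legitimately be invoked on the sub-stochastic light part, which is fine because that bound depends only on the independent Poisson counts and on the power sums, neither of which is affected by the missing normalization. The remaining estimates (the geometric-series tail and the $\Theta(1/\dm)$ entropy gap) are routine given Lemma~\ref{l:xy_construction}.
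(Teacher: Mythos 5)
Your proposal is correct and follows essentially the same route as the paper: the paper's own proof for $\renprm<1$ also places a common heavy atom $\dP_0=\dQ_0=1-\absz^{-\beta}$ and spreads the moment-matched vectors $\bx,\by$ of Lemma~\ref{l:xy_construction} over the remaining $\absz\dm$ light symbols at a scale small enough for Theorem~\ref{t:profiles_var_dis} to bite, then concludes by Le Cam's two-point method. Two differences are worth recording. First, the scaling: the paper puts the light symbols at scale $\absz^{-(1+\beta)}$ with $\renprm(1+\beta)<1$ held strictly, so the light part dominates $\normP\renprm$ (which grows like $\absz^{1-\renprm(1+\beta)}$) and the entropy gap is a constant $\frac{\renprm}{1-\renprm}\left|\log\left(\|\bx\|_\renprm/\|\by\|_\renprm\right)\right|$ for fixed $\dm$; the exponent $1+\beta-1/\dm$ then approaches $1/\renprm$ only by letting both $\beta\uparrow 1/\renprm-1$ and $\dm\to\infty$. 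You instead sit exactly at scale $\absz^{-1/\renprm}$, so the heavy atom dominates the power sum and your entropy gap shrinks as $\Theta(1/\dm)$ --- still legitimate, since the $\dbltldOmg$ definition allows $\esterr$ to depend on $\eta$, and it saves one limiting parameter. Second, and more substantively: you explicitly strip off the heavy atom --- using that its Poisson count is independent of the light multiplicities and identically distributed under $\dP$ and $\dQ$, so total variation between full profiles is at most that between light profiles --- before applying Theorem~\ref{t:profiles_var_dis} to the light part. The paper applies that theorem directly to $\dP$ and $\dQ$, whose heavy atom plainly violates the hypothesis $\max_x\max\{\dPx,\dQx\}\le\prerr/(40\nsmp)$; your reduction is precisely the justification that step needs (alternatively, one can normalize the light parts, which carry equal mass since $\|\bx\|_1=\|\by\|_1$, and absorb that mass into the Poisson sampling rate, after which the theorem applies verbatim). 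On this point your write-up is more careful than the paper's own proof.
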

\begin{proof} Consider distributions $\dP$ and $\dQ$ on an alphabet of size $kd +1$, where
\[
\dP_{ij} = \frac{\dP^{\bx}_{ij}}{\absz^\beta} \text{ and } \dQ_{ij} = \frac{\dP^{\bx}_{ij}}{\absz^\beta}, \quad 1\leq i \leq {\dm}, \, 1\leq j\leq k,
\]
where the vectors $\bx$ and $\by$ are given by Lemma \ref{l:xy_construction} and $\beta$ satisfies $\renprm(1+\beta) < 1$, and
\[
\dP_0 =  \dQ_0 = 1 - \frac{1}{\absz^\beta}.
\]
For this choice of $\dP$ and $\dQ$, we have
\begin{align*}
\norma\left(\dP\right) &=  \left(1 - \frac{1}{\absz^\beta}\right)^a + \frac{1}{k^{a(1+\beta) -1}}\left(\frac{\|\bx\|_a}{\|\bx\|_1}\right)^a,
\\
\rental{\dP} &=  \frac{1-\renprm(1+\beta) }{1 - \renprm}\log \absz + \frac{\renprm}{1-\renprm}\log\frac{\|\bx\|_\renprm}{\|\bx\|_1} + \Order(\absz^{a(1+\beta) -1}),
\end{align*}
and similarly for $\dQ$, which further yields
\begin{align*}
|\rental{\dP} - \rental{\dQ}|&=  \frac{\renprm}{1-\renprm}\left|\log\frac{\|\bx\|_\renprm}{\|\by\|_\renprm}\right| + \Order(\absz^{a(1+\beta) -1}).
\end{align*}
Therefore, for  sufficiently large $\absz$, \eqref{e:estimation_error} holds by Lemma~\ref{l:xy_construction} since $\renprm(1+\beta) < 1$, and for $\nsmp < \Const_2 \absz^{(1 + \beta -1/\dm)}$ we get \eqref{e:L1_profiles} by Theorem \ref{t:profiles_var_dis} as
\begin{align}
\|\dP_\Pfl - \dQ_\Pfl\|  \leq \frac{\prerr}{2} + 5\sum_{a\geq {\dm}} \left(\frac{\nsmp}{\absz^{1+\beta -1/a}}\right)^a
\leq \prerr.
\nonumber
\end{align} 
The theorem follows since $\dm$ and $\beta< 1/\renprm -1$ are arbitrary.
\end{proof}

\section*{Acknowledgements}
The authors thank Chinmay Hegde and Piotr Indyk for helpful discussions and suggestions.

\bibliography{IEEEabrv,masterref}
\bibliographystyle{IEEEtranS}
\section*{Appendix A: Estimating power sums}
The broader problem of estimating smooth functionals of
 distributions was considered in~\cite{Valiant11b}.
 Independently and concurrently with this work,~\cite{JiaoVW14}
 considered estimating more general 
 functionals and applied their technique to estimating 
 the power sums of a distribution to a given additive accuracy.
 Letting $\saak$ denote the number of samples needed to estimate
 $\normP\renprm$ to a given additive accuracy,~\cite{JiaoVW14} showed that for
 $\renprm<1$,
 \begin{equation}
 \label{eqn:mmn_lto_upr}
 \saak = \Theta\Paren{\frac{\absz^{1/\renprm}}{\log \absz}},
 \end{equation}
 and~\cite{JiaoVW14ii} showed that for $1<\renprm<2$,
 \begin{equation}
 \label{eqn:mmn_upr_lto}
 \saak
 \le
 \Order\Paren{\absz^{2/\renprm-1}}.
 \nonumber
 \end{equation}
 In fact, using techniques similar to multiplicative guarantees on $\normP\renprm$ we show that 
 for $\saak$ is a constant 
 independent of $\absz$ for all $\absz>1$. 

 Since $\normP\renprm>1$ for $\renprm<1$, power sum estimation to a fixed
 additive accuracy implies also a fixed multiplicative
 accuracy, and therefore
 \[
 \sak
 =
 \Theta(\smak)
 \le
 \Order(\saak),
 \]
 namely for estimation to an additive accuracy, \renyi 
 requires fewer samples than power sums.
 Similarly, $\normP\renprm<1$ for $\renprm>1$, and therefore
 \[
 \sak
 =
 \Theta(\smak)
 \ge
 \Omega(\saak),
 \]
 namely for an additive accuracy in this range,
 \renyi requires more samples than power sums.

 It follows that the power sum estimation results in~\cite{JiaoVW14,JiaoVW14ii}
 and the R\'enyi-entropy estimation results in this paper
 complement each other in several ways.
 For example, for $\renprm<1$, 
 \[
 \dbltldOmg\Paren{\absz^{1/\renprm}}
 \le
 \sak
 =
 \Theta(\smak)
 \le
 \Order(\saak)
 \le
 \Order\Paren{\frac{\absz^{1/\renprm}}{\log \absz}},
 \]
 where the first inequality follows from Theorem~\ref{t:lower_bounds_alpha_small} and the last follows
 from the upper-bound~\eqref{eqn:mmn_lto_upr} derived in~\cite{JiaoVW14} using a {\it polynomial approximation estimator}.
 Hence, for $\renprm<1$, estimating power sums to additive and
 multiplicative accuracy require a comparable number of samples.

 On the other hand, for $\renprm>1$, 
 Theorems~\ref{t:upper_bounds_arbitrary}
 and ~\ref{t:lower_bounds_arbitrary}
 imply that for non integer $\renprm$, 
 $
 \dbltldOmg\Paren{\absz}
 \le
 \smak
 \le
 \Order\Paren{\absz},
 $
 while in the Appendix we show that
 for $1<\renprm$, 
 $\saak$ is a constant.
 Hence in this range, power sum estimation to a multiplicative accuracy
 requires considerably more samples than estimation to an additive accuracy.

We now 
show that
the empirical estimator requires a constant number of samples to estimate
$\normP\renprm$ independent of $\absz$, $i.e.$, 
$\saak = \Order(1)$. In view of Lemma~\ref{l:bias-variance-PAC},
it suffices to bound the bias and variance of the empirical estimator. Concurrently with this work,
 similar results were obtained in an updated version of \cite{JiaoVW14}.

As before, we comsider Poisson sampling with $N \sim \poid {n}$ samples. The \emph{empirical} or \emph{plug-in} estimator of $\normP\renprm$
is
\begin{align*}
\estmmnemp
\ed
\sum_\smb \left(\frac{\Mltsmb}{\nsmp}\right)^\renprm.
\end{align*}
The next result shows that the bias and the variance of the empirical estimator are $\order(1)$.
\begin{lemma}
\label{lem:bias}
For an appropriately chosen constant $c>0$, the bias and the variance of the empirical 
estimator are bounded above as
\begin{align*}
\left| \estmmnemp -\normP \renprm\right|&\le 2c\max\{n^{-(\renprm -1)}, n^{-1/2}\},\\
\Var[ {\mmntest}]&\le 2c\max\{n^{-(2\renprm -1)}, n^{-1/2}\},\\
\end{align*}
for all $n\geq 1$.
\end{lemma}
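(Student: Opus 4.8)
The plan is to carry out all estimates under Poisson sampling, where the multiplicities $\Mltsmb\sim\poid{\npsmb}$ with $\npsmb\ed\nsmp\dPx$ are mutually independent, so that bias and variance both decompose over symbols. For the bias, the triangle inequality yields
\[
\left|\expectation{\estmmnemp} - \normP\renprm\right| \le \frac{1}{\nsmp^\renprm}\sum_\smb\left|\expectation{\Mltsmb^\renprm} - \npsmb^\renprm\right|,
\]
while for the variance, independence together with Jensen's inequality (exactly as in~\eqref{e:bound_var_arbitrary}) gives $\variance{\estmmnemp}=\nsmp^{-2\renprm}\sum_\smb\variance{\Mltsmb^\renprm}\le\nsmp^{-2\renprm}\sum_\smb|\expectation{\Mltsmb^{2\renprm}} - \npsmb^{2\renprm}|$. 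Both bounds therefore reduce to the single auxiliary estimate, for a real $s>1$,
\[
\frac{1}{\nsmp^s}\sum_\smb\left|\expectation{\Mltsmb^s} - \npsmb^s\right| \le C_s\,\max\{\nsmp^{-(s-1)},\nsmp^{-1/2}\},
\]
applied with $s=\renprm$ for the bias and $s=2\renprm$ for the variance; note that the exponents $\renprm-1$ and $2\renprm-1$ then match the target exactly.

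The crux is that the global constraint $\sum_\smb\npsmb=\nsmp$ must be leveraged, since a naive application of Lemma~\ref{l:bound_Poisson_moments2} produces $\sum_\smb\npsmb^{s-1/2}$, which for $1<s<3/2$ is \emph{not} bounded independently of $\absz$ (it grows polynomially in $\absz$ for a near-uniform $\dP$). When $s\ge 3/2$ there is no difficulty: Lemma~\ref{l:bound_Poisson_moments2} bounds the per-symbol term by a constant multiple of $\npsmb+\npsmb^{s-1/2}$, and since $s-1/2\ge 1$ the monotonicity of $\ell_p$-norms gives $\sum_\smb\npsmb^{s-1/2}\le(\sum_\smb\npsmb)^{s-1/2}=\nsmp^{s-1/2}$, so the two terms contribute $O(\nsmp^{-(s-1)})$ and $O(\nsmp^{-1/2})$ respectively. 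In particular the variance case $s=2\renprm\ge 2$ always falls in this regime.

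The remaining case $1<s<3/2$ is handled by splitting the alphabet into the heavy symbols $A\ed\{\smb:\npsmb>1\}$ and the light symbols $A^c$. On $A$ I would again use Lemma~\ref{l:bound_Poisson_moments2}, bounding $\sum_{\smb\in A}\npsmb\le\nsmp$ and, using the elementary inequality $\npsmb^{s-1/2}\le\npsmb$ valid whenever $\npsmb>1$ and $s-1/2<1$, also $\sum_{\smb\in A}\npsmb^{s-1/2}\le\nsmp$; this yields an $O(\nsmp^{-(s-1)})$ contribution. On $A^c$, where $\npsmb\le 1$, Lemma~\ref{l:bound_Poisson_moments} gives $\expectation{\Mltsmb^s}\le 2^{s+2}\npsmb$, and since $\npsmb^s\le\npsmb$ as well, the per-symbol difference is at most $(2^{s+2}+1)\npsmb$; summing and invoking $\sum_{\smb\in A^c}\npsmb\le\nsmp$ again gives $O(\nsmp^{-(s-1)})$. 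Adding the two contributions establishes the auxiliary estimate, and hence the lemma, with $c$ depending on $\renprm$.

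I expect the only real obstacle to be recognizing that the light symbols must be treated through Lemma~\ref{l:bound_Poisson_moments} rather than Lemma~\ref{l:bound_Poisson_moments2}: it is precisely the switch from the $\npsmb^{s-1/2}$ bound to the linear-in-$\npsmb$ bound on small-probability symbols that removes the spurious $\absz$-dependence and makes the additive-accuracy sample complexity constant.
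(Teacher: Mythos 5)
Your proposal is correct and follows essentially the same route as the paper's proof: both reduce bias and variance to bounding $\nsmp^{-s}\sum_\smb|\expectation{\Mltsmb^{s}}-\npsmb^{s}|$ for $s=\renprm$ and $s=2\renprm$, split the alphabet at $\npsmb=1$, use Lemma~\ref{l:bound_Poisson_moments2} on the heavy symbols together with $\sum_\smb\npsmb=\nsmp$, and control $\sum\npsmb^{s-1/2}$ by $\max\{\nsmp,\nsmp^{s-1/2}\}$ via exactly the same two-case argument ($s<3/2$ versus $s\ge3/2$). The only cosmetic difference is that you bound the light-symbol terms through Lemma~\ref{l:bound_Poisson_moments} where the paper invokes convexity and Lemma~\ref{lem:fall_fac}; both give the needed linear-in-$\npsmb$ bound.
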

\begin{proof}
Denoting $\npsmb = n\dPx$,
we get
the following bound on the bias for an appropriately chosen constant $c$:
\begin{align*}
\left|\estmmnemp -\normP \renprm\right|
&\le \frac{1}{n^\renprm}\sum_{\npsmb\le 1} \left|\expectation{\Mltx^\renprm} -\npsmb \right|
+\frac{1}{n^\renprm}\sum_{\npsmb> 1} \left|\expectation{\Mltx^\renprm} -\npsmb \right|\\
&\le \frac{c}{n^\renprm}\sum_{\npsmb\le 1} \npsmb
+ \frac{c}{n^\renprm}\sum_{\npsmb>1}\Paren{\npsmb+ \npsmb^{\renprm -1/2}},
\end{align*}
where the last inequality holds by Lemma~\ref{l:bound_Poisson_moments2} and Lemma~\ref{lem:fall_fac}
since $x^\renprm$ is convex in $x$. Noting $\sum_i \npsmb =\nsmp$,
we get
\[
\left|\estmmnemp -\normP \renprm\right| \le 
\frac{c}{n^{\renprm-1}} +  \frac{c}{n^\renprm}\sum_{\npsmb>1}\npsmb^{\renprm -1/2}.
\]
Similarly, proceeding as in the proof of Theorem~\ref{t:upper_bounds_arbitrary}, 
the variance of the empirical estimator is bounded as
\begin{align*}
\Var[ {\mmntest}]
&=
\frac{1}{\nsmp^{2\renprm}}
\sum_{x\in\cX}  \expectation{ \Mltx^{2\renprm}} -\EE [\Mltx^\renprm]^2 \\
&\le
\frac{1}{\nsmp^{2\renprm}}
\sum_{x\in\cX}
\left|\expectation{\Mltx^{2\renprm}} - \npsmb^{2\renprm}\right|\\
&\le 
\frac{c}{n^{2\renprm-1}} +  \frac{c}{n^{2\renprm}}\sum_{\npsmb>1}\npsmb^{2\renprm -1/2}.
\end{align*}
The proof is completed upon showing that
\[
\sum_{\npsmb>1}\npsmb^{\renprm -1/2} 
\le \max\{\nsmp, \nsmp^{\renprm -1/2}\}, \quad  \renprm>1.
\]
To that end, note that for $\renprm<3/2$
\[
\sum_{\npsmb>1}\npsmb^{\renprm -1/2} \le \sum_{\npsmb>1}\npsmb \le \nsmp, \quad \renprm < 3/2.
\]
Further, since $x^{\renprm -1/2}$ is convex for $\alpha\ge 3/2$,
the summation above is maximized when one of the
$\npsmb$'s is $\nsmp$ and the remaining equal $0$ which yields
\[
\sum_{\npsmb>1}\npsmb^{\renprm - 1/2}\le n^{\renprm-1/2}, \quad \alpha \ge 3/2,
\]
and completes the proof.
\end{proof}

\section*{Appendix B: Lower bound for sample complexity of empirical estimator}
We now derive lower bounds for the sample complexity of the empirical estimator of $\rental \dP$.

\begin{lemma}
Given $\renprm<1$ and $\esterr < c_\renprm$ for a constant $c_\renprm$ depending only on $\renprm$, the sample complexity 
$S_{\renprm}^{\estrenemp}(\absz,\esterr, \prerr)$
of the empirical estimator $\estrenemp$  is bounded below as
\[
S_{\renprm}^{\estrenemp}(\absz,\esterr, 0.9) = \Omega\left(\frac \absz {
  \esterr}\right).
\]
\end{lemma}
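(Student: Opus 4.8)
The starting point is that for $\renprm<1$ the empirical estimator is \emph{systematically} biased: since $z\mapsto z^\renprm$ is concave, Jensen's inequality gives $\expectation{(\Mltx/\nsmp)^\renprm}\le\dPx^\renprm$, so $\estmmnemp$ always underestimates $\normP\renprm$ with a fixed sign that no amount of averaging removes. I would therefore not invoke Le Cam's two--point method, but instead fix one hard instance and argue that its systematic error already exceeds the tolerance while the random fluctuation is negligible. The natural instance is the uniform distribution $U_\absz$, where $\normP\renprm=\absz^{1-\renprm}$ and, under Poisson sampling, the multiplicities are i.i.d. $\psns\lambda$ with $\lambda=\nsmp/\absz$.

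The crux is a \emph{lower} bound on the per--symbol Jensen gap $g(\lambda):=\lambda^\renprm-\expectation{M^\renprm}$ (Lemma~\ref{l:bound_Poisson_moments2} only bounds it from above). Expanding $M^\renprm$ to second order with Lagrange remainder, $M^\renprm=\lambda^\renprm+\renprm\lambda^{\renprm-1}(M-\lambda)+\tfrac12\renprm(\renprm-1)\xi^{\renprm-2}(M-\lambda)^2$ for some $\xi$ between $M$ and $\lambda$, and taking expectations so the linear term drops,
\[
g(\lambda)=\frac{\renprm(1-\renprm)}{2}\,\expectation{\xi^{\renprm-2}(M-\lambda)^2}\ \ge\ \frac{\renprm(1-\renprm)}{2}(2\lambda)^{\renprm-2}\,\expectation{(M-\lambda)^2\mathbf 1[M\le 2\lambda]}.
\]
As $\expectation{(M-\lambda)^2}=\lambda$ while the tail $\expectation{(M-\lambda)^2\mathbf 1[M> 2\lambda]}$ is $\order(\lambda)$ (Cauchy--Schwarz against $\PP(M>2\lambda)\le e^{-c\lambda}$), this gives $g(\lambda)\ge c_\renprm\lambda^{\renprm-1}$ for $\lambda\ge\lambda_0(\renprm)$; for $\lambda\le\tfrac12$ the elementary bound $\expectation{M^\renprm}\le\expectation{M}=\lambda$ (using $j^\renprm\le j$ for integers $j\ge1$) already gives a gap of order $\lambda^\renprm$, and the compact intermediate range follows from strict positivity and continuity of $g$. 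Dividing by $\lambda^\renprm$, the relative bias of $\estmmnemp$ on $U_\absz$ is at least $c_\renprm/\lambda=c_\renprm\absz/\nsmp$, which exceeds $\esterr$ whenever $\nsmp\le c_\renprm\absz/\esterr$; and since $\tfrac1{1-\renprm}>1$, a relative deficit $\ge\esterr$ on $\normP\renprm$ forces an additive deficit $\ge\esterr/(1-\renprm)>\esterr$ on $\rental{U_\absz}$.

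It remains to confirm the fluctuations cannot rescue the estimate. By $\variance{M^\renprm}\le\expectation{M^{2\renprm}}\le 2^{2\renprm+2}\max\{\lambda,\lambda^{2\renprm}\}$ (Lemma~\ref{l:bound_Poisson_moments}) and independence,
\[
\frac{\variance{\estmmnemp}}{\normP\renprm^{2}}=\frac{\absz\,\variance{M^\renprm}}{\nsmp^{2\renprm}\absz^{2-2\renprm}}\ \le\ \frac{2^{2\renprm+2}}{\absz}\,\max\{\lambda^{1-2\renprm},1\},
\]
which is $\order(\esterr^{2})$ uniformly over $1\le\nsmp\le c_\renprm\absz/\esterr$ once $\absz$ is large. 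Thus $\estrenemp$ has standard deviation $\order(\esterr)$ about a mean that lies more than $\esterr$ from $\rental{U_\absz}$, so Chebyshev's inequality yields $|\rental{U_\absz}-\estrenemp|>\esterr$ with probability at least $0.9$. Hence no $\nsmp\le c_\renprm\absz/\esterr$ meets the accuracy requirement, which is the claimed $\Omega(\absz/\esterr)$.

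The step I expect to be the real obstacle is the bias lower bound of the second paragraph: in contrast with the upper--bound proofs, one needs the Jensen gap bounded \emph{below} at its true leading order $\lambda^{\renprm-1}$, which forces one to localize $M$ near $\lambda$ so that the negative curvature $\xi^{\renprm-2}$ stays controlled and the Poisson tail does not destroy the $\Var(M)=\lambda$ contribution. Finally, I would remark that the very same bias-versus-variance scheme applied to a sharper instance---one heavy symbol of probability $1-\order(1)$ together with $\absz$ essentially invisible symbols of probability $(\esterr/\absz)^{1/\renprm}$, for which the clean small-$\lambda$ bound $\expectation{M^\renprm}\le\lambda$ suffices---raises the threshold to $\Omega\bigl((\absz/\esterr)^{1/\renprm}\bigr)$ and recovers the $\absz^{1/\renprm}$ exponent of the main theorem.
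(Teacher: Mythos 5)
You have proved the statement as it is literally printed, but you should know that the hypothesis ``$\renprm<1$'' in this lemma is a typo for ``$\renprm>1$'': the paper's own proof is explicitly an $\renprm>1$ argument (it begins ``for $\renprm>1$ the function $(\dP_i-y)^\renprm+(\dP_j+y)^\renprm$ is decreasing\dots'' and works with the estimate $\frac{1}{\renprm-1}\log\frac{1}{\estmmnemp}$), the main-text theorem attributes the bound $\Omega(\absz/\esterr)$ to $\renprm>1$, and the $\renprm<1$ regime is handled by the \emph{next} appendix lemma with the stronger bound $\Omega\bigl((\absz/\esterr)^{1/\renprm}\bigr)$. So your claim, while true, is for $\renprm<1$ strictly weaker than what the paper proves elsewhere, and it is not what the paper's proof of \emph{this} lemma establishes. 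That said, your mechanism is sound and genuinely different from the paper's. Both proofs use the uniform distribution, but the paper obtains its bias lower bound by anticoncentration: Slud's inequality shows each binomial multiplicity exceeds its mean by $c_2$ standard deviations with constant probability, McDiarmid's inequality makes this simultaneous for a linear fraction of symbols, and convexity then pushes the power-sum estimate up by a relative factor $1+\Theta(\absz/\nsmp)$, hence the entropy estimate down by $\Theta(\absz/\nsmp)$. You instead prove a pointwise lower bound on the Jensen gap, $\lambda^\renprm-\expectation{M^\renprm}\ge c_\renprm\lambda^{\renprm-1}$, via a second-order Taylor expansion localized to $\{M\le2\lambda\}$; this is correct (the Cauchy--Schwarz tail estimate and the small-$\lambda$/compactness cases are handled properly), needs no anticoncentration inequality, and adapts to $\renprm>1$ essentially verbatim by replacing concavity with convexity (for $1<\renprm<2$ the same localization works since $\xi^{\renprm-2}$ is still decreasing; for $\renprm\ge2$ localize to $\{M\ge\lambda/2\}$ instead). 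So your approach would also prove the intended $\renprm>1$ statement, and it is arguably cleaner than the paper's.

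The one genuine gap is the sampling model. The quantity $S_{\renprm}^{\estrenemp}(\absz,\esterr,0.9)$ is defined for the empirical estimator applied to exactly $\nsmp$ i.i.d.\ samples, whereas your entire bias and variance computation assumes Poissonized, mutually independent multiplicities. For a lower bound on a \emph{specific} estimator this passage is not automatic: the paper's Poisson-approximation lemmas either convert Poissonized \emph{upper} bounds, or give minimax lower bounds over all estimators; neither applies to a fixed estimator with a fixed budget. The repair is routine but should be stated: redo the two bounds for multinomial sampling---the Taylor argument uses only the mean, the variance, the fourth central moment, and an upper-tail bound of the multiplicity, all of which match the Poisson case up to constants when $\Mltsmb$ is binomial $(\nsmp,1/\absz)$---and for the variance of the sum use the fact that the multinomial multiplicities are negatively associated, so that the variance of $\sum_\smb(\Mltsmb/\nsmp)^\renprm$ is at most the sum of the individual variances. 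This is in effect what the paper does, since it works with binomial multiplicities and McDiarmid's inequality from the start.
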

\begin{proof}
We prove the lower bound for the uniform distributon over $\absz$
symbols in two steps. We first show that for any constant $c_1 > 1$ if 
$\nsmp < \absz/c_1$ then 
the additive approximation error is at least $\delta$ with probability
one, for every $\delta < \log c_1$. Then,
assuming that $\nsmp \geq \absz/c_1$, we show that the additve
approximation error is at least $\delta$ with probability greater than
$0.9$ if $\nsmp < k/\delta$.

For the first claim, we assume without loss of generality that $\nsmp
\leq \absz$, since otherwise the proof is complete. Note that for $\renprm>1$ the function $(\dP_i
- y)^\renprm + (\dP_j +y)^\renprm$ is decreasing in $y$ for all $y$
such that $(\dP_i- y)  > (\dP_j +y)$. Thus, the minimum value of 
$\sum_\smb \left(\frac{\Mltsmb}{\nsmp}\right)^\renprm$ is attained
when each $\Mltsmb$ is either $0$ or $1$. It follows that
\[
\estmmnemp
=
\sum_\smb \left(\frac{\Mltsmb}{\nsmp}\right)^\renprm 
\geq \frac{1}{\nsmp^{\renprm-1}},
\]
which is the same as
\[
\rental {\dP} - \frac{1}{\renprm-1} \log \frac{1}{\estmmnemp} 
\geq  \log \frac{\absz}{\nsmp}.
\]
Hence, for any $c_1>1$ and $\nsmp < \absz/c_1$ and any $0\leq \delta \leq \log c_1$, the
additive approximation error is more than $\delta$ with probability one.  

Moving to the second claim, suppose now $\nsmp > \absz/c_1$. We first show that with high probability, the
multiplicities of a linear fraction of $\absz$ symbols should be at
least a factor of standard deviaton
higher than the mean.  Specifically, let
\[
A = \sum_{\smb} \indicator \left(\Mltsmb \geq \frac{\nsmp}{\absz} +
c_2\sqrt{\frac{\nsmp}{\absz}\left(1-\frac{1}{\absz} \right)} \right).
\]
Then,
\begin{align*}
\expectation{A} &= \sum_{\smb}\expectation{\indicator\left(\Mltsmb \geq \frac{\nsmp}{\absz} +
c_2\sqrt{\frac{\nsmp}{\absz} \left( 1-\frac{1}{\absz} \right)} \right)}
\\
&= k \cdot p\left(\Mltsmb \geq  \frac{\nsmp}{\absz} +
c_2\sqrt{\frac{\nsmp}{\absz} \left( 1-\frac{1}{\absz} \right)} \right) 
\\
&\geq  k Q(c_2),
\end{align*}
where $Q$ denotes the $Q$-function, $i.e.$, the tail of the standard
normal random variable, and the final inequality uses Slud's inequality \cite[Theorem 2.1]{Slud77}. 

Note that $A$ is a function of $\nsmp$ i.i.d. random
variables $X_1,X_2,\ldots,X_\nsmp$, and changing any one $X_i$ changes $A$ by at most $2$. Hence, by McDiarmid's
inequality,  
\[
\Pr(A \geq \expectation{A} - \sqrt{8 \nsmp} ) \geq 1- e^{-4} \geq 0.9.
\]
Therefore, for all $k$ sufficiently large (depending on $\delta$) and
denoting $c= Q(c_2)/2$, at least $c\absz$ symbols occur more than   
$\frac{\nsmp}{\absz} + c_2\sqrt{\frac{\nsmp}{\absz}}$ 
times with probability greater than $0.9$.
Using the fact that $(\dP_i- y)^\renprm + (\dP_j+y)^\renprm$ is
decreasing if $(\dP_i- y)  > (\dP_j +y)$ once more, we get 
\begin{align*} 
\sum_{\smb\in\cX} \frac{\Mltsmb^\renprm}{\nsmp^\renprm}
& = \sum_{\smb: \Mltsmb \geq t} \frac{\Mltsmb^\renprm}{\nsmp^\renprm} + \sum_{\smb: \Mltsmb < t} \frac{\Mltsmb^\renprm}{\nsmp^\renprm}\\
&\geq c\absz  \left(\frac{1}{\absz} +
c_2\sqrt{\frac{1}{\nsmp\absz} } \right)^\renprm + (1-c)\absz \left(\frac{1}{\absz} -
\frac{cc_2}{1-c}\sqrt{\frac{1}{\nsmp\absz}} \right)^\renprm 
\\
&=\frac{1}{\absz^{\renprm-1}} \left[ c\left(1 +
c_2\sqrt{\frac{\absz}{\nsmp}} \right)^\renprm + (1-c)\left(1 -
\frac{cc_2}{1-c}\sqrt{\frac{\absz}{\nsmp}} \right)^\renprm\right]
\\
&\geq 
\frac{1}{\absz^{\renprm-1}} \left[c\left(1 +
c_2\sqrt{\frac{\absz}{\nsmp}} \right)^\renprm + (1-c)\left(1 -
\frac{\renprm cc_2}{1-c}\sqrt{\frac{\absz}{\nsmp}} \right)
\right]
\\
&\geq \frac{1}{\absz^{\renprm-1}} \left[c\left(1 +
\renprm c_2\sqrt{\frac{\absz}{\nsmp} } + c_4 \frac{\absz}{\nsmp}\right) + (1-c)\left(1 -
\frac{\renprm cc_2}{1-c}\sqrt{\frac{\absz}{\nsmp}} \right)\right]
\\
&= \frac{1}{\absz^{\renprm-1}} \left(1+ cc_4\frac{\absz}{\nsmp}\right)
\end{align*}
where the second inequality is by Bernoulli's inequality and the third
inequality holds for every $c_4\leq \renprm(\renprm-1)(c_2\sqrt{c_1})^{\renprm-2}/2$.
Therefore, with probability $\geq 0.9$,
\[
\rental \dP - \frac{1}{\renprm-1} \log \frac{1}{\estmmnemp} \geq \frac{1}{\renprm-1} \log  \left(1 + cc_4\frac{k}{n} \right),
\]
which yields the desired bound.
\end{proof}

\begin{lemma}
Given $\renprm<1$ and $\esterr < c_\renprm$ for a constant $c_\renprm$ depending only on $\renprm$, the sample complexity 
$S_{\renprm}^{\estrenemp}(\absz,\esterr, \prerr)$
of the empirical estimator $\estrenemp$  is bounded as
\[
S_{\renprm}^{\estrenemp}(\absz,\esterr, 0.9) =
\Omega\left(\frac{\absz^{1/\renprm}}{\esterr^{1/\renprm}}\right).
\]
\end{lemma}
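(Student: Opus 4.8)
The plan is to avoid Le Cam's two-point method and instead exhibit a single worst-case distribution on which the empirical estimator is provably biased. I will take a \emph{heavy-light} distribution $\dP$ on $[\absz]$: one heavy symbol with $\dP_1 = 1-s$ and $\absz-1$ light symbols with $\dP_j = s/(\absz-1)$ for $j=2,\ldots,\absz$, where the total light mass is tuned to
\[
s = (C\esterr)^{1/\renprm}\,(\absz-1)^{1-1/\renprm},
\]
for a constant $C=C(\renprm)$ fixed at the end (note $s<1$ for $\absz$ large since $1-1/\renprm<0$). The point is that the light mass is spread so thinly that the light symbols remain \emph{undersampled} up to $\nsmp\approx(\absz/\esterr)^{1/\renprm}$, while their total contribution to the power sum, $\sum_{j\ge 2}\dP_j^\renprm=(\absz-1)^{1-\renprm}s^\renprm=C\esterr$, is an $\esterr$-fraction of $\normP\renprm=(1-s)^\renprm+C\esterr\approx 1$, which is anchored by the well-sampled heavy symbol. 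For $\renprm<1$, the empirical estimate $\expectation{\Mltsmb^\renprm}$ of an undersampled symbol collapses toward the much smaller $\npsmb$, so the estimator systematically discards this $\Theta(\esterr)$ fraction of $\normP\renprm$.

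I will work under Poisson sampling, so that $\Mltsmb\sim\psns{\npsmb}$ independently with $\npsmb=\nsmp\dPx$, reducing to fixed sample size at the end by the standard argument. For $\nsmp\le \tfrac12 C^{-1/\renprm}\big((\absz-1)/\esterr\big)^{1/\renprm}$ every light symbol has $\npsmb=\nsmp s/(\absz-1)\le 1/2$. The first key step is a per-symbol \emph{lower} bound on the bias: since $k^\renprm\le k$ for $k\ge1$ and $\renprm<1$ gives $\expectation{\Mltsmb^\renprm}\le\npsmb$, I obtain
\[
\npsmb^\renprm-\expectation{\Mltsmb^\renprm}\ge \npsmb^\renprm\Paren{1-\npsmb^{1-\renprm}}\ge c_0\,\npsmb^\renprm,\qquad c_0:=1-2^{-(1-\renprm)}>0.
\]
Dividing by $\nsmp^\renprm$ and summing over the light symbols, while using Lemma~\ref{l:bound_Poisson_moments2} to dismiss the heavy symbol's $\Order(1/\nsmp)=\order(\esterr)$ relative bias, yields $\expectation{\estmmnemp}\le\Paren{1-\tfrac12 c_0 C\esterr}\normP\renprm$ via $\normP\renprm\le 1+C\esterr$.

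Next I will show $\estmmnemp$ concentrates below this level. By independence, $\variance{\estmmnemp}=\nsmp^{-2\renprm}\sum_\smb\variance{\Mltsmb^\renprm}\le \nsmp^{-2\renprm}\sum_\smb\expectation{\Mltsmb^{2\renprm}}$, and Lemma~\ref{l:bound_Poisson_moments} bounds each light term by $2^{2\renprm+2}\max\{\npsmb,\npsmb^{2\renprm}\}$; a short computation (using $\npsmb\le\tfrac12$ and $\sum_{j\ge2}\npsmb=\nsmp s$) shows this is $\Order(\esterr^2/\absz)$ near the threshold and $\order(\esterr^2)$ throughout the range for $\absz$ large, the heavy term contributing only $\Order(1/\nsmp)$ relative variance. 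Since the standard deviation is $\order(\esterr)$, Chebyshev's inequality places $\estmmnemp\le(1-\tfrac14 c_0C\esterr)\normP\renprm$ with probability $1-\order(1)\ge 0.9$. On this event,
\[
\rental\dP-\estrenemp=\frac{1}{1-\renprm}\log\frac{\normP\renprm}{\estmmnemp}\ge\frac{c_0C\esterr}{4(1-\renprm)}>\esterr,
\]
once $C$ is chosen with $c_0C\ge 4(1-\renprm)$. Thus the empirical estimator fails with probability at least $0.9$ whenever $\nsmp\le\tfrac12 C^{-1/\renprm}\big((\absz-1)/\esterr\big)^{1/\renprm}$, which is the claimed $\Omega(\absz^{1/\renprm}/\esterr^{1/\renprm})$ lower bound.

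The main obstacle is the two-sided control that separates a lower bound from the earlier upper bounds. Lemma~\ref{l:bound_Poisson_moments2} only \emph{upper} bounds the per-symbol bias, so I must establish the matching lower bound $\expectation{\Mltsmb^\renprm}\le\npsmb$ (valid for $\renprm<1$) to certify that the underestimate is genuine, and then verify that the variance stays $\order(\esterr^2)$ \emph{uniformly} over all $\nsmp$ below the threshold—smaller $\nsmp$ sharpens the relative bias but can inflate the relative variance—so that the systematic bias is never washed out by fluctuations. By comparison, the Poisson-to-fixed-length reduction and the choice of $C$ balancing the bias constant against the threshold constant are routine.
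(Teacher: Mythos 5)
Your construction and the bias computation are sound, and they take a genuinely different (and more elementary) route than the paper's: where you obtain the systematic underestimate from the pointwise inequality $\expectation{\Mltsmb^\renprm}\le\expectation{\Mltsmb}=\npsmb$ applied to every light symbol with $\npsmb\le 1/2$, the paper conditions on the total number $R$ of light occurrences, treats the light multiplicities as a multinomial, and invokes Slud/Bohman-type anti-concentration together with McDiarmid's inequality to show that a constant fraction of symbols sit a standard deviation below their mean, extracting the deficit from the strict concavity of $z^\renprm$. Your per-symbol bound is cleaner and avoids that machinery entirely, and your hard distribution does not depend on $\nsmp$, unlike the paper's.

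There is, however, a genuine gap in the claimed uniformity over $\nsmp$. You assert that the heavy symbol contributes "only $\Order(1/\nsmp)$ relative variance" and conclude that $\variance{\estmmnemp}=\order(\esterr^2)\,\normP\renprm^2$ throughout the range; but $\Order(1/\nsmp)$ is not $\order(\esterr^2)$ unless $\nsmp\gg\esterr^{-2}$. For $\Mlt_1\sim\psns{\nsmp(1-s)}$ the relative standard deviation of $(\Mlt_1/\nsmp)^\renprm$ is genuinely $\Theta(1/\sqrt{\nsmp})$, which dominates $\esterr$ whenever $\nsmp\lesssim\esterr^{-2}$, so Chebyshev cannot place $\estmmnemp$ within $\tfrac14 c_0 C\esterr\,\normP\renprm$ of its mean with probability $0.9$ there. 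Your final failure claim is therefore unproven for $1\le\nsmp\lesssim\esterr^{-2}$, a nonempty subrange of the threshold since $(\absz/\esterr)^{1/\renprm}\gg\esterr^{-2}$ for large $\absz$; and because the sample complexity is defined as a minimum over $n$, the lower bound requires exhibiting failure at \emph{every} $n$ below the threshold. The paper's proof is structured in two stages precisely to cover this: an $\esterr$-independent first stage forces $\nsmp\gtrsim\absz^{1/\renprm}$ before the $\esterr$-dependent stage is run. Your argument needs an analogous patch --- for instance, a separate anti-concentration step showing that for $\nsmp\lesssim\esterr^{-2}$ the $\Theta(1/\sqrt{\nsmp})$ fluctuation of the heavy term itself forces $|\rental\dP-\estrenemp|>\esterr$ with constant probability; the bias argument alone does not cover that regime.
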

\begin{proof}
We proceed as in the proof of the previous lemma. However, instead of
using the uniform distribution, we use a distribution which has one
``heavy element'' and is uniform conditioned on the occurance of the
remainder. The key observation is that there will be roughly
$\nsmp^\renprm$ occurances of the ``light elements''. Thus, when we
account for the error in the estimation of the contribution of light
elements to the power sum, we can replace $\nsmp$ with
$\nsmp^{1/\renprm}$ in our analysis of the previous lemma, which
yields the required bound for sample complexity.

Specifically, consider a distribution with one heavy element $0$ such that 
\[
\dP_0 = 1- \frac{\esterr}{\nsmp^{1-\renprm}},\,\, \text{ and }\,\,
\dP_i= \frac{\delta}{\absz \nsmp^{1-\renprm}}, \quad 1\leq i \leq
\absz.
\]
Thus,
\begin{align}
\normP \renprm = \left(1- \frac{\esterr}{\nsmp^{1-\renprm}}\right)^\renprm + \esterr^\renprm\left(\frac{\absz}{\nsmp^{\renprm}}\right)^{1-\renprm}.
\label{e:actual_power_sum}
\end{align}
We begin by analyzing the estimate of the second term in power
sum, namely
\[
\sum_{i \in [\absz]} \left(\frac{\Nsmp_i}{n}\right)^\renprm.
\]
Let $R = \sum_{i \in [\absz]} \Nsmp_i$ be the total number of
occurances of light elements. Since $R$ is a binomial $(n,
\delta\nsmp^{\renprm-1})$ random variable, for every constant $c>0$
\[
\bPr{ 1-c<\frac{R}{\delta n^\renprm} <1+c} \geq 1 - \frac{1}{c^2n}.
\]
In the remainder of the proof, we
shall assume that this large probability event holds.

As in the proof of the previous lemma, we first prove a $\esterr$
independent lower bound for sample complexity. 
To that end, we fix $\esterr =1$ in the definition of $\dP$. 
Assuming
$(1+c)\nsmp^\renprm \leq \absz$, which implies $R\leq \absz$,
and
using the fact that $(\dP_i- y)^\renprm - (\dP_j
+y)^\renprm$ is increasing in $y$ if $(\dP_i- y)  > (\dP_j +y)$, we get
\begin{align*}
\estmmnemp &\leq 1 + \left(\frac{R}{n}\right)^\renprm \sum_{i\in [k]} \left(\frac{\Nsmp_i}{R}\right)^\renprm
\\
&\leq 1 +
\frac{(1+c)^\renprm}{\nsmp^{\renprm(1-\renprm)}}
\sum_{i\in [k]} \left(\frac{\Nsmp_i}{R}\right)^\renprm 
\\
&\leq 1 + \frac{(1+c)^\renprm}{\nsmp^{\renprm(1-\renprm)}}
R^{1-\renprm}
\\
&\leq
3,
\end{align*}
where the last inequality uses $R\leq (1+c)\nsmp^\renprm \leq 2\nsmp^\renprm$.
Thus, the empirical estimate is at most $3$ with probability close to
$1$ when $\absz$ (and therefore $\nsmp$) large. It follows from
\eqref{e:actual_power_sum}  that
\begin{align*}
\rental \dP - \frac{1}{1-\renprm} \log {\estmmnemp} 
&\geq  \log \frac{k}{3n^\renprm} .
\end{align*}
Therefore, for all $c_1>1$, $\esterr < \log 3c_1$ and $\absz$
sufficiently large, 
at least $\left(\absz/c_1\right)^{1/\renprm}$ samples are needed to 
get a $\esterr$-additive approximation of $\rental \dP$ with
probability of error less than $1-1/({c^2n})$. Note that we only
needed to assume $R\leq (10/9)\nsmp^\renprm$, an event with probability
greater than $0.9$, to get the contradiction above. Thus, we may
assume that $\nsmp \geq \left(\absz/c_1\right)^{1/\renprm}$. Under
this assumption, for $\absz$ sufficiently large, $\nsmp$ is
sufficiently large so that $(1-c)\nsmp^\renprm \leq R\leq
(1+c)\nsmp^\renprm$ holds with probability arbitrarily close to $1$.

Next, assuming that $\nsmp \geq \left(\absz/c_1\right)^{1/\renprm}$, 
we obtain a $\esterr$-dependent lower bound for sample complexity of the empirical estimator. 
We use the $\dP$ mentioned above with a general $\esterr$ and assume that the large probability event 
\begin{align}
(1-c) \leq \frac R{\esterr\nsmp^\renprm} \leq (1+c)
\label{e:good_event}
\end{align}
holds. Note that conditioned on each value of $R$,
the random variables $(\Nsmp_i, i\in [k])$ have a multinomial
distribution with uniform probabilities, $i.e.$, these random
variables behave as if we drew $R$ i.i.d. samples from a uniform
distribution on $[k]$ elements. Thus, we can follow the proof of the
previous lemma {\it mutatis mutandis}. We now define $A$ as 
\[
A = \sum_{\smb} \indicator \left(\Mltsmb \leq \frac{\nsmp}{\absz} -
c_2\sqrt{\frac{\nsmp}{\absz}\left(1-\frac{1}{\absz} \right)} \right).
\]
and satisfies 
Then,
\begin{align*}
\expectation{A} &= \sum_{\smb}\expectation{\indicator\left(\Mltsmb
  \leq \frac{\nsmp}{\absz} - c_2\sqrt{\frac{\nsmp}{\absz} \left( 1-\frac{1}{\absz} \right)} \right)}
\\
&= k \cdot p\left(\Mltsmb \leq  \frac{\nsmp}{\absz} -
c_2\sqrt{\frac{\nsmp}{\absz} \left( 1-\frac{1}{\absz} \right)} \right).
\end{align*}
To lower bound $p\left(\Mltsmb \leq  \frac{\nsmp}{\absz} -
c_2\sqrt{\frac{\nsmp}{\absz} \left( 1-\frac{1}{\absz} \right)}
\right)$ Slud's inequality is no longer available (since it may not hold
for ${\tt Bin}(n, p)$ with $p> 1/2$ and that is the regime of interest for the
lower tail probability bounds needed here). Instead we take recourse to a
combination of Bohman's inequality and Anderson-Samuel inequality, as
suggested in \cite[Eqns. (i) and (ii)]{Slud77}. It can be verified
that the condition for \cite[Eqns. (ii)]{Slud77} holds, and therefore,
\[
p\left(\Mltsmb \leq  \frac{\nsmp}{\absz} -
c_2\sqrt{\frac{\nsmp}{\absz} \left( 1-\frac{1}{\absz} \right)} \right)
\geq Q(c_2).
\]
Continuing as in the proof of the previous lemma, we get that the following holds with 
conditional probability greater than $0.9$ given each value of $R$ satisfying \eqref{e:good_event}:
\begin{align*}
\sum_{i\in [k]} \left(\frac{\Mltsmb}{R}\right)^\renprm
&\leq \absz^{1-\renprm}\left( 1 - c_3 \frac k R\right)
\\
&\leq 
\absz^{1-\renprm}\left( 1 - c_4 \frac k {\esterr\nsmp^\renprm}\right),
\end{align*}
where $c_3$ is a sufficiently small constant such that $(1+x)^\renprm
\leq 1+\renprm x  - c_3x^2$ for all $x\geq 0$ and $c_4 = c_3/(1+c)$. Thus, 
\begin{align*}
\estmmnemp &\leq 1 + \left(\frac{R}{\nsmp}\right)^\renprm
\sum_{i\in [k]} \left(\frac{\Nsmp_i}{R}\right)^\renprm
\\
&\leq 1 + \left(\frac{R}{\nsmp}\right)^\renprm
\absz^{1-\renprm}\left( 1 - c_4 \frac k {\esterr\nsmp^\renprm}\right)
\\
&\leq 1 + \left(1+c\right)^\renprm \esterr^\renprm
\left(\frac{\absz}{\nsmp^\renprm}\right)^{1-\renprm}\left( 1 - c_4 \frac k {\esterr\nsmp^\renprm}\right).
\end{align*}
Denoting $y = (\absz/\nsmp^\renprm)$ and choosing $c_1$ and $c$ small
enough such that $\estmmnemp \leq 2$, for all sufficiently large
$\nsmp$ we get from \eqref{e:actual_power_sum} that 
\begin{align*}
\frac{\normP \renprm}{\estmmnemp} &\geq \frac{1 - \delta +
  y^{1-\renprm}}{1+ \esterr^{\renprm}(1+c)^\renprm y^{1-\renprm} - (1+c)^\renprm c_4 \esterr^{\renprm-1}y^{2-\renprm}}
\\
&\geq \frac{1 - \delta +
  y^{1-\renprm}}{1+ y^{1-\renprm} - \esterr^{\renprm-1}y^{2-\renprm}}
\\
&\geq 1- \frac{\delta}2 + \frac{\esterr^{\renprm -1}y^{2-\renprm}}2,
\end{align*}
where the second inequality uses the fact that
$\esterr^{\renprm}(1+c)^\renprm y^{1-\renprm} - (1+c)^\renprm c_4
\esterr^{\renprm-1}y^{2-\renprm}$ is negative, $c_4>1$ and
$\esterr<1$. Therefore, $\frac{\normP \renprm}{\estmmnemp}\geq
1+\esterr$ if $y^{2-\renprm}\geq 3\esterr^{2-\renprm}$, which
completes the proof.  
\end{proof}

\end{document}